\newtheorem{theorem}{Theorem}
\newtheorem{lemma}{Lemma}
\newtheorem{corollary}{Corollary}
\newtheorem{definition}{Definition}
\newtheorem{example}{Example}
 \numberwithin{equation}{section}
\newcommand{\btheta}{{\boldsymbol{\theta}}}
\newcommand{\wtheta}{{\boldsymbol{\widetilde\btheta}}}
\newcommand{\X}{{\boldsymbol{X}}}
\newcommand{\Y}{{\boldsymbol{Y}}}
\newcommand{\Z}{{\boldsymbol{Z}}}
\newcommand{\be}{\begin{equation}}
\newcommand{\ee}{\end{equation}}
\newcommand{\ba}{\begin{eqnarray}}
\newcommand{\ea}{\end{eqnarray}}
\newcommand{\bee}{\begin{equation*}}
\newcommand{\eee}{\end{equation*}}
\newcommand{\baa}{\begin{eqnarray*}}
\newcommand{\eaa}{\end{eqnarray*}}
\newcommand{\nn}{\nonumber\\}
\newcommand{\bo}[1]{\boldsymbol{#1}}
\begin{document}

\title{An Optimal Test for the Additive Model with Discrete or Categorical Predictors}
\author{Abhijit Mandal \\
Department of Mathematics, Wayne State University\\
Detroit, MI 48202,
U.S.A.
}




\date{}
\maketitle

\begin{abstract}
In multivariate nonparametric regression the additive models are very useful when a suitable parametric model is difficult to find. The backfitting algorithm is a powerful tool to estimate the additive components. However, due to complexity of the estimators, the asymptotic $p$-value of the associated test is difficult to calculate without a Monte Carlo simulation. Moreover, the conventional tests assume that the predictor variables are strictly continuous. In this paper, a new test is introduced for the additive components with discrete or categorical predictors, where the model may contain continuous covariates.  This method is also applied to the semiparametric regression to test the goodness-of-fit of the model. These tests are asymptotically optimal in terms of the rate of convergence, as they can detect a specific class of contiguous alternatives at a rate of $n^{-1/2}$.  An extensive simulation study is presented to support the theoretical results derived in this paper. Finally, the method is applied to a real data to model the diamond price based on its quality attributes and physical measurements.

\end{abstract}

\bigskip
\noindent
{\textbf{AMS 2010 subject classification}}\textbf{:} Primary 62G10; Secondary 62J12, 62G20.

\noindent{\textbf{Keywords}}:  Additive model; Categorical data analysis; Backfitting algorithm; Generalized likelihood ratio test; Semiparametric model; Local polynomial regression.

%
%




\section{Introduction}
The additive model is a widely used multivariate smoothing technique. It was originally suggested by \cite{MR650892} and popularized due to extensive discussion in \cite{MR1082147}.  It models a random sample $\{(Y_i, \X_i) : i=1,2, \cdots, n\}$ by
\be
Y_i = \alpha + \sum_{p=1}^P m_p(X_{pi}) + \epsilon_i,
\label{model00}
\ee
where the random error $\epsilon_i$ has mean zero, constant variance $\sigma^2$, and the additive component $m_p$ is an unknown smooth function for $p=1,2, \cdots, P$. \cite{MR790566,MR840516} have shown that the additive model reduces a full $P$-dimensional nonparametric regression effectively to a one-dimensional problem by fitting the model with the same asymptotic efficiency, i.e., an optimal convergence rate of $n^{-2/5}$ for twice continuously differentiable functions. So, it  has the very desirable property of reducing the ``curse of dimensionality''  in a satisfactory manner. In this paper, the additive function is estimated using the backfitting algorithm proposed by \cite{MR994249}. \cite{MR1429922, MR1701398} and \cite{MR1763322} studied asymptotic properties of the backfitting estimators.  In the literature, there are several other algorithms, such as the marginal integration estimation method  \citep{MR1310230}, the estimating equation method  \citep{MR1742496} 
and the Bayesian backfitting algorithm  \citep{MR1820768}, among others.

To our knowledge, there are relatively limited theoretical results  on the testing problem for the additive models where discrete or categorical (possibly mixed with continuous) explanatory variables are considered. \cite{MR1891823} and \cite{MR1985882} considered marginal integration estimators to construct tests for testing the additive components with continuous variables. The asymptotic critical values of these tests are difficult to obtain due to the complicated expressions for the bias and the variance of the test
statistic. Moreover, the authors observed that the asymptotic accuracy of their result is  limited for small and moderate sample sizes. In the same setup, \cite{MR1833962} and \cite{MR2201017} proposed the generalized likelihood ratio (GLR) test for testing the significance of additive components using backfitting estimators. The idea is based on comparison of pseudo-likelihood functions under the null and the alternative hypotheses, which leads to the log-ratio of the variance estimators under the null and the alternative. Similar to the maximum likelihood ratio tests in parametric models, the GLR test has an important fundamental property that the asymptotic null distribution of the test is independent of nuisance parameters and functions. This property is referred to as the Wilks phenomenon.  The GLR test is asymptotically distribution-free; and it is asymptotically optimal in terms of convergence of the nonparametric hypothesis testing problem (see \citealp{MR1257978} and \citealp{MR1425962}). 
However, the authors mentioned that the GLR test may not be accurate as the test statistic contains an unknown bias term. So, a Monte Carlo simulation or a bootstrap technique is performed to calculate the $p$-value of the test. This somehow restricts the method for being widely applicable among the general practitioners. 


In this paper, we  propose a GLR test for the additive components having discrete or categorical valued predictors, while the model may contain continuous valued covariates. For categorical predictors this test may be regarded as the generalized analysis of covariance (ANCOVA), where covariates are modeled by nonparametric functions and the normality assumption on the error term is not required. In this case, the predictors may be referred as treatment or block of the design of experiment.

The rest of the paper is organized as follows. We gave an overview of the nonparametric additive model and the semiparametric additive model in Sections \ref{sec:nonpara} and \ref{sec:semiparametric}, respectively. In Section \ref{sec:GLR_main}, we introduced the GLR test and presented the  theoretical properties. An extensive simulation studies are performed in Section \ref{sec:simulation} to explore the behavior of the proposed test. In Section \ref{sec:real}, we have applied our method to analyze a real data containing diamond price, and we proposed an appropriate model based on quality attributes and physical measurements.  Section \ref{sec:concluding} has some concluding remarks.  The assumptions of the theorems are given in Appendix A.  Appendix B presents a brief description of the backfitting estimator. The  proofs of the theorems are given in Appendix C.

\section{The Nonparametric Additive Model} \label{sec:nonpara}
Let us consider a one dimensional response variable $Y$, a $P$-dimensional predictor $\X = (X_1, X_2, \cdots, X_P)^T$ and an additional $Q$-dimensional covariate $\Z = (Z_1, Z_2, \cdots, Z_Q)^T$. We assume that $\X$ contains only discrete or categorical variables, but $\Z$ may contain all type of variables -- categorical, discrete or continuous. If $X_p$ is a discrete valued random variable, then  $k_p$ denotes the number of  distinct values,  $x_{p1}, x_{p2}, \cdots, x_{pk_p}$, where $k_p<\infty$ for all $p=1, 2, \cdots, P$. If $X_p$ is a categorical variable, then $x_{p1}, x_{p2}, \cdots, x_{pk_p}$ are different levels of the variable.  Let $(Y_1, \X_1, \Z_1), \cdots, (Y_n, \X_n, \Z_n)$ be a random sample of size $n$ from $(Y, \X, \Z)$. The nonparametric additive model is given by
\be
Y_i = \alpha + \sum_{p=1}^P m_p(X_{pi}) + \sum_{q=1}^Q m_{P+q}(Z_{qi}) + \epsilon_i,
\label{model0}
\ee
where the random error $\epsilon_i$ has mean zero and a constant variance $\sigma^2$. To ensure identifiability of  components of the additive model, we set $E(m_p(X_{pi}))=0$ for all $p=1, 2, \cdots, P$, and $E(m_{P+q}(Z_{qi}))=0$ for all $q=1, 2, \cdots, Q$. The intercept parameter $\alpha = E(Y_i)$ is generally estimated by $\hat\alpha = \bar{Y} = \sum_i Y_i/n$. The backfitting estimator is used to estimate the nonparametric functions $m_d(\cdot)$ for $d=1,2,\cdots, P+Q$. For the ease of readability,  a discussion on the backfitting estimators is given in Appendix B. 
 We have divided regressors into two groups -- predictor and covariate, as we  construct a test of significance for predictors only, whereas their effect is  adjusted by  ``nuisance'' covariates. In other words, if we are interested to test the effect of a subset of regressors in the additive model, we name those regressors as predictors and remaining regressors as covariates. We require all predictors to be discrete or categorical variables, but there is no restriction on covariates.  

\section{The Semiparametric Additive Model} \label{sec:semiparametric}
The semiparametric additive model (SAM) is the combination of a parametric model and a nonparametric additive model. Here, some of the additive components are modeled parametrically while the remaining ones are unspecified and are estimated nonparametrically. First, we model predictors parametrically and covariates nonparametrically, then a generalized model is considered. In general, the predictors are assumed to be discrete valued random variables. However, if the predictors are ordinal categorical variables, their order or rank may also be modeled parametrically. Let us consider the following SAM model: 
\be
Y_i = \alpha + \sum_{p=1}^P m_{p,\btheta_p}(X_{pi}) + \sum_{q=1}^Q m_{P+q}(Z_{qi}) + \epsilon_i,
\label{sam}
\ee
where $\mathcal{M}_{p,\Theta_p} = \{m_{p,\btheta_p}(X_{pi}), \btheta_p \in \Theta_p\}$ is a family of parametric functions for $p=1,2, \cdots,P$. We assume that $m_{p,\btheta_p}(\cdot)$ is completely known except for the value of the parameter $\btheta_p$, $p=1,2, \cdots,P$. \cite{opsomer1999root} and \cite{MR2396026} have studied this model when the parametric models are linear functions. One might be interested to build a SAM when the main interest of study is to precisely quantify the effect of the predictors $X_1, X_2, \cdots, X_P$ on the dependent variable $Y$, but the relationship is observed in the presence of ``nuisance'' covariates $Z_1, Z_2, \cdots, Z_Q$. The use of the parametric forms for  predictors, if properly specified, allows us to make an easily interpretable inference about their effect on $Y$. On the other hand,  modeling  covariates nonparametrically, one may avoid  potential introduction of bias in the estimated relationship between  predictors and $Y$. Another possible situation when a SAM would be useful, if someone is fairly confident about the shape of the relationship between  predictors and $Y$, but not about that of the other covariates. It can be shown that by modeling some predictors using appropriate parametric functions, the risk of over-fitting the model is reduced by decreasing the overall degrees of freedom of the test. 

To ensure the identifiability of the model, we  assume that the expectation of the parametric term is zero, i.e. $E[m_{p,\btheta_p}(X_{p})] = 0$ for all $p=1,2, \cdots, P$, and $E(m_{P+q}(Z_{qi}))=0$ for all $q=1, 2, \cdots, Q$. We   consider the case where $m_{p,\btheta_p}(\cdot)$ is a polynomial for all $p=1,2, \cdots, P$.
As $X_p$ takes $k_p$  values, one needs at most $(k_p-1)$ parameters to completely specify $m_{p,\btheta_p}(\cdot)$. So, we assume that $m_{p,\btheta_p}(\cdot)$ is a polynomial of degree $r_p$, where  $0<r_p<k_p-1$, $p=1,2,\cdots,P$. Therefore, with a slight abuse of notation, we write
\be
m_{p,\btheta_p}(X_{p}) = \alpha_p + \sum_{s=1}^{r_p}\theta_{ps} X_{p}^s, 
\label{mtheta}
\ee
where $\btheta_p=( \theta_{p1}, \theta_{p2}, \cdots, \theta_{pr_p})^T$,  $p=1,2,\cdots,P$. Note that $\alpha_p$ is not an independent parameter, it just makes $m_{p,\btheta_p}$ centered at zero. Let us define $\btheta = (\alpha^*, \btheta_1^T, \btheta_2^T, \cdots, \btheta_P^T)^T$, where $\alpha^* = \sum_p \alpha_p$. 
For $a<b$, we  define
\be
{}_{a}^{b}\X_{(p)} = 
\begin{bmatrix}
	X_{p1}^a & X_{p1}^{a+1} & \dots  & X_{p1}^b \\
	X_{p2}^a & X_{p2}^{a+1} & \dots  & X_{p2}^b \\
	\vdots & \vdots  & \ddots & \vdots \\
	X_{pn}^a & X_{pn}^{a+1} & \dots  & X_{pn}^b \\
\end{bmatrix},
\label{xp}
\ee
and if $a=b$ then ${}_{a}^{b}\X_{(p)}$ is a vector containing the first column of Equation (\ref{xp}). Let $\X^*=(\bo{1}_n, {}_{1}^{r_1}\X_{(1)}, $ ${}_{1}^{r_2}\X_{(2)}, \cdots,{}_{1}^{r_P}\X_{(P)})$. 
Then, following \cite{MR970977}, the estimates of the additive components are derived from the backfitting algorithm as the solution of the following equations: 
\ba
\wtheta &=&  \left(\X^{*T}  \X^* \right)^{-1} \X^{*T} \left(\Y^*  - \sum_{q=1}^Q \widetilde{\bo{m}}_{P+d} \right), \nn
\widetilde{\bo{m}}_{P+q} &=& \bo{S}_{P+q}^* \left(\Y^* -  \X^* \wtheta - \sum_{d\neq q} \widetilde{\bo{m}}_{P+d} \right), \mbox{ for } q=1,2,\cdots,Q,
\ea
provided $(\X^{*T}  \X^* )^{-1}$ exists. Here, $\bo{S}_{P+q}^*$  is the centered smoothing matrix $\bo{S}_d^*$ for $d=P+q$ as defined after Equation (\ref{normal_eqn}) in Appendix B. Suppose $\bo{W}_{P+q}$ is the additive smoother matrix of $\bo{m}_{P+q}$, so that the backfitting estimate of $\bo{m}_{P+q}$ is $\widetilde{\bo{m}}_{P+q} = \bo{W}_{P+q} (\Y^* -  \X^* \wtheta)$ for $q=1,2,\cdots,Q$. Let us define $\bo{W}_{[Z]}= \sum_{q=1}^Q \bo{W}_{P+q}$ and $\widetilde{\bo{m}}_{[Z]}=\sum_{q=1}^Q \widetilde{\bo{m}}_{P+q}$. Then, the above normal equations are solved non-iteratively as
\ba
\wtheta &=&  \left(\X^{*T} \left(\bo{I}_n - \bo{W}_{[Z]} \right) \X^* \right)^{-1} \X^{*T} \left(\bo{I}_n - \bo{W}_{[Z]}\right) \Y^* ,
\nn
\widetilde{\bo{m}}_{[Z]} &=& \bo{W}_{[Z]} \left(\Y^* - \X^* \wtheta  \right).
\ea

Sometimes the experimenter may have a prior knowledge about some of the variables and would like to model them parametrically, whereas keeping other variables in the nonparametric model. In that case, one may consider the following generalized SAM model:
\be
Y_i = \alpha + \sum_{p=1}^{P_1} m_{p,\btheta_p}(X_{pi}) + \sum_{p=P_1+1}^P m_p(X_{pi})  + \sum_{q=1}^{Q_1} m_{P+q, \btheta_{P+q}}(Z_{qi})  + \sum_{q=Q_1+1}^Q m_{P+q}(Z_{qi}) + \epsilon_i,
\label{mixed}
\ee
where $P_1\leq P,\ Q_1 \leq Q$ and $\mathcal{M}_{p,\Theta_p} = \{m_{p,\btheta_p}(X_{pi}), \btheta_p \in \Theta_p\}$ for $p=1,2, \cdots,P_1$,  $\mathcal{M}_{P+q,\Theta_{P+q}} = \{m_{P+q,\btheta_{P+q}}(Z_{qi}), \btheta_{P+q} \in \Theta_{P+q}\}$ for $q=1,2, \cdots,Q_1$ are families of parametric functions. For simplicity, we assume that $m_d$ is a polynomial of degree $r_d$ for all $d=1,2,\cdots,P_1$ and $d=P+1,P+2,\cdots,P+Q_1$.

\section{The Generalized Likelihood Ratio Test} \label{sec:GLR_main}
Let us consider the generalized SAM model defined in Equation (\ref{mixed}). For this model, one may be interested mainly in two type of tests based on the predictor variables -- a goodness-of-fit test for the parametric function and a model utility test for the nonparametric function. First, we present the generalized test that includes both type of tests, then we discuss about the individual test. So, we are now interested in the following null hypothesis:
\be
H_0 : m_p (\cdot) \in \mathcal{M}_{p,\Theta_p} \mbox{ for } p=1,2,\cdots, P_1,
\mbox{ and } m_p (\cdot) =0 \mbox{ for } p=P_1+1,P_1+2,\cdots, P.
\label{null3}
\ee
 Under $H_0$, we define $\widetilde m_d(\cdot)$ and $\widetilde m_{d,\widetilde \btheta_d}$ as the backfitting estimators of $m_d(\cdot)$ and  $m_{d, \btheta_d}$, respectively. 
Then, the residual sum of squares, under $H_0$, is given by
\be
RSS_0 =\sum_{i=1}^n \left(Y_i - \hat\alpha - \sum_{p=1}^{P_1} \widetilde m_{p,\widetilde \btheta_p}(X_{pi})  - \sum_{q=1}^{Q_1} \widetilde m_{P+q, \widetilde \btheta_{P+q}}(Z_{qi})  - \sum_{q=Q_1+1}^Q \widetilde m_{P+q}(Z_{qi}) \right)^2.
\ee
As we are testing only for  predictors by keeping covariates unchanged, the unconstrained model is
\be
Y_i = \alpha + \sum_{p=1}^P m_p(X_{pi})  + \sum_{q=1}^{Q_1} m_{P+q, \btheta_{P+q}}(Z_{qi})  + \sum_{q=Q_1+1}^Q m_{P+q}(Z_{qi}) + \epsilon_i.
\label{mixed123}
\ee
Under this model, the residual sum of square  is given by
\be
RSS_1 =\sum_{i=1}^n \left(Y_i - \hat\alpha - \sum_{p=1}^P \widehat m_p(X_{pi}) - \sum_{q=1}^{Q_1} \widehat m_{P+q, \widehat\btheta_{P+q}}(Z_{qi})  + \sum_{q=Q_1+1}^Q \widehat m_{P+q}(Z_{qi})\right)^2,
\label{rssh0_new}
\ee
where $\widehat m_d(\cdot)$  and $\widehat m_{d,\widehat \btheta_d}$ are the backfitting estimators of $m_d(\cdot)$ and $ m_{d, \btheta_d}$, respectively, under the unconstrained model.
The  generalized likelihood ratio (GLR) test statistic for testing  null hypothesis $H_0$ is defined as
\be
\lambda_n (H_0) = 
\frac{n (RSS_0 - RSS_1)}{RSS_1}.
\label{wilks_stat_z_new}
\ee
If the difference between $RSS_0$ and $RSS_1$ is small, then the GLR test statistic may be approximated by $n \log (\frac{RSS_0}{RSS_1})$. In the parametric model, this is equivalent to the log-likelihood ratio test statistic, where estimators are replaced by the corresponding maximum likelihood estimators.  Generally, the nonparametric maximum likelihood estimate does not exist and even when it does exist, the resulting maximum likelihood ratio test is not optimal (see \citealp{MR2201017, MR970470}).  So, the GLR test statistic may be regarded as a log-ratio of the quasi-likelihoods.

We assumed homoscedasticity of the error term, i.e. error $\epsilon_i$ in model (\ref{mixed}) has a constant variance. If this assumption is not valid, one may consider a GLR statistic by taking weighted residual sum of squares. The subsequent analysis and the backfitting algorithm will be modified similar to the weighted likelihood approach for the parametric models. 

\subsection{Asymptotic Distribution} \label{sec:asymp_general}
To derive the null distribution of the GLR  test statistic, let us define $c_p = (\sqrt{c_{p1}}, \sqrt{c_{p2}}, \cdots, \sqrt{c_{pk_p}})^T$, where  $c_{pj}=P(X_p = x_{pj})$ for $j=1, 2, \cdots, k_p$ and $p=1, 2, \cdots, P$. We also define ${}_{a}^{b}\Z_{(P+q)}$ in a similar way of ${}_{a}^{b}\X_{(p)}$  as defined in Equation (\ref{xp}) by replacing $\X$ with $\Z$. Let us denote $\bo{T}^*=(\bo{1}_n, {}_{1}^{r_1}\X_{(1)}, {}_{1}^{r_2}\X_{(2)}, \cdots, $ $ {}_{1}^{r_{P_1}}\X_{(P_1)}, {}_{1}^{r_{P+1}}\bo{Z}_{(P+1)}, {}_{1}^{r_{P+2}}\bo{Z}_{(P+2)}, $ $ \cdots , {}_{1}^{r_{P+Q1}}\bo{Z}_{(P+Q_1)})$.
Suppose $I = \sum_{p=1}^{P_1} (k_p - r_p - 1) + \sum_{p=P_1+1}^P k_p$, and $\bo{\Sigma}_1$ is a $I\times I$ dimensional block diagonal matrix whose $p$-th diagonal block is an identity matrix of order $(k_p - r_p - 1)$ if $p\leq P_1$, and $\left(\bo{I}_{k_p} - c_p c_p^T \right)$  if $p>P_1$. Define another $I\times I$ dimensional  block  matrix $\bo{\Sigma}_2$, whose $p$-th diagonal block is the identity matrix of order $(k_p - r_p - 1)$ if $p\leq P_1$, and of order $k_p$ if $p>P_1$. For $p\neq p' \in \{1, 2, \cdots, P\}$, the $pp'$-th off-diagonal block of $\bo{\Sigma}_2$ is given by
\be
\Sigma_{pp'} = 
\lim_{n \rightarrow \infty}  \left(\bo{R}_p^T  \bo{R}_p\right)^{-1/2} \bo{R}_p^T  \bo{R}_{p'} \left(\bo{R}_{p'}^T  \bo{R}_{p'}\right)^{-1/2},
\label{corjjsem}
\ee
 where $\bo{R}_p = {}_{r_p+1}^{k_p-1}\X_{(p)}$ if $p\leq P_1$ and $\bo{R}_p = {}_0^{k_p-1}\X_{(p)}$ if $p > P_1$.
  Then, the following theorem gives the asymptotic null distribution of the GLR  statistic. 


\begin{theorem}
	Suppose that regularity conditions (C1)--(C8) in Appendix A hold. Further assume that the limit of $n^{-1} \bo{T}^{*T} \bo{T}^*$ exists and it is invertible. Let us consider the unconstrained model (\ref{mixed123}) and the null hypothesis $H_0$ in (\ref{null3}), where  $m_{p,\btheta_p}$ is a polynomial of degree $r_p$ and $0<r_p<(k_p-1)$, for $p =1, 2, \cdots, P_1$. Then, under $H_0$, the asymptotic distribution of the GLR  test statistic coincides with $\sum_{i=1}^s \lambda_i V_i^2$, where $V_1, V_2, \cdots, V_s$ are independent and identically distributed (i.i.d.) standard normal variables, $\lambda_1, \lambda_2, \cdots, \lambda_s$ are non-zero eigenvalues of $\bo{\Sigma}_1\bo{\Sigma}_2 \bo{\Sigma}_1$ and $s$ is the rank of  $\bo{\Sigma}_1\bo{\Sigma}_2 \bo{\Sigma}_1$.
	\label{theorem:chi3}
\end{theorem}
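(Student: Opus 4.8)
The plan is to linearise the backfitting fits, reduce $\lambda_n(H_0)$ in (\ref{wilks_stat_z_new}) to a finite-dimensional quadratic form in the errors (this is where discreteness of the predictors is essential), and then read off the limit law from the spectral decomposition of a Gaussian quadratic form.

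First I would invoke the backfitting asymptotics recalled in Appendix B (and the results of \citet{MR1429922,MR1763322} and \citet{opsomer1999root}) to represent the fitted values under the unconstrained model (\ref{mixed123}) and under the null (\ref{null3}) as linear smoothers $\widehat{\Y}_1=\bo{H}_1\Y$ and $\widehat{\Y}_0=\bo{H}_0\Y$, valid up to remainders that stay $o_p(1)$ after multiplication by $n$. Writing $\Y=\bo{\mu}_0+\bo{\epsilon}$ under $H_0$, with $\bo{\mu}_0$ the true regression mean, the centering $\hat\alpha=\bar Y$ together with consistency of the component estimators gives $(\bo{I}_n-\bo{H}_0)\bo{\mu}_0\approx\bo{0}$ and $(\bo{I}_n-\bo{H}_1)\bo{\mu}_0\approx\bo{0}$; hence the mean is annihilated by both fits and $RSS_0-RSS_1=\bo{\epsilon}^T\bo{A}_n\bo{\epsilon}+o_p(1)$, with $\bo{A}_n=(\bo{I}_n-\bo{H}_0)^T(\bo{I}_n-\bo{H}_0)-(\bo{I}_n-\bo{H}_1)^T(\bo{I}_n-\bo{H}_1)$. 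Controlling the bias contribution here --- the term flagged as problematic in the introduction --- and showing that the nonparametric $\Z$-smoothing remainder does not contaminate the leading order is, together with this linearisation, the main obstacle of the proof.

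Next I would use that each $X_p$ takes only $k_p$ values, so $m_p$ lives in the $k_p$-dimensional span of $1,X_p,\dots,X_p^{k_p-1}$. Under $H_0$ the tested directions are precisely the higher powers $X_p^{r_p+1},\dots,X_p^{k_p-1}$ when $p\le P_1$ and the full set of powers when $p>P_1$; these are the columns of the matrices $\bo{R}_p$ in (\ref{corjjsem}). After residualising against the common covariate smoothers and the retained parametric terms --- which, by the assumed invertibility of $\lim n^{-1}\bo{T}^{*T}\bo{T}^*$, leaves the limiting cross-structure intact --- the quadratic form collapses onto these finitely many directions. Setting $\bo{\xi}_p=(\bo{R}_p^T\bo{R}_p)^{-1/2}\bo{R}_p^T\bo{\epsilon}$ and applying a central limit theorem to the stacked vector $\bo{\xi}=(\bo{\xi}_1^T,\dots,\bo{\xi}_P^T)^T$, a direct variance computation gives $\mathrm{Cov}(\bo{\xi}_p,\bo{\xi}_{p'})\to\sigma^2\Sigma_{pp'}$, so $\bo{\xi}\xrightarrow{d}N(\bo{0},\sigma^2\bo{\Sigma}_2)$; the within-block normalisation forces the diagonal blocks of $\bo{\Sigma}_2$ to be identities of the stated orders.

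It remains to identify the metric of the limiting form. Because additive backfitting fits each tested component in its own centered coordinates rather than jointly decorrelating the blocks, the RSS reduction accumulates as $\bo{\xi}^T\bo{\Sigma}_1\bo{\xi}$, the identity blocks of $\bo{\Sigma}_1$ carrying the $p\le P_1$ components and the idempotent blocks $\bo{I}_{k_p}-c_pc_p^T$ removing the non-identified constant direction for the $p>P_1$ components; checking that the metric is $\bo{\Sigma}_1$ rather than a full joint-projection metric is the delicate step, and it is what produces weights different from one (a joint least-squares fit would instead give an exact $\chi^2$). Finally, since $RSS_1/n\to\sigma^2$, Slutsky's theorem yields $\lambda_n(H_0)\xrightarrow{d}\bo{\xi}^T\bo{\Sigma}_1\bo{\xi}/\sigma^2$. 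Writing $\bo{\xi}=\sigma\bo{\Sigma}_2^{1/2}\bo{Z}$ with $\bo{Z}\sim N(\bo{0},\bo{I})$ turns this into $\bo{Z}^T\bo{\Sigma}_2^{1/2}\bo{\Sigma}_1\bo{\Sigma}_2^{1/2}\bo{Z}$; since $\bo{\Sigma}_1$ is idempotent, the nonzero eigenvalues of $\bo{\Sigma}_2^{1/2}\bo{\Sigma}_1\bo{\Sigma}_2^{1/2}$ equal those of $\bo{\Sigma}_1\bo{\Sigma}_2\bo{\Sigma}_1$, and diagonalising the Gaussian quadratic form delivers $\sum_{i=1}^s\lambda_i V_i^2$ with $V_i$ i.i.d.\ $N(0,1)$ and $s=\mathrm{rank}(\bo{\Sigma}_1\bo{\Sigma}_2\bo{\Sigma}_1)$, as claimed.
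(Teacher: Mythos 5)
Your proposal is correct and takes essentially the same route as the paper's proof: the paper likewise linearises the backfitting fits via the smoother approximations (Lemmas \ref{lemma:s}--\ref{lemma:W}, with condition (C8) and Lemma \ref{lemma:Sm} supplying exactly the bias/$\Z$-contamination control you flag as the main obstacle), reduces $RSS_0-RSS_1$ to the quadratic form $\sum_{p\le P_1}\bo{\epsilon}^T\bigl(\bo{S}_p-\bo{R}_{p,1}(\bo{R}_{p,1}^T\bo{R}_{p,1})^{-1}\bo{R}_{p,1}^T\bigr)\bo{\epsilon}+\sum_{p>P_1}\bo{\epsilon}^T\bo{S}_p^*\bo{\epsilon}$, and then uses the same CLT for the normalised vectors $(\bo{R}_p^T\bo{R}_p)^{-1/2}\bo{R}_p^T\bo{\epsilon}$, Slutsky with $RSS_1/n\to\sigma^2$, and the identical $\bo{\Sigma}_1\bo{\Sigma}_2\bo{\Sigma}_1$ spectral step. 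The only organisational difference is that you handle the mixed null in a single pass, whereas the paper first proves the two special cases (Corollaries \ref{theorem:chi2} and \ref{theorem:chi}) and obtains Theorem \ref{theorem:chi3} by combining them.
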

%

 The proof of the theorem is given in Appendix B. Theorem \ref{theorem:chi3} shows that the asymptotic null distribution of the GLR test statistic is a linear combination of chi-square variables. The critical region of the test may be calculated using the algorithm proposed by \cite{davies1980algorithm}.
Note that the null distribution does not depend on the modeling of covariates as we keep covariates unchanged under both the null and the alternative hypotheses. But, in practice, it reduces the possible over-fitting; thus the finite sample performance of the test improves due to parametric modeling those covariates. However, one must be careful while modeling  covariates, and it is important to verify whether such a parametric model is valid or not. A wrong model may cause severe power loss as demonstrated in the simulation studies. A special case of Theorem \ref{theorem:chi3} when the predictor variables are pairwise independent, the null distribution is reduced to a single chi-square as mentioned in the following corollary.

\begin{corollary}
	Suppose that the predictor variables are pairwise independent, and the assumptions of Theorem \ref{theorem:chi3} hold.  Then, under $H_0$, the asymptotic distribution of the GLR  test statistic is  a chi-square distribution with degrees of freedom $\sum_{p=1}^{P_1} (k_p - r_p - 1) + \sum_{p=P_1+1}^P (k_p-1)$. 
	\label{corollary:chi3}
\end{corollary}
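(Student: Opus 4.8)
The plan is to reduce the corollary to a statement about the matrix $A := \bo{\Sigma}_1 \bo{\Sigma}_2 \bo{\Sigma}_1$ appearing in Theorem \ref{theorem:chi3}. Since that theorem already identifies the limiting law as $\sum_{i=1}^{s} \lambda_i V_i^2$, where the $\lambda_i$ are the non-zero eigenvalues of $A$ and $s = \mathrm{rank}(A)$, it suffices to prove that, under pairwise independence, $A$ is idempotent with $\mathrm{rank}(A) = \sum_{p=1}^{P_1}(k_p - r_p - 1) + \sum_{p=P_1+1}^{P}(k_p - 1)$. Indeed, an idempotent matrix has all its non-zero eigenvalues equal to $1$, so then $\lambda_1 = \cdots = \lambda_s = 1$ and $\sum_{i=1}^s \lambda_i V_i^2 = \sum_{i=1}^s V_i^2 \sim \chi^2_s$, which is exactly the claimed chi-square law with the stated degrees of freedom.

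First I would show that pairwise independence forces the off-diagonal blocks of $\bo{\Sigma}_2$ to drop out of $A$. The columns collected in $\bo{R}_p$ represent the ``extra'' directions tested when passing from $H_0$ to the unconstrained model, residualized against the null design $\bo{T}^*$; because $\bo{T}^*$ contains the constant term $\bo{1}_n$ together with the degrees $1,\dots,r_p$ of each parametric predictor, these residualized directions are \emph{centered} functions of $X_p$. For $p \neq p'$ the $(p,p')$ block of $\bo{\Sigma}_2$ is the limit in (\ref{corjjsem}), whose central factor is the normalized cross moment $n^{-1}\bo{R}_p^T \bo{R}_{p'}$; under pairwise independence this factorizes into marginal expectations, and since the underlying functions are centered, the limit of $E[g(X_p) h(X_{p'})] = E[g(X_p)]\,E[h(X_{p'})] = 0$. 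Hence $\Sigma_{pp'} \to \bo{0}$ for every $p \neq p'$ with $p,p' \le P_1$. For $p > P_1$ the matrix $\bo{R}_p$ still carries the constant column, but the corresponding diagonal block $\bo{I}_{k_p} - c_p c_p^T$ of $\bo{\Sigma}_1$ projects out exactly the (whitened) constant direction $c_p$, which is the only direction through which a function of $X_p$ can correlate with a function of an independent $X_{p'}$; consequently $\bo{\Sigma}_1$ annihilates these cross blocks from both sides. In all cases the bordered product $\bo{\Sigma}_1 \bo{\Sigma}_2 \bo{\Sigma}_1$ retains only the block-diagonal part of $\bo{\Sigma}_2$.

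Next, because each diagonal block of $\bo{\Sigma}_2$ is an identity matrix, the previous step gives $A = \bo{\Sigma}_1 \bo{I} \bo{\Sigma}_1 = \bo{\Sigma}_1^2$. Each diagonal block of $\bo{\Sigma}_1$ is either an identity matrix (for $p \le P_1$) or $\bo{I}_{k_p} - c_p c_p^T$ (for $p > P_1$); the latter is idempotent because $c_p^T c_p = \sum_{j=1}^{k_p} c_{pj} = \sum_{j=1}^{k_p} P(X_p = x_{pj}) = 1$, so $c_p c_p^T$ is a rank-one orthogonal projection. Therefore $\bo{\Sigma}_1$ is idempotent and $A = \bo{\Sigma}_1^2 = \bo{\Sigma}_1$, with $\mathrm{rank}(A) = \mathrm{rank}(\bo{\Sigma}_1) = \sum_{p=1}^{P_1}(k_p - r_p - 1) + \sum_{p=P_1+1}^{P}\mathrm{rank}(\bo{I}_{k_p} - c_p c_p^T) = \sum_{p=1}^{P_1}(k_p - r_p - 1) + \sum_{p=P_1+1}^{P}(k_p - 1)$. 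Combining this with the reduction of the first paragraph completes the proof.

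I expect the main obstacle to be the vanishing of the off-diagonal blocks in the case $p, p' \le P_1$. There $\bo{\Sigma}_1$ acts as the identity, so the cancellation cannot come from an explicit projection and must instead be extracted entirely from the moment factorization after the tested columns have been correctly centered and residualized against $\bo{T}^*$. Making this precise requires tracking, through the derivation of Theorem \ref{theorem:chi3}, the exact sense in which the limiting kernel in (\ref{corjjsem}) is built from residualized rather than raw powers of $X_p$; once that centering is in place, pairwise independence delivers the required orthogonality and the remaining steps are routine linear algebra.
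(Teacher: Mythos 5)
Your proposal is correct in substance, and it organizes the argument differently from the paper. The paper never writes a standalone proof of this corollary: its pairwise-independence simplification is carried out \emph{pre-limit}, inside the proofs of Corollaries \ref{theorem:chi} and \ref{theorem:chi2}, where Lemma \ref{lemma:ss0} gives $\bo{S}_d^* \bo{S}_{d'}^* = o\left(\frac{\bo{1}_n \bo{1}_n^T}{n}\right)$ a.s., hence asymptotic independence of the quadratic forms $\bo{\epsilon}^T \bo{S}_d^* \bo{\epsilon}$ (respectively $\bo{\epsilon}^T \bo{R}_p (\bo{R}_p^T \bo{R}_p)^{-1} \bo{R}_p^T \bo{\epsilon}$), each already identified as $\chi^2(k_p-1)$ or $\chi^2(k_p - r_p - 1)$; the degrees of freedom then simply add. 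You instead work \emph{post-limit}, directly from the statement of Theorem \ref{theorem:chi3}, showing $\bo{\Sigma}_1 \bo{\Sigma}_2 \bo{\Sigma}_1 = \bo{\Sigma}_1$ is idempotent of rank $\sum_{p=1}^{P_1}(k_p - r_p - 1) + \sum_{p=P_1+1}^{P}(k_p - 1)$, so all non-zero weights equal one. The two mechanisms express the same cancellation at different stages; yours buys a self-contained deduction from the theorem as stated (and your observations that $c_p^T c_p = 1$ makes $\bo{I}_{k_p} - c_p c_p^T$ a rank-$(k_p-1)$ projection, and that $\bo{\Sigma}_1$ annihilates the rank-one cross blocks $c_p c_{p'}^T$ arising for $p, p' > P_1$, are exactly right), while the paper's route inherits the cancellation wholesale from Lemma \ref{lemma:ss0} without ever inspecting $\bo{\Sigma}_2$.

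The obstacle you flag for $p, p' \le P_1$ is genuine, and it traces to an imprecision in the paper rather than to your argument. If $\bo{R}_p = {}_{r_p+1}^{k_p-1}\X_{(p)}$ is read literally as raw powers, then under independence $n^{-1}\bo{R}_p^T \bo{R}_{p'}$ converges entrywise to $E[X_p^a]\,E[X_{p'}^b] \neq 0$, so the block in (\ref{corjjsem}) would not vanish and, since $\bo{\Sigma}_1$ acts as the identity there, the corollary would fail. The residualized reading you propose is in fact forced by the paper's own proof of Corollary \ref{theorem:chi2}: the asserted identity $\bo{S}_p - \bo{R}_{p,1}\left(\bo{R}_{p,1}^T \bo{R}_{p,1}\right)^{-1}\bo{R}_{p,1}^T = \bo{R}_p\left(\bo{R}_p^T \bo{R}_p\right)^{-1}\bo{R}_p^T$ holds only when $\bo{R}_p$ is taken to be $\left(\bo{I}_n - \bo{R}_{p,1}(\bo{R}_{p,1}^T \bo{R}_{p,1})^{-1}\bo{R}_{p,1}^T\right){}_{r_p+1}^{k_p-1}\X_{(p)}$, whose columns are centered because $\bo{R}_{p,1}$ contains the constant column; with that substitution your moment-factorization step closes immediately, exactly as you indicate. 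One small simplification you could add: for the mixed blocks $p \le P_1 < p'$ the centering on the parametric side already kills the cross moment, so you need the projection in $\bo{\Sigma}_1$ only one-sidedly (to annihilate the whitened constant $c_{p'}$), and for $p, p' \le P_1$ you need no projection at all once the residualization is in place.
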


It is shown in the simulation section that even if the predictors are not pairwise independent one may approximate the null distribution using Corollary \ref{corollary:chi3} unless predictors are strongly correlated. Simulation studies show that the approximation makes the test  little anti-conservative in small or moderate sample sizes. However, as sample size increases it gives a good approximation.

It is interesting to note that the asymptotic null distribution of the GLR test statistic does not depend on the nuisance parameters -- the design densities of $\Z$,  $m_d$ functions for the covariates and the error distribution. But, in general, it depends on the design densities of $\X$ as shown in Theorem \ref{theorem:chi3}. So, the Wilks phenomenon does not hold in the true sense, however, it holds good if predictors are pairwise independent.

The main advantage of our method is that it is easy to calculate the $p$-values of the test. In fact, the discrete valued predictors make the test simple. On the other hand, as shown by \cite{MR2201017} and \cite{MR2396026}, if the predictors are continuous the GLR test becomes complicated, and the asymptotic null distribution depends on  kernel  density functions and  bandwidth parameters. Moreover, the authors mentioned that the null distribution may not be accurate as the test statistic contains an unknown bias term. So, the null distribution is calculated by Monte Carlo simulation or using a  conditional bootstrap method. In our test, we do not need any additional conditions to choose the bandwidth parameter for continuous covariates as long as assumption (C5) holds. Therefore, for simplicity, we may use the same bandwidth parameter which is optimal for  estimation (see Section \ref{sec:simulation}). 

The main contribution of this paper is that the GLR test is constructed for the additive components with discrete or categorical predictors.
The asymptotic distribution of the GLR test statistic is derived in \cite{MR2201017} with the strict assumption that all the predictors and covariates are continuous, more specifically, their marginal
distributions must be Lipschitz continuous on some bounded support. Keeping in mind the real applications, this restriction is  modified for the discrete or categorical predictors. For this reason, the proof of the theorem does not directly follow from the previous method, and we used a novel approach to derive the asymptotic distribution.

\subsection{Power Function}
We now consider the power function of the GLR test. Let us take the following contiguous alternative hypothesis:
\be
H_1 : m_p (\cdot) = n^{-1/2} m_p^* (\cdot) \mbox{ for } p=1,2,\cdots, P,
\label{alt_new}
\ee
where $m_p^*$ is an additive function under the alternative hypothesis such that $E(m_p^*)=0$.
 We define $m_p'$ as the best fitted polynomial of degree $r_p$ to $m_p^*$ for $p = 1, 2, \cdots, P_1$, and $m_p' = m_p^*$ for $p = P_1 + 1, P_1 +  2, \cdots, P$. 
Using the following theorem, we get the power function of the GLR  test.

\begin{theorem}
	Let us consider the notations and assumptions of Theorem \ref{theorem:chi3}. Then, under $H_1$, the asymptotic distribution of the GLR  test statistic coincides with $\delta^2 + \sum_{i=1}^s \lambda_i V_i^2$, where  $\delta^2 = \sum_{r,s=1}^{P} E(m'_r  m'_s)$.
	%
	\label{theorem:power2}
\end{theorem}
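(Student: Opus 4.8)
The plan is to build directly on the quadratic-form representation already obtained in the proof of Theorem \ref{theorem:chi3}. That proof shows that, up to an $o_P(1)$ error,
\be
\lambda_n(H_0) = \frac{\Y^T \bo{B}_n \Y}{\hat\sigma^2} + o_P(1), \qquad \Y^T\bo{B}_n\Y = RSS_0 - RSS_1,
\ee
where $\bo{B}_n = (\bo{I}_n-\bo{H}_0)^T(\bo{I}_n-\bo{H}_0) - (\bo{I}_n-\bo{H}_1)^T(\bo{I}_n-\bo{H}_1)$ is the symmetric difference of the residual-forming operators of the null and the unconstrained backfitting fits, and $\hat\sigma^2 = RSS_1/n$. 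The operator $\bo{B}_n$ asymptotically annihilates the intercept, the covariate block spanned by $\bo{W}_{[Z]}$, and, on each tested coordinate $p\le P_1$, the degree-$r_p$ polynomial directions; under $H_0$ the surviving quadratic form in the noise is exactly what converges to $\sum_{i=1}^s\lambda_i V_i^2$. Since the $n^{-1/2}$ perturbation is negligible in the average, $\hat\sigma^2\to\sigma^2$ under $H_1$ as well (this is where contiguity is invoked), so the denominator is unaffected and I can work with the numerator.

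Under the contiguous alternative I would decompose $\Y = \Y^{(0)} + n^{-1/2}\bo{m}^*$, where $\Y^{(0)} = \alpha\bo{1}_n + \sum_q \bo{m}_{P+q} + \bo{\epsilon}$ is the response with all tested components removed and $\bo{m}^* = \big(\sum_{p=1}^P m_p^*(X_{pi})\big)_{i=1}^n$. Substituting and expanding the quadratic form gives
\be
\Y^T\bo{B}_n\Y = \bo{\epsilon}^T\bo{B}_n\bo{\epsilon} + 2n^{-1/2}\bo{m}^{*T}\bo{B}_n\bo{\epsilon} + n^{-1}\bo{m}^{*T}\bo{B}_n\bo{m}^* + \bo{r}_n,
\ee
where $\bo{r}_n$ collects all terms containing $\alpha\bo{1}_n + \sum_q\bo{m}_{P+q}$. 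First I would argue $\bo{r}_n = o_P(1)$ using the same annihilation fact that yields the Wilks phenomenon in Theorem \ref{theorem:chi3}. The first term is precisely the null quadratic form, hence converges to $\sum_{i=1}^s\lambda_i V_i^2$.

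For the signal term I would use the discreteness of every tested predictor in an essential way: the unconstrained backfitting smoother reproduces each additive signal $\bo{m}_p^*$ exactly at the category level (it is a within-category mean, so $(\bo{I}_n-\bo{H}_1)\bo{m}^* \to \bo{0}$), whereas the null operator removes only the fitted polynomial part on each coordinate $p\le P_1$ and all of $\bo{m}_p^*$ for $p>P_1$. Consequently $(\bo{I}_n-\bo{H}_0)\bo{m}^*$ retains in each coordinate exactly the detectable component $m_p'$, and a law-of-large-numbers argument on the projected signal gives
\be
n^{-1}\bo{m}^{*T}\bo{B}_n\bo{m}^* \longrightarrow E\Big[\big(\textstyle\sum_{p=1}^P m_p'\big)^2\Big] = \sum_{r,s=1}^P E(m_r' m_s') = \delta^2,
\ee
where the cross-coordinate expectations are the very quantities encoded in the off-diagonal blocks $\Sigma_{pp'}$ of $\bo{\Sigma}_2$. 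Resolving this limit in the $\bo{\Sigma}_1,\bo{\Sigma}_2$ coordinates of Theorem \ref{theorem:chi3} is then a direct computation.

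The main obstacle is the cross term $2n^{-1/2}\bo{m}^{*T}\bo{B}_n\bo{\epsilon}$, which is $O_P(1)$ and couples the projected signal to the projected noise. The delicate point is to establish the joint limiting law of $\big(\bo{B}_n\bo{\epsilon},\, n^{-1/2}\bo{m}^{*T}\bo{B}_n\bo{\epsilon}\big)$ via a joint central limit theorem and to show that, assembled with the pure-noise term, it yields the distribution claimed in the statement, so that the deterministic shift $\delta^2$ separates from the mixture $\sum_{i=1}^s\lambda_i V_i^2$. This step requires a careful, coordinate-uniform treatment of how the backfitting operators act on a \emph{fixed smooth} vector $\bo{m}^*$ as opposed to white noise $\bo{\epsilon}$ — a distinction that is only clean because the tested predictors are discrete, which is precisely what makes the whole argument go through and is where I expect the bulk of the technical work to lie.
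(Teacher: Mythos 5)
Your skeleton is the same as the paper's: the paper also splits $RSS_0-RSS_1$ into a pure-noise quadratic form (which reduces to the null limit $\sum_i\lambda_iV_i^2$ exactly as in Theorem \ref{theorem:chi3}), a signal term handled by a law-of-large-numbers computation on the bin-smoother projections (its displays (\ref{msm1})--(\ref{msm210})), and the convergence $\frac{1}{n}RSS_1\to\sigma^2$ for the denominator. The genuine gap is the cross term, which you correctly flag as $O_P(1)$ but then defer, proposing to show via a joint CLT that ``the deterministic shift $\delta^2$ separates from the mixture $\sum_i\lambda_iV_i^2$.'' That separation cannot be established, because the cross term does not vanish and is not independent of the noise term: conditionally on the design, $2n^{-1/2}\bo{m}^{*T}\bo{B}_n\bo{\epsilon}$ has mean zero and variance $4\sigma^2 n^{-1}\bo{m}^{*T}\bo{B}_n^2\bo{m}^*\to 4\sigma^2\delta^2$ (since $\bo{B}_n$ is asymptotically a projection), and it is built from the same Gaussian coordinates $\bo{U}_p$ as the noise quadratic form. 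Writing $\bo{B}_n\approx\Pi$ for the limiting projection, the numerator is $\lVert\Pi(\bo{\epsilon}+n^{-1/2}\bo{m}^*)\rVert^2$, whose limit is the shifted form $\sigma^2\sum_i\lambda_i(V_i+a_i)^2$ with $\sum_i\lambda_ia_i^2=\delta^2/\sigma^2$; the coupling $2\sum_i\lambda_ia_iV_i$ survives whenever $\delta^2>0$. So the step you postpone as ``the bulk of the technical work'' is not merely technical: carried out honestly, it yields a noncentral quadratic form rather than the independent sum $\delta^2+\sum_i\lambda_iV_i^2$.

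For context, the paper itself disposes of this term by fiat: its proof of Theorem \ref{theorem:power2} just says to combine Theorems \ref{theorem:chi3} and \ref{theorem:power}, and in the proof of Theorem \ref{theorem:power} the cross term present in (\ref{rss018}) has silently disappeared by (\ref{rss1g20}), even though under $H_1$ it equals $2n^{-1/2}\sum_d\bo{\epsilon}^T\bo{S}_d^*\bo{m}^*=O_p(1)$, not $o_p(1)$, by the variance computation above. Note that the noncentral chi-square in Corollary \ref{corollary:power2} is exactly $\sum_i(V_i+a_i)^2$, i.e.\ the version \emph{with} the cross term retained, so keeping it is also what makes the corollary internally consistent; the clean way to finish your argument is the joint CLT for the vectors $n^{-1/2}\bo{R}_p^T\bo{\epsilon}$ together with the deterministic limits of $n^{-1/2}\bo{R}_p^T\bo{m}^*$, which the discreteness of the tested predictors makes routine. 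One smaller point: for $p\le P_1$ your ``detectable component'' $m_p'$ must be the \emph{residual} of $m_p^*$ from its best degree-$r_p$ polynomial approximation --- the operator $\bo{S}_p-\bo{R}_{p,1}\left(\bo{R}_{p,1}^T\bo{R}_{p,1}\right)^{-1}\bo{R}_{p,1}^T$ annihilates the polynomial part --- which is what your annihilation argument implicitly uses, even though the paper's wording (``best fitted polynomial'') literally says the opposite.
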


It is interesting to note that the GLR test detects a specific class of contiguous alternatives at a rate of $n^{-1/2}$. So, the power of the test is asymptotically optimal in terms of the rate of convergence. For a fixed alternative the power of the GLR test convergences to one, i.e. the test is consistent. In case of pairwise independent predictors, the theorem simplifies as below. 


\begin{corollary}
	Suppose that the predictor variables are pairwise independent, and the assumptions of Theorem \ref{theorem:chi3} hold. Then, under $H_1$, the asymptotic distribution of the GLR  test statistic is   non-central chi-square  with the non-centrality parameter $ \sum_{p=1}^{P} E({m_p'}^2)$ and the degrees of freedom $\sum_{p=1}^{P_1} (k_p - r_p - 1) + \sum_{p=P_1+1}^P (k_p-1)$.
	\label{corollary:power2}
\end{corollary}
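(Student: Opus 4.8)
The plan is to obtain Corollary \ref{corollary:power2} by specializing the general power result of Theorem \ref{theorem:power2} to the pairwise-independent case, reusing verbatim the spectral simplification that already underlies Corollary \ref{corollary:chi3}. I would not redo the contiguity and quadratic-form analysis; instead I would take as given that, under $H_1$, the GLR statistic converges in distribution to the non-central quadratic form attached to $\bo{\Sigma}_1\bo{\Sigma}_2\bo{\Sigma}_1$, whose non-central mass is measured by $\delta^2 = \sum_{r,s=1}^{P} E(m_r' m_s')$. The corollary then reduces to two separate collapses: collapsing the eigenvalue spectrum to the single value $1$, and collapsing $\delta^2$ to a sum of marginal terms.

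First I would invoke the eigenvalue reduction. Under pairwise independence the normalized cross-products $n^{-1}\bo{R}_p^T \bo{R}_{p'}$ factor asymptotically as outer products of the marginal power-moment vectors of $X_p$ and $X_{p'}$, so each off-diagonal block $\Sigma_{pp'}$ is asymptotically rank one with its left and right directions aligned with the centering directions $c_p$ and $c_{p'}$ that the projections $\bo{I}_{k_p}-c_p c_p^T$ built into $\bo{\Sigma}_1$ remove. Consequently the sandwiched matrix $\bo{\Sigma}_1\bo{\Sigma}_2\bo{\Sigma}_1$ becomes block diagonal and equals $\bo{\Sigma}_1$, which is idempotent since $c_p^T c_p = \sum_j c_{pj} = 1$; its rank is $s = \sum_{p=1}^{P_1}(k_p-r_p-1)+\sum_{p=P_1+1}^{P}(k_p-1)$, so every nonzero eigenvalue satisfies $\lambda_i = 1$. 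This is precisely the computation that turns the mixture $\sum_i \lambda_i V_i^2$ of Theorem \ref{theorem:chi3} into the single $\chi^2_s$ of Corollary \ref{corollary:chi3}, so I would cite it rather than re-derive it.

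Second I would compute the non-centrality. Because all $\lambda_i = 1$ and $\bo{\Sigma}_1\bo{\Sigma}_2\bo{\Sigma}_1$ is an idempotent of rank $s$, the non-central quadratic form underlying Theorem \ref{theorem:power2} collapses to a sum of $s$ squared unit-variance normals with a common shift, i.e. a non-central chi-square with $s$ degrees of freedom and non-centrality equal to $\delta^2$. It then remains to simplify $\delta^2 = \sum_{r,s=1}^{P} E(m_r' m_s')$. Under pairwise independence, $m_r'(X_r)$ and $m_s'(X_s)$ are independent for $r\neq s$, and each best-fitting polynomial inherits the identifiability constraint $E(m_p')=0$; hence $E(m_r' m_s')=E(m_r')E(m_s')=0$ for $r\neq s$, leaving $\delta^2 = \sum_{p=1}^{P} E({m_p'}^2)$. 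Combining the two collapses yields the asserted $\chi^2_s$ with non-centrality $\sum_{p=1}^{P} E({m_p'}^2)$ and the stated degrees of freedom.

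The main obstacle is the rigorous justification supporting the first collapse, namely that the asymptotic range of each $\Sigma_{pp'}$ lies exactly in the subspace annihilated by $\bo{\Sigma}_1$ rather than merely being small; this alignment between the independence-induced rank-one structure and the centering projections is the delicate algebraic point, and I would lean on its having already been established in the proof of Corollary \ref{corollary:chi3}. A related technical check is confirming that the mean-shift vector generated by $m_1',\ldots,m_P'$ lies in the range of the projection $\bo{\Sigma}_1$, so that the limit is a genuine non-central chi-square with non-centrality exactly $\delta^2$ and not a central chi-square displaced by a constant. Once these two alignment facts are in hand, both reductions are routine and the corollary follows.
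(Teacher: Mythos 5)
Your overall route is in fact the paper's own: the paper contains no separate argument for Corollary \ref{corollary:power2} --- it is presented as the pairwise-independent specialization of Theorem \ref{theorem:power2} (``the theorem simplifies''), obtained by exactly your two collapses: $\bo{\Sigma}_1\bo{\Sigma}_2\bo{\Sigma}_1$ becomes idempotent of rank $s=\sum_{p=1}^{P_1}(k_p-r_p-1)+\sum_{p=P_1+1}^{P}(k_p-1)$ so all nonzero eigenvalues equal one, and $E(m_r'm_s')=0$ for $r\neq s$ by independence together with $E(m_p')=0$. In that sense you have faithfully reconstructed the intended derivation, and you correctly identified the two alignment facts on which it rests.

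There is, however, a genuine gap at precisely the step you defer as a ``technical check,'' and it cannot be closed by citing Theorem \ref{theorem:power2}. That theorem's limit, as stated and as proved, is $\delta^2+\sum_{i=1}^{s}\lambda_i V_i^2$ --- a \emph{deterministic} shift of a central quadratic form --- whereas the corollary asserts a genuine non-central chi-square $\chi^2_s(\delta^2)$. These are different distributions (variance $2s$ versus $2s+4\delta^2$), so there is no ``non-central quadratic form underlying Theorem \ref{theorem:power2}'' available to specialize. The missing ingredient is the cross term $2\sum_{d}\bo{\epsilon}^T\bo{S}_d^*\bo{m}$ in the expansion (\ref{rss018}): under the contiguous alternative $m_p=n^{-1/2}m_p^*$ it equals $2n^{-1/2}\sum_d \bo{\epsilon}^T\bo{m}_d^* + o_p(1)$, which is $O_p(1)$ and asymptotically Gaussian, not negligible; it is exactly the linear term $2\sum_j \mu_{dj}U_{dj}$ that completes the squares $\sum_j (U_{dj}+\mu_{dj})^2$ and turns the shifted-central limit into the non-central one. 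To prove the corollary as stated you must re-enter the quadratic-form analysis you declared off-limits: apply the CLT with drift to $\bo{e}_{dj}^T(\bo{m}+\bo{\epsilon})/(\sigma\sqrt{n_{dj}})$, yielding $U_{dj}$ asymptotically $N(\mu_{dj},1)$ with $\mu_{pj}=\sqrt{c_{pj}}\,m_p'(x_{pj})/\sigma$, verify $c_p^T\mu_p = E(m_p')/\sigma = 0$ so the drift survives the projection $\bo{I}_{k_p}-c_pc_p^T$ (the alignment fact you named), and compute $\mu_p^T\mu_p = E({m_p'}^2)/\sigma^2$ (note this exposes a $1/\sigma^2$ normalization that the corollary's statement suppresses). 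The paper inherits the same defect --- its proof of Theorem \ref{theorem:power} silently discards this very cross term in passing from (\ref{rss018}) to (\ref{rss1g20}) --- so you deserve credit for spotting the shifted-central versus non-central distinction; but your proposal, as written, proves at most ``$\delta^2+\chi^2_s$,'' not the corollary. A smaller inaccuracy: for the parametric blocks $p\leq P_1$ the off-diagonal blocks $\Sigma_{pp'}$ are \emph{not} annihilated by $\bo{\Sigma}_1$, whose corresponding diagonal blocks are identities; their vanishing under independence instead requires reading the columns of $\bo{R}_p={}_{r_p+1}^{k_p-1}\X_{(p)}$ as residualized against the lower-order polynomial columns including $\bo{1}_n$, not as raw powers, so leaning on Corollary \ref{corollary:chi3} here borrows a fact the paper itself never explicitly establishes.
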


 It is not surprising that the GLR statistic is $\sqrt{n}$-consistent, whereas most of the nonparametric tests have relatively slower rate.  Estimating the $m_p$ function corresponding to a discrete or categorical predictor $X_p$ is a finite dimensional problem as we assume that $k_p$, the domain of $X_p$, is  finite. Thus, all $m_p$ functions for $p=1,2,\cdots,P$ of the additive models in (\ref{mixed}) and (\ref{mixed123}) are equivalent to the parametric part of the semiparametric model. So, the corresponding convergence rate is consistent with Corollary 1 of \cite{opsomer1999root}. The same result is also obtained by \cite{MR970977}. However, this rate depends on the bandwidth parameter selected for smoothing the continuous covariates (if any). The  bandwidth parameter must be selected based on assumption (C5)  given in Appendix A. Even if when the model contains continuous covariates, as the null hypothesis is the significance only for the predictors, the GLR test  has a convergence rate similar to a parametric model. Intuitively, in large sample sizes, the effect of smoothers for continuous covariates cancels out when two residual sum of squares are subtracted  in the numerator of the GLR statistic in Equation (\ref{wilks_stat_z_new}). Meanwhile, the denominator (decided by $n$) of the GLR statistic converges to $\sigma^2$, the error variance of the additive model.  So, those continuous smooth functions do not have any major role in the rate of convergence of the GLR test as long as their bandwidth parameters are optimally selected.

We have started with a very general test in $H_0$ that includes both the goodness-of-fit test  and the model utility test. For this reason, the construction of $\bo{\Sigma_1}$ and $\bo{\Sigma_2}$ used in Theorem \ref{theorem:chi3} looks slightly complicated. However, if we are interested only in one type of test, those matrices becomes very simple. In these cases, the residual sum of squares $RSS_0$ and $RSS_1$  also have simpler expressions. Now, we discuss about these special cases. To make the procedure further simple, we assume that all covariates are modeled nonparametrically.

\subsection{The Goodness-of-Fit Test for the Semiparametric Model}
Let us consider the full nonparametric additive model given in (\ref{model0}). With respect that base model, we  now construct a   goodness-of-fit test for the semiparametric model (\ref{sam}), where all predictors are modeled parametrically. So, the null hypothesis for this problem is 
\be
H_0^* : m_p (\cdot) \in \mathcal{M}_{p,\Theta_p}, \mbox{ for } p=1,2,\cdots, P.
\label{null2}
\ee
The residual sum of squares, under $H_0^*$, is given by
\be
RSS_0^* =\sum_{i=1}^n \left(Y_i - \hat\alpha - \sum_{p=1}^P \widetilde m_{p,\wtheta_p}(X_{pi}) - \sum_{q=1}^Q \widetilde m_{P+q}(Z_{qi})\right)^2,
\ee
where $\widetilde m_{p,\wtheta_p}$ is the backfitting estimator of $m_{p,\btheta_p}$ under $H_0^*$ for $p=1,2,\cdots,P$, and $\widetilde m_{P +q}$ is the backfitting estimator of $m_{P+q}$ under $H_0^*$ for $q=1,2,\cdots,Q$. 
Under the unconstrained nonparametric additive model, the residual sum of square  is given by
\be
RSS_1^* =\sum_{i=1}^n \left(Y_i - \hat\alpha - \sum_{p=1}^P \widehat m_p(X_{pi}) - \sum_{q=1}^Q \widehat m_{P+q}(Z_{qi})\right)^2,
\label{rssh0}
\ee
where $\widehat m_1(\cdot), \widehat m_2(\cdot), \cdots , \widehat m_{P+Q}(\cdot)$ are the backfitting estimators under the full model given in (\ref{model0}). 
Suppose $L = \sum_{p=1}^P (k_p - r_p- 1) $, and $\bo{\Sigma}_2$ is a $L\times L$ dimensional block matrix, whose $p$-th diagonal block is an identity matrix of order $(k_p - r_p- 1)$, and for  $p\neq p' \in \{1, 2, \cdots, P\}$ the $pp'$-th off-diagonal block of $\bo{\Sigma}_2$ is given in Equation (\ref{corjjsem}) 
 with $\bo{R}_p = {}_{r_p+1}^{k_p-1}\X_{(p)}$. Notice that $\bo{\Sigma}_1$,  defined in Section \ref{sec:asymp_general}, becomes an identity matrix in this setup  as $P_1=P$. Moreover, we need existence of  $n^{-1} \X^{*T} \X^*$ instead of $n^{-1} \bo{T}^{*T} \bo{T}^*$, where $\X^*$ is defined in Section \ref{sec:semiparametric} after Equation (\ref{xp}). 
Then, the following result gives the asymptotic null distribution of the GLR goodness-of-fit test statistic $\lambda_n (H_0^*) = 
\frac{n (RSS_0^* - RSS_1^*)}{RSS_1^*}$.

\begin{corollary}
	Suppose that the limit of $n^{-1} \X^{*T} \X^*$ exists and it is invertible, and regularity conditions (C1)--(C8) in Appendix A hold.  Let us consider the unconstrained model (\ref{model0}) and the null hypothesis $H_0^*$ in (\ref{null2}), where  $m_{p,\btheta_p}$ is a polynomial of degree $r_p$ and $0<r_p<(k_p-1)$ for all $p=1,2,\cdots, P$. Then,  the asymptotic distribution of the GLR goodness-of-fit test statistic, under $H_0^*$, coincides with $\sum_{i=1}^s \lambda_i V_i^2$, where $V_1, V_2, \cdots, V_s$ are i.i.d. standard normal variables, $\lambda_1, \lambda_2, \cdots, \lambda_s$ are non-zero eigenvalues of $\bo{\Sigma}_2$ and $s$ is the rank of  $\bo{\Sigma}_2$.
	\label{theorem:chi2}
\end{corollary}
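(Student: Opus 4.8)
The plan is to obtain this corollary as the special case of Theorem \ref{theorem:chi3} in which every predictor is modeled parametrically and every covariate is modeled nonparametrically, that is, by setting $P_1 = P$ and $Q_1 = 0$. First I would check that the hypotheses and the test statistic of the general theorem reduce to those of the present corollary. When $P_1 = P$, the block ``$m_p(\cdot)=0$ for $p=P_1+1,\ldots,P$'' in (\ref{null3}) is vacuous, so $H_0$ collapses to $H_0^*$ in (\ref{null2}); when $Q_1 = 0$, the parametric covariate terms disappear from the unconstrained model (\ref{mixed123}), which then coincides with the full nonparametric additive model (\ref{model0}). Consequently $RSS_0$ and $RSS_1$ become $RSS_0^*$ and $RSS_1^*$, and $\lambda_n(H_0)=\lambda_n(H_0^*)$.

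Next I would specialize the design matrix and the two limiting matrices. With $P_1=P$, all blocks ${}_{1}^{r_p}\X_{(p)}$ are retained in $\bo{T}^*$, and with $Q_1=0$ none of the blocks ${}_{1}^{r_{P+q}}\bo{Z}_{(P+q)}$ appear; comparing with the definition of $\X^*$ after Equation (\ref{xp}) gives $\bo{T}^*=\X^*$, so the invertibility hypothesis on $n^{-1}\bo{T}^{*T}\bo{T}^*$ becomes exactly the assumed invertibility of $\lim n^{-1}\X^{*T}\X^*$. For the matrices, since $p\le P_1=P$ holds for every $p$, each diagonal block of $\bo{\Sigma}_1$ is the identity of order $(k_p-r_p-1)$, so $\bo{\Sigma}_1=\bo{I}_L$ with $I=\sum_{p=1}^{P}(k_p-r_p-1)=L$, as already noted in the text. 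Likewise every diagonal block of $\bo{\Sigma}_2$ is the identity of order $(k_p-r_p-1)$ and every off-diagonal block uses $\bo{R}_p={}_{r_p+1}^{k_p-1}\X_{(p)}$, which is precisely the $\bo{\Sigma}_2$ defined just before the corollary.

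Substituting $\bo{\Sigma}_1=\bo{I}_L$ into $\bo{\Sigma}_1\bo{\Sigma}_2\bo{\Sigma}_1$ yields $\bo{\Sigma}_2$, so the non-zero eigenvalues and the rank of $\bo{\Sigma}_1\bo{\Sigma}_2\bo{\Sigma}_1$ are exactly those of $\bo{\Sigma}_2$. Invoking the conclusion of Theorem \ref{theorem:chi3} then delivers the stated representation $\sum_{i=1}^{s}\lambda_i V_i^2$ with the $\lambda_i$ the non-zero eigenvalues of $\bo{\Sigma}_2$ and $s=\mathrm{rank}(\bo{\Sigma}_2)$. The only point needing genuine care is the boundary case $Q_1=0$: I must confirm that the derivation in Theorem \ref{theorem:chi3} remains valid when no covariate is modeled parametrically, so that the parametric covariate contribution to $\bo{T}^*$ simply drops out. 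Since that theorem is stated for $Q_1\le Q$ and handles the nonparametric covariate smoothers through the single additive smoother $\bo{W}_{[Z]}$ regardless of how many covariates are parametric, setting $Q_1=0$ is admissible and the reduction goes through unchanged.
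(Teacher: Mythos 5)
Your specialization bookkeeping is accurate as far as it goes: with $P_1=P$ and $Q_1=0$ the hypothesis (\ref{null3}) collapses to $H_0^*$, the unconstrained model (\ref{mixed123}) becomes (\ref{model0}), $\bo{T}^*$ reduces to $\X^*$, $\bo{\Sigma}_1$ becomes the identity (as the paper itself notes just before the corollary), and $\bo{\Sigma}_1\bo{\Sigma}_2\bo{\Sigma}_1=\bo{\Sigma}_2$. The genuine gap is the direction of the logical dependency. In this paper Theorem \ref{theorem:chi3} has no self-contained proof to specialize: its proof in Appendix C consists of asserting the decomposition $RSS_0 - RSS_1 \approx \sum_{p=1}^{P_1} \bo{\epsilon}^T \bigl(\bo{S}_p - \bo{R}_{p,1} (\bo{R}_{p,1}^T \bo{R}_{p,1})^{-1} \bo{R}_{p,1}^T \bigr) \bo{\epsilon} + \sum_{p=P_1+1}^{P} \bo{\epsilon}^T \bo{S}_p^* \bo{\epsilon} + o_p(1)$ and then stating that the theorem ``follows from Corollaries \ref{theorem:chi2} and \ref{theorem:chi}.'' The corollary you are asked to prove is thus one of the two pillars on which the theorem rests, and invoking the theorem to prove it is circular within the paper's architecture. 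Your closing remark that ``setting $Q_1=0$ is admissible and the reduction goes through unchanged'' presupposes exactly the independent proof of the theorem that the paper does not contain.

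What your proposal never supplies is the substantive asymptotic analysis that the paper's own proof of Corollary \ref{theorem:chi2} carries out directly: (i) expressing $RSS_0^*$ through the non-iterative backfitting solution for $\wtheta$ and $\bo{W}_{[Z]}$, which introduces $\bo{A}_n = (\bo{I}_n - \bo{W}_{[Z]})\X^*\bigl(\X^{*T}(\bo{I}_n-\bo{W}_{[Z]})\X^*\bigr)^{-1}\X^{*T}(\bo{I}_n-\bo{W}_{[Z]})$; (ii) using Lemma \ref{lemma:W} to reduce $\bo{A}_n$ to the projection $\X^*(\X^{*T}\X^*)^{-1}\X^{*T}$, and the consistency of $\wtheta$ (due to Opsomer and Ruppert) together with condition (C8) to eliminate cross terms such as $\bo{Y}^{*T}\bo{S}_{[Z]}^{*T}\X^*(\X^{*T}\X^*)^{-1}\X^{*T}\bo{Y}^*$; (iii) the hat-matrix identity $\bo{S}_p = \bo{R}_{p,2}(\bo{R}_{p,2}^T\bo{R}_{p,2})^{-1}\bo{R}_{p,2}^T$ with $\bo{R}_{p,2}={}_{0}^{k_p-1}\X_{(p)}$ and the projection difference $\bo{S}_p - \bo{R}_{p,1}(\bo{R}_{p,1}^T\bo{R}_{p,1})^{-1}\bo{R}_{p,1}^T = \bo{R}_p(\bo{R}_p^T\bo{R}_p)^{-1}\bo{R}_p^T$, an idempotent matrix of rank $k_p-r_p-1$; and (iv) the CLT for $\bo{U}_p = \sigma^{-1}(\bo{R}_p^T\bo{R}_p)^{-1/2}\bo{R}_p^T\bo{\epsilon}$ with the covariance computation yielding precisely the off-diagonal blocks (\ref{corjjsem}) of $\bo{\Sigma}_2$, followed by Slutsky's theorem for the denominator. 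To repair your argument you must either reproduce this direct analysis or first give Theorem \ref{theorem:chi3} an independent proof of its claimed quadratic-form decomposition; as written, the proposal reduces the corollary to a result whose proof in this paper reduces to the corollary.
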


In general, the asymptotic null distribution of the GLR goodness-of-fit test statistic is  a linear combination of chi-square variables. But, the distribution comes out to be a single chi-square if the predictor variables are pairwise independent. In this case, the degrees of freedom of the test statistic becomes $\sum_{p=1}^P (k_p - r_p - 1) $.  

\subsection{Model Utility Test for the Nonparametric Additive Model} \label{sec:GLR}
Let us consider the null hypothesis that there is no association between $Y$ and $X_1, \cdots , X_p$, where $Z_1, \cdots , Z_q$ are  covariates of the nonparametric additive model  (\ref{model0}). The null hypothesis can be written as
\be
H_0^{**} : m_p (\cdot) = 0 \mbox{ for all } p=1,2,\cdots, P.
\label{null}
\ee
Let $\widetilde m_{P+1}(\cdot), \widetilde m_{P+2}(\cdot), \cdots , \widetilde m_{P+Q}(\cdot)$ be the backfitting estimators under $H_0^{**}$.
Then, the residual sum of squares, under $H_0^{**}$, is given by
\be
RSS_0^{**} =\sum_{i=1}^n \left(Y_i - \hat\alpha - \sum_{q=1}^Q \widetilde m_{P+q}(Z_{qi})\right)^2.
\ee
Under the unconstrained nonparametric additive model  (\ref{model0}), the residual sum of square  is given in Equation (\ref{rssh0}). So, the GLR nonparametric test statistic becomes $\lambda_n (H_0^{**}) = 
\frac{n (RSS_0^{**} - RSS_1^*)}{RSS_1^*}$. 
Suppose $\bo{\Sigma}_1$ is a $K\times K$ dimensional block diagonal matrix whose $p$-th diagonal block is $\left(\bo{I}_{k_p} - c_p c_p^T \right)$  for  $p=1,2,\cdots,P$. Define another $K\times K$ dimensional  block  matrix $\bo{\Sigma}_2$, whose $p$-th diagonal block is an identity matrix of order $k_p$, and for  $p\neq p' \in \{1, 2, \cdots, P\}$ the $ij$-th element of the $pp'$-th off-diagonal block of $\bo{\Sigma}_2$ is given by
\be
 \sigma_{pp',ij} = \frac{1}{\sqrt{c_{pi} c_{p'j}}} P( X_p = x_{pi}, X_{p'} = x_{p'j} ),
 \label{corjj_temp}
\ee
where $i=1,2,\cdots, k_p$ and $j=1,2,\cdots, k_{p'}$.
The following result gives the asymptotic distribution of the GLR nonparametric test statistic for testing the null hypothesis $H_0^{**}$.
\begin{corollary}
 Let us  assume that regularity conditions (C1)--(C8) in Appendix A hold.  Let us consider the unconstrained model (\ref{model0}) and the null hypothesis $H_0^{**}$ in (\ref{null}). Then, under $H_0^{**}$, the asymptotic distribution of the GLR nonparametric test statistic coincides with $\sum_{i=1}^s \lambda_i V_i^2$, where $V_1, V_2, \cdots, V_s$ are i.i.d. standard normal variables, $\lambda_1, \lambda_2, \cdots, \lambda_s$ are non-zero eigenvalues of $\bo{\Sigma}_1\bo{\Sigma}_2 \bo{\Sigma}_1$ and $s$ is the rank of  $\bo{\Sigma}_1\bo{\Sigma}_2 \bo{\Sigma}_1$.
 \label{theorem:chi}
\end{corollary}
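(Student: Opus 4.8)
The plan is to obtain Corollary~\ref{theorem:chi} as the specialization of Theorem~\ref{theorem:chi3} with $P_1=0$ and $Q_1=0$: under $H_0^{**}$ in (\ref{null}) no predictor is modeled parametrically (each $m_p$ is tested against the zero function rather than against a polynomial family), and since we work from the full nonparametric model (\ref{model0}) every covariate enters nonparametrically. With this identification the conclusion of Theorem~\ref{theorem:chi3} transfers verbatim --- the statistic converges to $\sum_{i=1}^s\lambda_i V_i^2$ with the $\lambda_i$ the nonzero eigenvalues of $\bo{\Sigma}_1\bo{\Sigma}_2\bo{\Sigma}_1$ --- so the whole task reduces to checking that $\bo{\Sigma}_1$ and $\bo{\Sigma}_2$ collapse to the forms stated just before the corollary. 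As in the theorem, I would first record that the covariate smoothers cancel asymptotically in $RSS_0^{**}-RSS_1^*$ while $n^{-1}RSS_1^*\to\sigma^2$, so that both $\bo{\Sigma}_1$ and $\bo{\Sigma}_2$ are built solely from the discrete predictors $X_1,\dots,X_P$ and the design densities of the $\Z$'s drop out.

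The computational heart is to evaluate the blocks in the standardized-indicator basis, which is the natural parametrization once the predictors are purely nonparametric. For each $p$ I would set $\psi_{pj}(x)=\mathbf{1}\{x=x_{pj}\}/\sqrt{c_{pj}}$ and let $\bo{\Psi}_p$ be the $n\times k_p$ design with entries $\psi_{pj}(X_{pi})$. The law of large numbers then gives $n^{-1}\bo{\Psi}_p^T\bo{\Psi}_p\to\bo{I}_{k_p}$ and $n^{-1}\bo{\Psi}_p^T\bo{\Psi}_{p'}\to(\sigma_{pp',ij})$ with $\sigma_{pp',ij}=P(X_p=x_{pi},X_{p'}=x_{p'j})/\sqrt{c_{pi}c_{p'j}}$, which is exactly (\ref{corjj_temp}); hence the diagonal blocks of $\bo{\Sigma}_2$ are $\bo{I}_{k_p}$. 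Moreover the constant function $1=\sum_j\sqrt{c_{pj}}\,\psi_{pj}$ has coordinate vector $c_p=(\sqrt{c_{p1}},\dots,\sqrt{c_{pk_p}})^T$ in this orthonormal basis, so the identifiability constraint $E(m_p(X_p))=0$ is enforced by the projection $\bo{I}_{k_p}-c_p c_p^T$, which is precisely the $p$-th block of $\bo{\Sigma}_1$.

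The step that needs real justification --- and the one I expect to be the main obstacle --- is that replacing the polynomial design ${}_{0}^{k_p-1}\X_{(p)}$ appearing in (\ref{corjjsem}) by $\bo{\Psi}_p$ leaves the limiting distribution unchanged. Both designs span the same $k_p$-dimensional space of functions of $X_p$, so ${}_{0}^{k_p-1}\X_{(p)}=\bo{\Psi}_p B_p$ for an invertible $B_p$ whose first column is $c_p$. I would then show that passing from one full-rank design to the other multiplies each whitened cross-block by the orthogonal polar factor $Q_p=B_p(B_p^T B_p)^{-1/2}$, so that \emph{simultaneously} $\bo{\Sigma}_2\mapsto Q^T\bo{\Sigma}_2 Q$ and the centering projection $\bo{\Sigma}_1\mapsto Q^T\bo{\Sigma}_1 Q$ with the common block-orthogonal matrix $Q=\mathrm{diag}(Q_1,\dots,Q_P)$; conjugation by an orthogonal matrix preserves the spectrum, so $\bo{\Sigma}_1\bo{\Sigma}_2\bo{\Sigma}_1$ has the same eigenvalues in either basis. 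The delicate point is exactly this simultaneity: the constant (centering) direction must be tracked through the change of basis in lockstep with the correlation structure, since it is the fact that $\bo{\Sigma}_1$ and $\bo{\Sigma}_2$ are conjugated by the \emph{same} $Q$ that keeps the eigenvalues invariant. Evaluating in the indicator basis then yields the stated $\bo{\Sigma}_1$ and $\bo{\Sigma}_2$, and the corollary follows from Theorem~\ref{theorem:chi3}. As a consistency check, when the predictors are pairwise independent one has $\sigma_{pp',ij}=\sqrt{c_{pi}c_{p'j}}$, i.e.\ the off-diagonal blocks equal $c_p c_{p'}^T$; since $(\bo{I}_{k_p}-c_p c_p^T)c_p=0$, the matrix $\bo{\Sigma}_1\bo{\Sigma}_2\bo{\Sigma}_1$ reduces to the block-diagonal projection $\mathrm{diag}(\bo{I}_{k_p}-c_p c_p^T)$, recovering the single chi-square on $\sum_p(k_p-1)$ degrees of freedom.
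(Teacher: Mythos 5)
Your proposal is circular within this paper's logical structure, and that is the decisive gap. You derive Corollary~\ref{theorem:chi} as the $P_1=0$, $Q_1=0$ specialization of Theorem~\ref{theorem:chi3}, but the paper's deductive order is exactly the reverse: Corollary~\ref{theorem:chi} is proved first and directly (it carries the longest proof in Appendix~C), Corollary~\ref{theorem:chi2} is proved next, and the proof of Theorem~\ref{theorem:chi3} consists of a single decomposition of $RSS_0-RSS_1$ followed by ``the proof of the theorem follows from Corollaries~\ref{theorem:chi2} and~\ref{theorem:chi}.'' So invoking the theorem here presupposes the very statement you are asked to prove, unless you supply an independent proof of the theorem --- which the proposal does not. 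All the probabilistic content you defer to the theorem is in fact the content of this corollary's proof: the smoother approximations (Lemmas~\ref{lemma:ss0}--\ref{lemma:Sm} and~\ref{lemma:A}, resting on conditions (C1)--(C8), in particular (C8) to kill the $\bo{m}^T\bo{S}_p^*\bo{m}$ and $\bo{\epsilon}^T\bo{S}_p^*\bo{m}$ cross terms) giving $RSS_0^{**}-RSS_1^*=\sum_{p=1}^P\bo{\epsilon}^T\bo{S}_p^*\bo{\epsilon}+o_p(1)$; the CLT applied to the block sums $U_{pj}=\bo{e}_{pj}^T\bo{\epsilon}/(\sigma\sqrt{n_{pj}})$ yielding $\sigma^{-2}\bo{\epsilon}^T\bo{S}_p^*\bo{\epsilon}\equiv\bo{U}_p^T(\bo{I}_{k_p}-c_pc_p^T)\bo{U}_p$ asymptotically; the SLLN computation of ${\rm Corr}(U_{pj},U_{p'j'})$ that produces exactly (\ref{corjj_temp}); and $n^{-1}RSS_1^*\to\sigma^2$ with Slutsky to pass from $\sigma^2$ to the denominator of $\lambda_n(H_0^{**})$. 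A valid proof must carry out this analysis directly, as the paper does; your sentence ``as in the theorem, I would first record that the covariate smoothers cancel'' attributes to the theorem facts whose only justification in the paper is the proof of this corollary.

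The purely linear-algebraic part of your argument is, taken on its own, correct and even sharper than what the paper records. The change of basis ${}_{0}^{k_p-1}\X_{(p)}=\bo{\Psi}_pB_p$ with invertible $B_p$ whose first column is $c_p$, the resulting orthogonal polar factors $Q_p=B_p(B_p^TB_p)^{-1/2}$, and your insistence that eigenvalue invariance requires conjugating $\bo{\Sigma}_1$ and $\bo{\Sigma}_2$ by the \emph{same} block-orthogonal $Q$ are all right; indeed, tracked carefully this shows the centering projection in the whitened monomial basis is $\bo{I}_{k_p}-v_pv_p^T$ with $v_p=Q_p^Tc_p\neq c_p$ in general, so the theorem read literally (indicator-basis $\bo{\Sigma}_1$ next to monomial-basis blocks from (\ref{corjjsem}) for $p>P_1$) does not hand you the corollary's matrix verbatim --- your ``simultaneity'' bookkeeping is needed even to make the specialization well posed, which is further evidence that the theorem is not a safe black box here. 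Your independence sanity check (off-diagonal blocks $c_pc_{p'}^T$ annihilated by $\bo{I}_{k_p}-c_pc_p^T$, recovering the $\chi^2$ with $\sum_p(k_p-1)$ degrees of freedom) is also correct and matches Corollary~\ref{corollary:chi3}. But none of this substitutes for the missing asymptotic work: as written, the proposal reduces to ``apply a theorem whose only proof in the paper cites this corollary,'' and that is a genuine gap.
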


If the  predictor variables are pairwise independent, the GLR nonparametric test statistic follows the chi-square distribution with degrees of freedom $\sum_{p=1}^P (k_p -1)$ under the null hypothesis. The setup of the GLR nonparametric test is similar to the classical ANCOVA when   predictors are categorical variables. In ANCOVA predictors are called treatments or blocks, and covariates are modeled parametrically. The goal is to test the treatment or block effect in the design of experiment. So, the GLR test is  generalization the classical ANCOVA, where covariates are modeled nonparametrically. Moreover, we do not need to assume that the error distribution is normal.

\section{Simulation} \label{sec:simulation}
In the first part of the simulation, we  check the null distribution of the GLR nonparametric test statistic for the hypothesis given in (\ref{null}). 
Then, we demonstrate the power of the GLR test and compare it with the F-test associated with the nested linear models in the regression analysis. And finally, the performance of the  general GLR test for testing $H_0$ in (\ref{null3}) under the semiparametric model is presented. All numerical examples in this paper are performed using R software. The R code for the GLR test  will be provided on request.

\begin{figure}
 \centering%
 \begin{tabular}{cc}
 (a) & (b) \\
 \includegraphics[height=7cm, width=7.5cm]{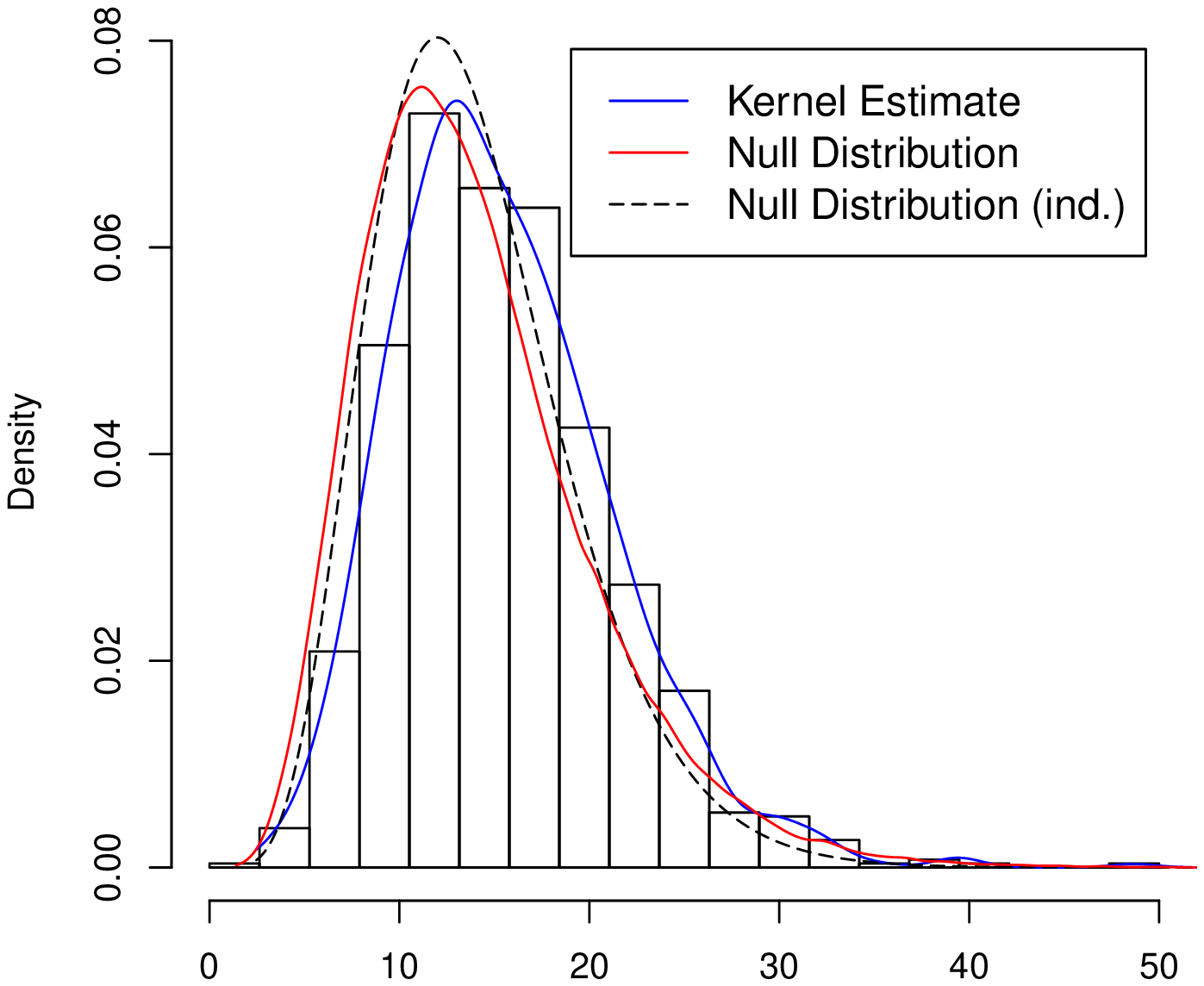} &
  \includegraphics[height=7cm, width=7.5cm]{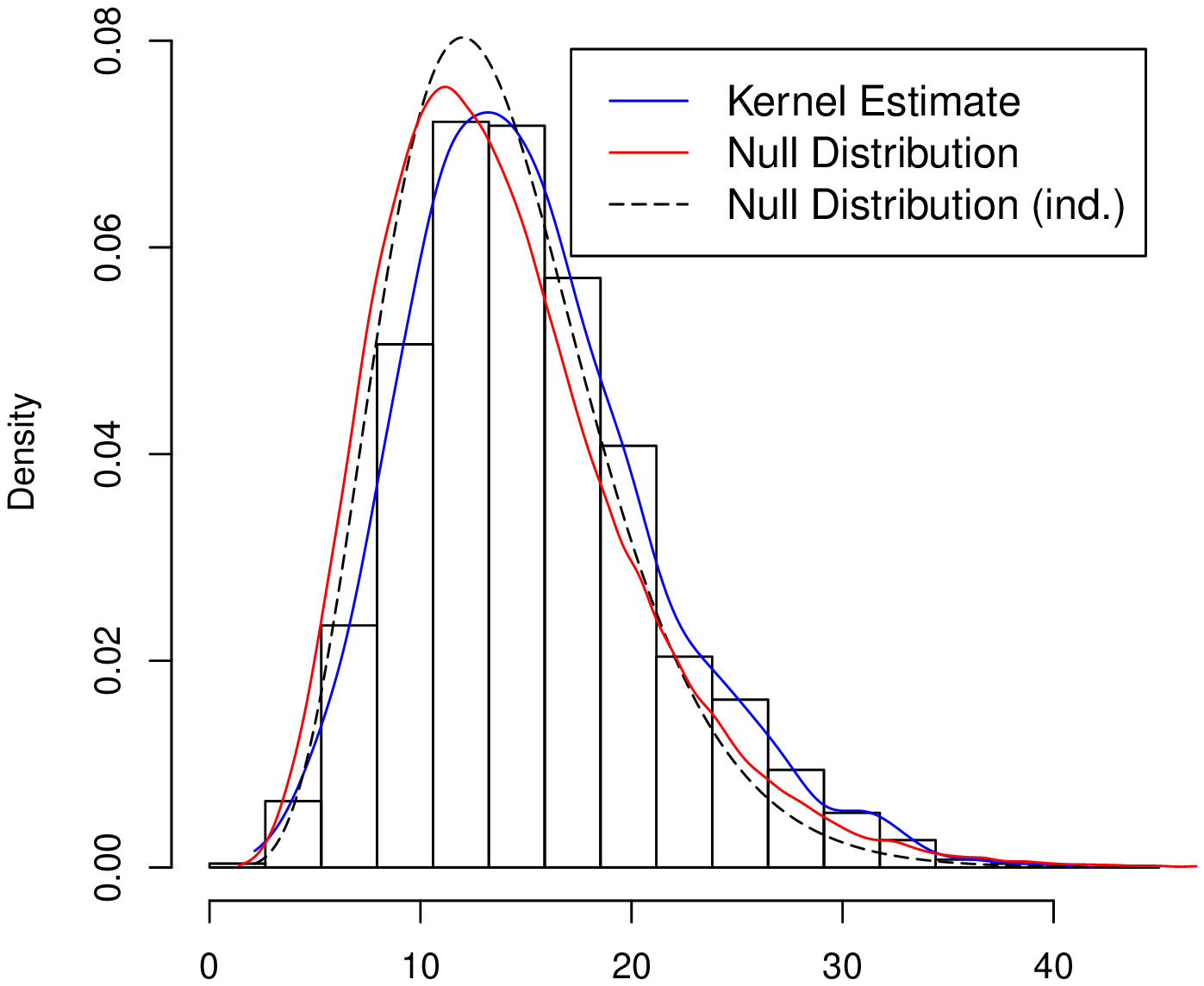} 
 \end{tabular}%
 \caption{The observed kernel density estimate of the GLR test, the fitted theoretical null distribution and its approximation assuming all predictors are independent. The error distribution is (a) N(0,1), and (b) $\chi^2(5)$.}
 \label{fig:null}
 \end{figure}

\subsection{Null Distribution}
Let us consider the nonparametric additive model defined in (\ref{model0}), where $P=5$ and $Q=4$. All five random variables in $\X$ and the first two random variables in $\Z$ are discrete. The number of discrete values taken by $X_1, X_2, \cdots, X_5$ and $Z_1, Z_2$ are 3, 4, 5, 4, 3 and 5, 4, respectively, starting from zero with an increment one. The probabilities for each of these variables are generated independently from a uniform (0,1) distribution, and then they are standardized so that the total probabilities become one. We have discarded very low values of probability ($<0.05$) to avoid very small or zero frequencies. To make the situation general, we have taken few independent and few dependent variables. $X_1$ and $X_2$ are independent random variables; $X_3, X_4, X_5$ form a group of dependent variables, but they are independent of $X_1$ and $X_2$.  Similarly, $(Z_1, Z_2)$ and $(Z_3, Z_4)$ are two independent groups.  The covariance matrices for the dependent groups are also generated randomly. Finally, these parameters are kept fixed throughout the entire simulation. To generate a set of dependent discrete variables, first, a random sample is drawn from a multivariate normal distribution with a fixed covariance matrix, then   observations are discretized based on their probabilities. Here, we are interested in testing the null hypothesis $H_0^{**}: m_1=m_2=\cdots = m_5=0$. To show the null distribution we have taken $m_p=0$ for all $p=1,2, \cdots, 5$. For  covariates, the $m_d$ functions are taken as
\be
m_6(Z_1) =  Z_1, \ \ m_7(Z_2) = Z_2^2, \ \ m_8(Z_3) = Z_3^2, \mbox{ and } m_9(Z_4) = sin(\pi Z_4).
\label{z_fun}
\ee
Notice that these functions are not centered at mean zero. However, it does not violate  assumptions of model (\ref{model0}) as  constants needed to center those functions  contribute to the intercept term $\alpha$. The smoother using Nadaraya-Watson estimator \citep{MR0185765} is taken to smooth $Z_3$ and $Z_4$. We used the default bandwidth parameter for the kernel density as $h = 1.06sn^{-1/5}$, where $s$ is the standard deviation of corresponding variable, and $n$ is the sample size. We have taken two different types of error distributions -- the standard normal distribution and the chi-square distribution with 5 degrees of freedom.  A sample of size 500 from $(Y, \X, \Z)$ is generated, and this exercise is replicated  1,000 times. The histograms of the observed GLR nonparametric test statistic  are presented in Figure \ref{fig:null}, and the corresponding kernel density estimates are also plotted. 
%
%
The plots show that the empirical distributions match with the theoretical null distribution obtained from Corollary \ref{theorem:chi}. The plots give an indication of inflated level, but further simulation studies show that the convergence improves as sample size increases. In the same figure, we have  plotted the density of the null distribution of the GLR  test  under independence assumption on the predictors. It is a single chi-square distribution with degrees of freedom $\sum_{p=1}^P (k_p -1) = 14$. It  gives a good approximation of the null distribution. In fact, further simulation studies show that, unless some predictors are strongly correlated, this approximation works reasonably well. 
 In Figures \ref{fig:null}(a) and (b) the error distributions are different in two plots, so they  demonstrate that the null distribution of the GLR  test  does not depend on the choice of the error distribution. 


\begin{figure}
\centering%
\begin{tabular}{cc}
\vspace{-.5cm} (a) & (b)\\
 \includegraphics[height=6cm, width=7.5cm]{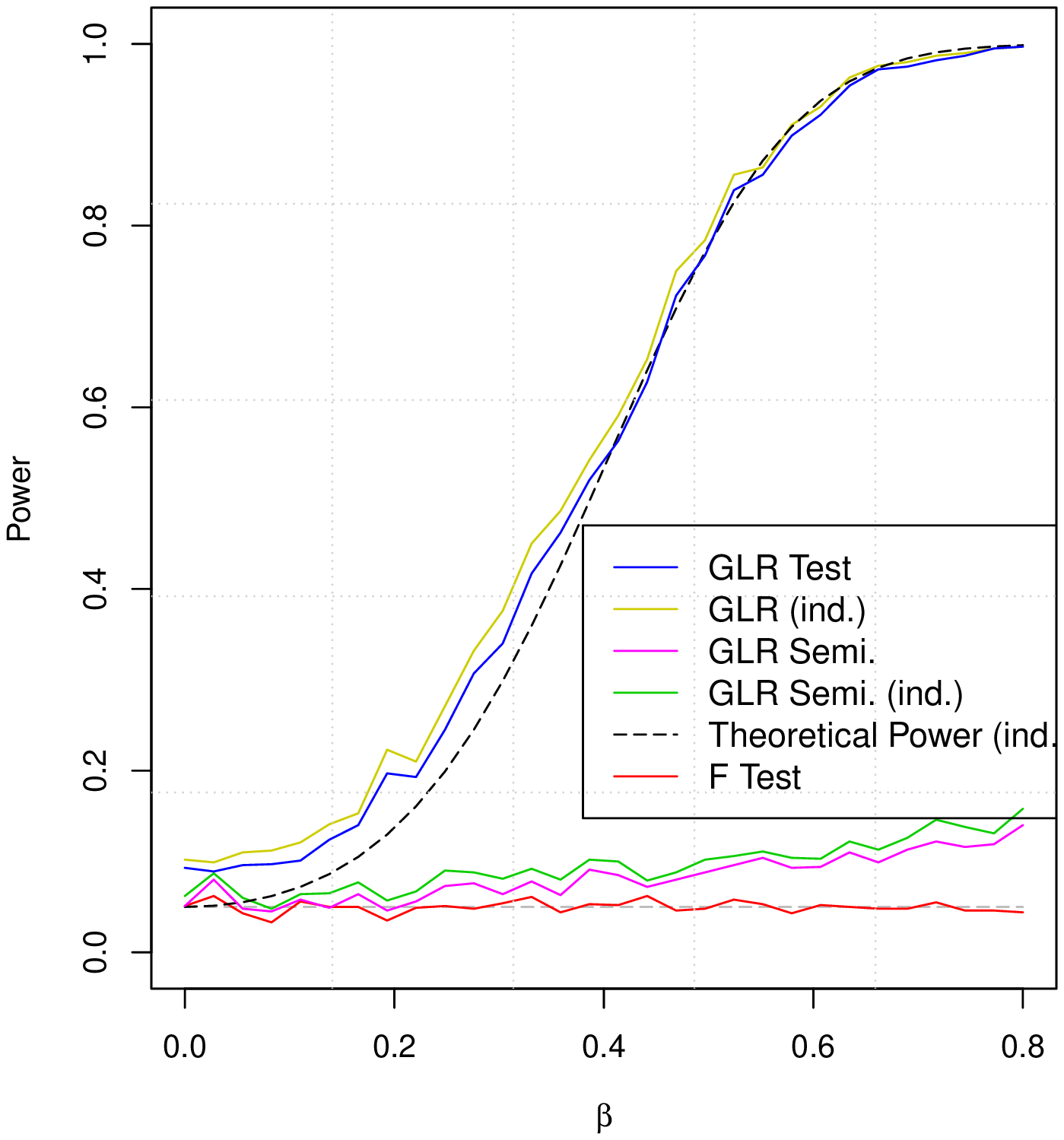}  &  \includegraphics[height=6cm, width=7.5cm]{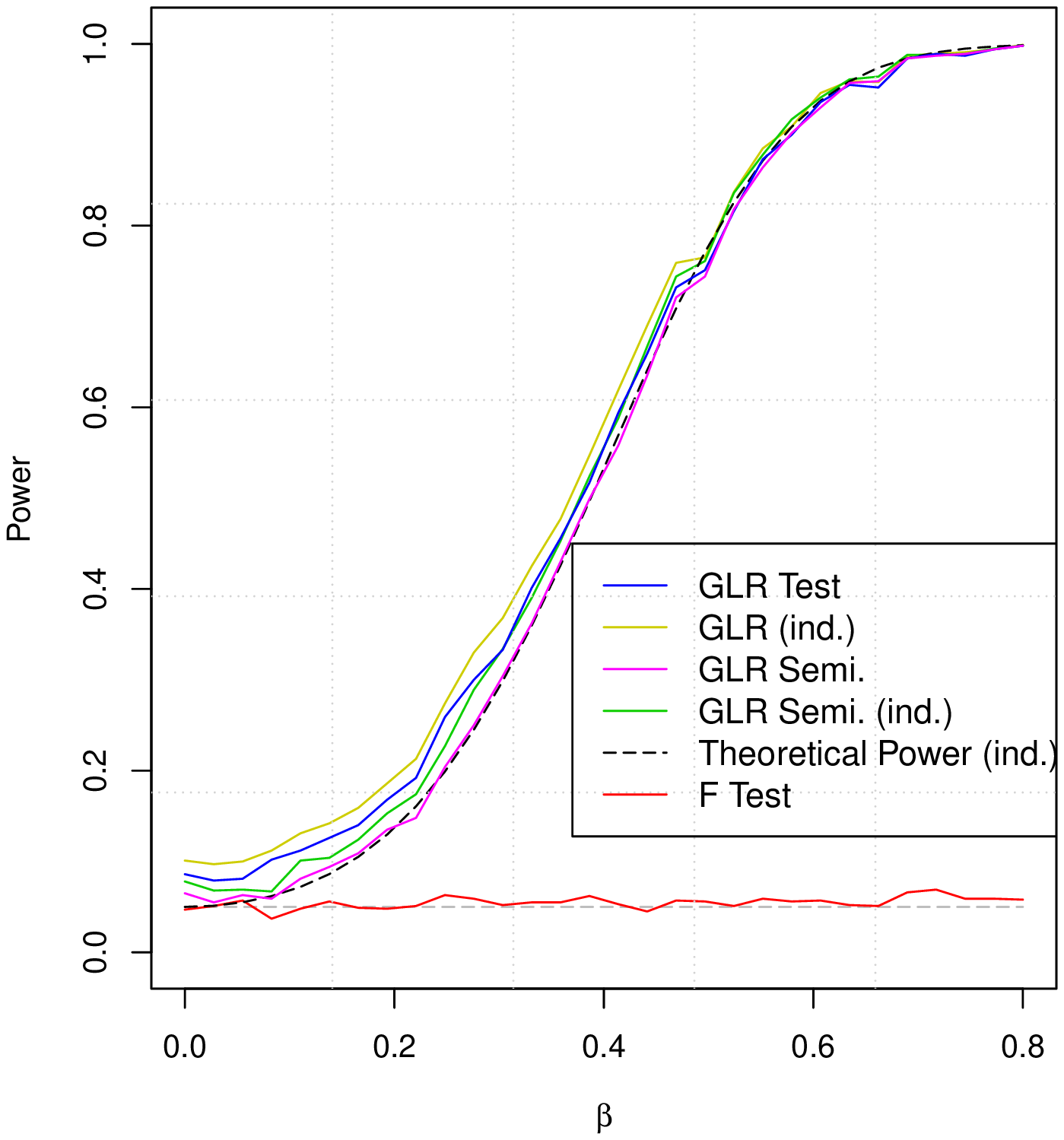} \\
\vspace{-.5cm} (c) & (d)\\
 \includegraphics[height=6cm, width=7.5cm]{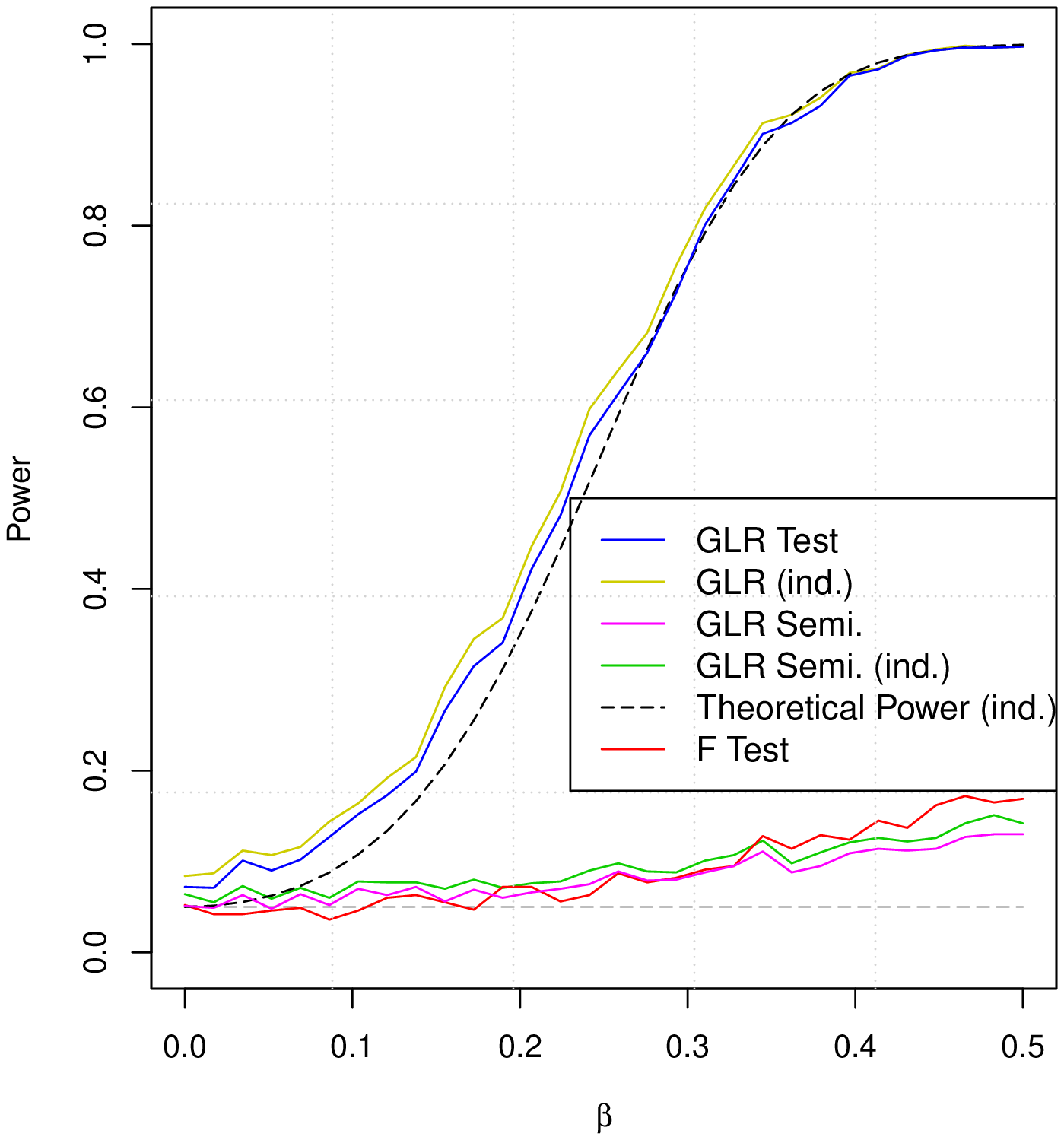} &
 \includegraphics[height=6cm, width=7.5cm]{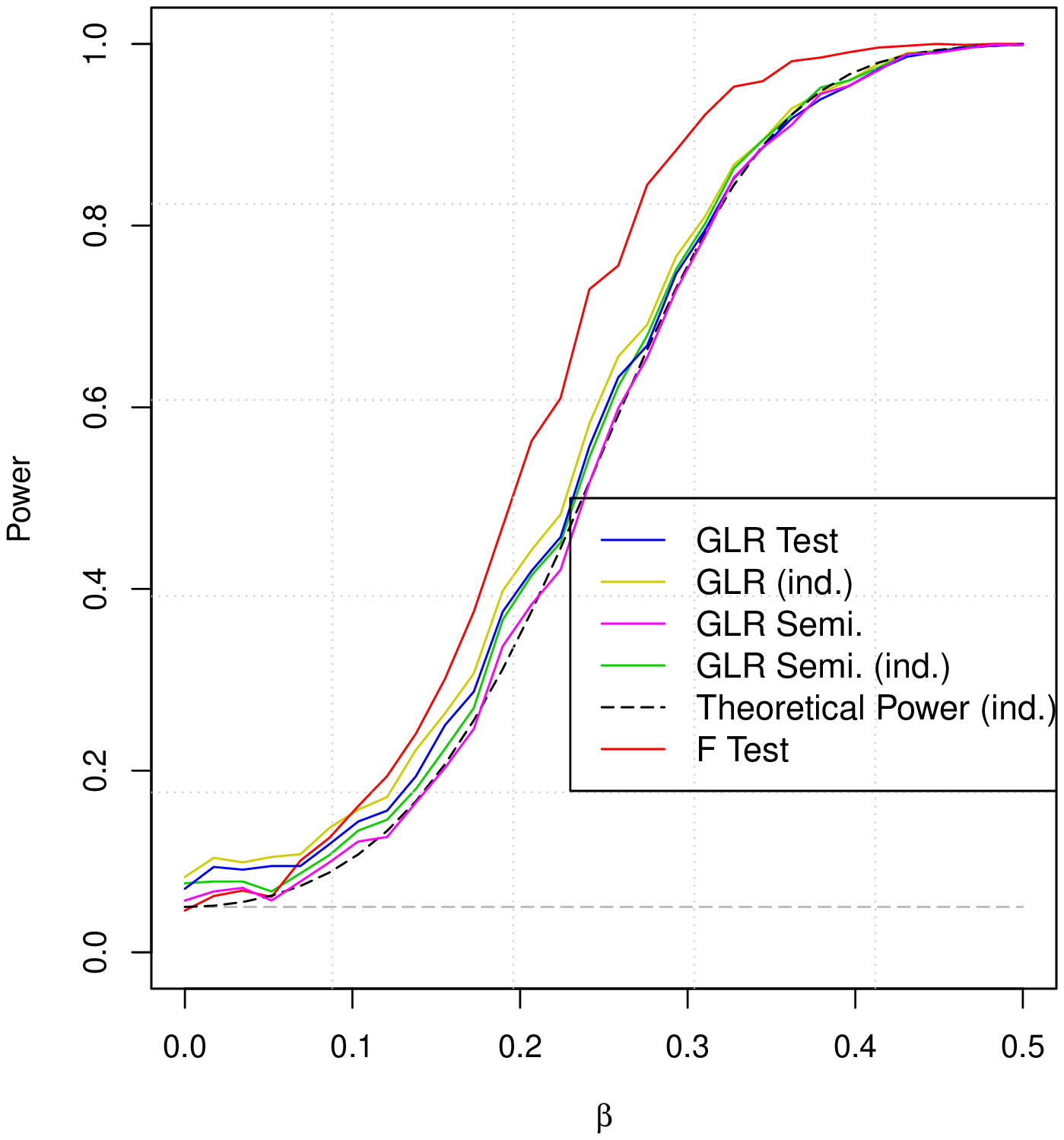} 
\end{tabular}%
\caption{Simulated powers of the F-test and different GLR tests when -- (a) both $Y,X$ and $Y,Z$ are nonlinear, (b) $Y,X$ is nonlinear but $Y,Z$ linear, (c)  $Y,X$ is linear but $Y,Z$ nonlinear, and (d) both $Y,X$ and $Y,Z$ are linear.}
\label{fig:power}
\end{figure}

\subsection{Power Function} 
In the power calculation for testing $H_0^{**}: m_1=m_2=\cdots = m_5=0$, we have taken the same setup of the previous example, except for the distribution of $X_1$ and the corresponding $m_1$ function. Here, we have taken
\be
 m_1(x_1) = \beta (x_1 - 0.75)^2,
 \label{m3}
\ee
and the distribution of $X_1$ is given by
$P(X_1=0) = p^2, \ P(X_1=1) = 2pq \mbox{ and } P(X_1=2) = q^2,$
where $p=1-q=0.75$. So, it violates the null hypothesis $H_0^{**}$ if $\beta \neq 0$. 
 As we compare the power of the GLR test and the classical F-test associated with the nested linear models, the error term is generated from the standard normal distribution to make all results comparable.  For each value of $\beta$, we  simulated 500 samples and  repeated it 1,000 times to calculate the observed power. The observed power   is the proportion of the test statistics greater than the corresponding critical value at 5\%  level of significance obtained from Corollary \ref{theorem:chi}. In Figure \ref{fig:power}(a), the GLR test shows a good power, but the F-test completely fails in this situation. The GLR test is slightly anti-conservative at null, which was also reflected in Figure \ref{fig:null}. In the same plot, we presented the observed powers of few other tests including the GLR test under independence assumption, abbreviated as GLR (ind.). Its performance is very similar to the original GLR test although predictors were not pairwise independent. We also plotted the theoretical power function of the GLR test calculated from Corollary \ref{corollary:power2} under the independence assumption. The observed and the theoretical power functions are very close to each other. 

If we fit a linear regression model in this setup, the breakdown situation of the F-test is apparent as the least square estimates of the regression coefficients vanish in the large sample sizes. Conditioning on the other variables the expected value of the least square estimate of the regression coefficient corresponding to $X_1$ turns out to be
\be
\beta_{LS} = \frac{Cov(X_1,Y)}{Var(X_1)} = - p m_1(0) +  (1 -2q) m_1(1)  +  q m_1(2),
\label{beta_ls}
\ee
which simplifies to zero for all values of $\beta$ in Equation (\ref{m3}) when $p=0.75$. Thus, $H_0^{**}$ seems to be true for all values of $\beta$ with respect to a linear model, and the power function for the F-test centers around the nominal level of the test. On the other hand, the full additive model successfully captures the nonlinear relationship; and the power of the GLR test tends to one as $\beta$ increases.

In the same plot, we presented the observed power function of the GLR semiparametric test (abbreviated as GLR Semi.) assuming that all covariates are linearly related with $Y$ (although it is not true), where the predictors are modeled nonparametrically. As the covariates are highly nonlinear, including a $\sin$ function, the semiparametric test breaks down. Notice that the distribution of the GLR test statistic does not change if the covariates are modeled parametrically, so the theoretical critical value of the GLR semiparametric test is same as the GLR nonparametric test, and it is obtained from Corollary \ref{theorem:chi}. The GLR Semi. (ind.) test in Figure \ref{fig:power}(a) is the approximation of the GLR semiparametric test under independence assumption on the predictor variables. The performance of these two tests are similar to the F-test. It shows that a wrong model of the covariates may  cause severe power loss.

We have investigated few more cases by generating different relationships between $Y,X$ and $Y,Z$. Figure \ref{fig:power}(b) presents  power functions where $Y,X$ is nonlinear as given in Equation (\ref{m3}), but $Y,Z$ is linear. The functions corresponding to the linear relationships between $Y$ and $Z$ are taken simply as $m_d(z) = z$ for $d=6,7,8,9$ instead of nonlinear functions in Equation (\ref{z_fun}). Here, we get the similar results from the GLR and F-test. The GLR semiparametric test gives almost equal power as described by its theoretical power function derived from Corollary \ref{corollary:power2}. In fact, in this case, the theoretical power of the GLR semiparametric test is same as the  GLR nonparametric test. But the advantage of semiparametric modeling is that its finite  sample performance is better than the full nonparametric test, if the modeling of the parametric part is correct. For this reason, the nonparametric GLR test is showing slightly inflated level, whereas the GLR semiparametric test properly maintains the level of the test.

In Figure \ref{fig:power}(c), we have plotted the power functions when $Y,X$ is linear by taking $m_1(x) = \beta x$ instead of Equation (\ref{m3}),  but $Y,Z$ is nonlinear as given in Equation (\ref{z_fun}). Even if the F-test successfully models the relationship between $Y$ and $X$, its power is almost unchanged as it fails to model the relationship between $Y$ and $Z$. Similarly, the GLR semiparametric test is  also showing poor power.  

Finally, the power functions of these tests, when both the relationships between $Y,X$ and $Y,Z$ are linear, are plotted in Figure \ref{fig:power}(d). Here, all assumptions of the F-test are satisfied, so it is the most powerful  among all unbiased tests.  It is interesting to notice that the power of all GLR tests are very competitive with the F-test. Therefore, even if the both relationships are linear, we do not expect to lose a significance amount of power by conducting the GLR test. These simulation results show that it is better to use the GLR test unless we are confident of a suitable parametric model. If some assumptions of the parametric model are violated, the F-test may break down. On the other hand, the GLR test produces very high power almost in all situations.

\subsection{The Goodness-of-Fit Test}
In a similar setup, we have studied the general GLR test including the goodness-of-fit testing problem for the semiparametric model. We test the null hypothesis (\ref{null3}) that there exists a linear relationship between $Y$ and $X_2$, but there is no effect of other components, i.e., $m_1=m_3=m_4=m_5=0$. We did not assume any parametric model for the covariates. In this simulation, we have taken $m_2(x_2) = \frac{1}{2} x_2$ and the same $m_1$ as given in Equation (\ref{m3}). So, the null hypothesis is true when $\beta=0$, and power of the GLR  test should increase as $\beta$ deviates from zero. The other setup for the simulation is same as the previous simulations including the functions for the covariates as given in Equation (\ref{z_fun}). Figure \ref{fig:gof}(a) shows that the observed and theoretical null distributions are close to each other.  The power functions of different tests are given in Figure \ref{fig:gof}(b). All tests in this plot are semiparametric tests, however, to make similarity with the previous simulation, we denote `GLR Semi.' when all covariates are assumed to be linear. `GLR Test' refers to the main semiparametric test whose null distribution is derived from Theorem \ref{theorem:chi3} without assuming that covariates are linearly modeled. Similarly `GLR (ind.)' is the approximation of this test by assuming that all predictors are pairwise independent. The black dotted line in the plot is the approximate power function derived for Corollary \ref{corollary:power2} that assumes  all predictors are pairwise independent. The main GLR test maintains the nominal level of the test and gives good power when the null hypothesis is not true. GLR (ind.) shows slight inflated power, however, it gives good approximation of the original test. GLR Semi. and GLR Semi. (ind.) fail to produce any significant power as the linearity assumptions on covariates are not satisfied.     

 \begin{figure}[t]
 \centering%
 \begin{tabular}{cc}
 \vspace{-.5cm} (a) & (b)\\
 \includegraphics[height=7cm, width=7.5cm]{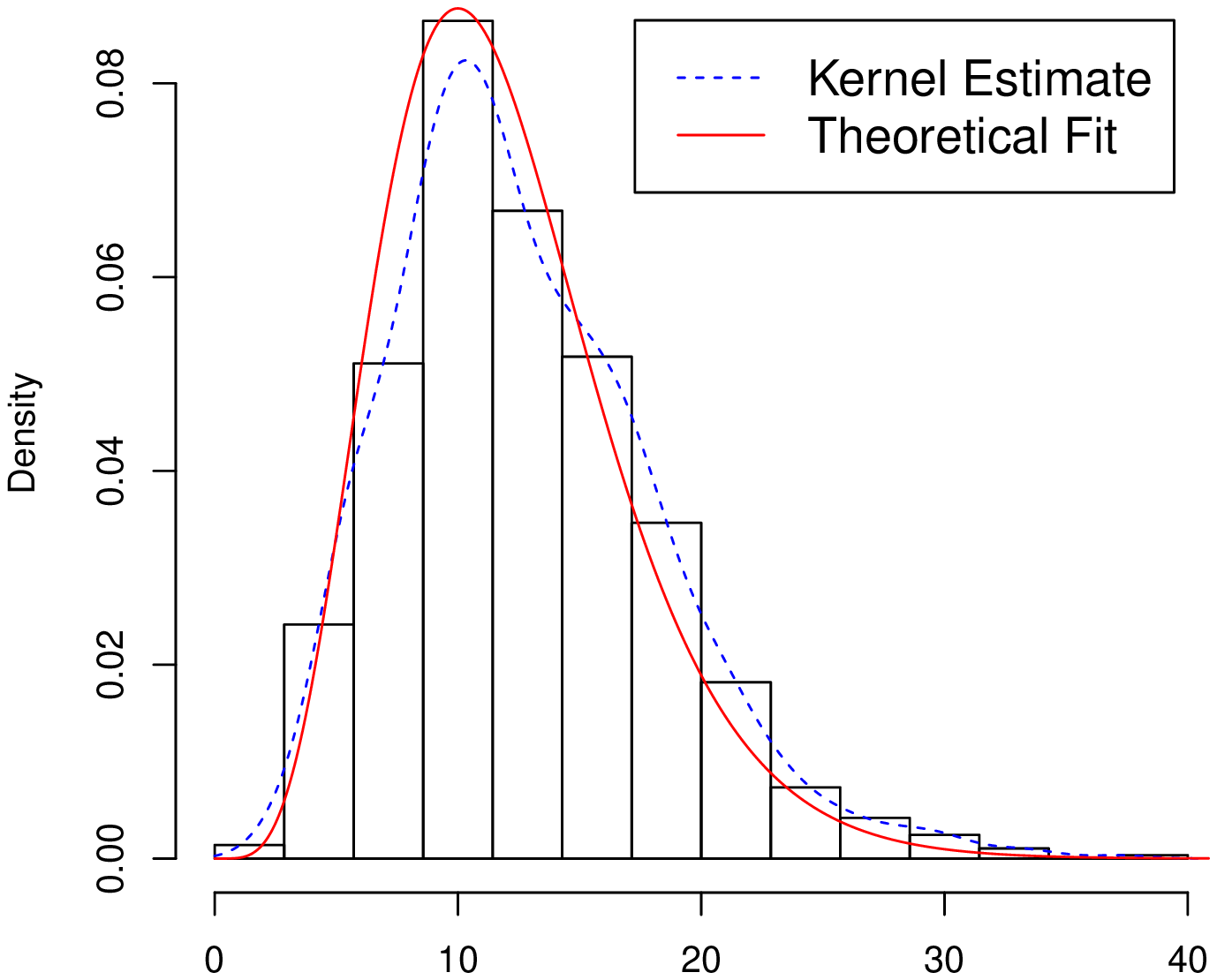} &
  \includegraphics[height=7cm, width=7.5cm]{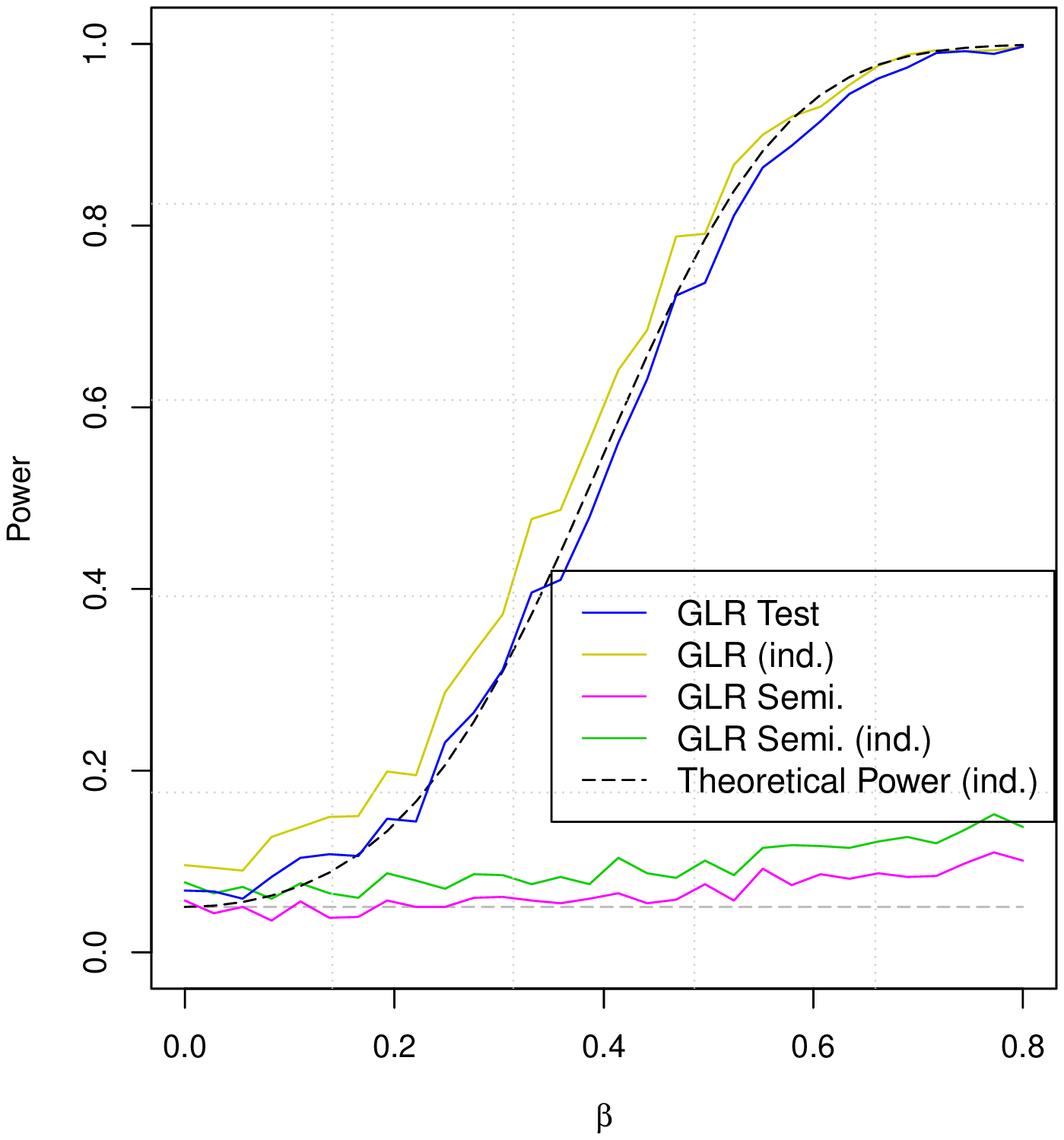} 
 \end{tabular}%
 \caption{(a) The observed and theoretical null distributions of the GLR semiparametric test statistic for the goodness-of-fit problem. (b) The power functions of different GLR semiparametric test statistics.}
 \label{fig:gof}
 \end{figure}

As a whole, these simulation results in Section \ref{sec:simulation} give enough justification for the theoretical results derived in the paper. These GLR tests are simple to calculate and produce a good power. As a virtue of the nonparametric method the tests do not depend on the error distribution of the model. For the model utility test, the GLR  test gives better power than the classical F-test when the parametric modeling is not appropriate. And even if the parametric model holds good, the GLR test produces a comparative power. The GLR test can be further  simplified if we assume that the predictors are pairwise independent. In this case, the test is slightly anti-conservative, but overall the approximation is good unless some predictors are strongly correlated. 


\section{Real Data Example} \label{sec:real}
In this section, we apply the GLR test to analyze the diamonds data-set used in \cite{ggplot2}. This data-set contains the price (in 2008 US dollars) and other attributes of 53,940 diamonds. The attributes include the four C's of diamond quality -- cut, color, clarity and carat. There are three main  physical measurements $x$, $y$ and $z$  -- the largest length, width, and height of a diamond, respectively.  The data-set has other two physical measurements - depth and table, but we have not included them in this analysis as they are functions of $x$, $y$ and $z$. Carat is a unit of mass used for measuring gemstones and pearls. Cut is an objective measure of a diamond's light performance what we generally think of as sparkle. Cut, color and clarity are categorical variables, and other variables are continuous. There are five categories of cut - Fair, Good, Very Good, Premium and Ideal; and the percentage of each diamonds in this data-set are 2.98,  9.10, 22.30, 25.57 and 39.95, respectively. Color has seven categories D (best) to J (worst) with 12.56\%, 18.16\%, 17.69\%, 20.93\%, 15.40\%, 10.05\% and  5.21\%, respectively. Clarity contains  eight categories I1 (worst),  SI2,  SI1,  VS2,  VS1, VVS2, VVS1 and IF (best) with 1.37\%, 17.04\%, 24.22\%, 22.73\%, 15.15\%,  9.39\%,  6.77\% and 3.32\%, respectively.

\begin{figure}
	\centering%
	\begin{tabular}{cc}
		\vspace{-.5cm} (a) & (b) \\
		\includegraphics[height=5cm, width=7.5cm]{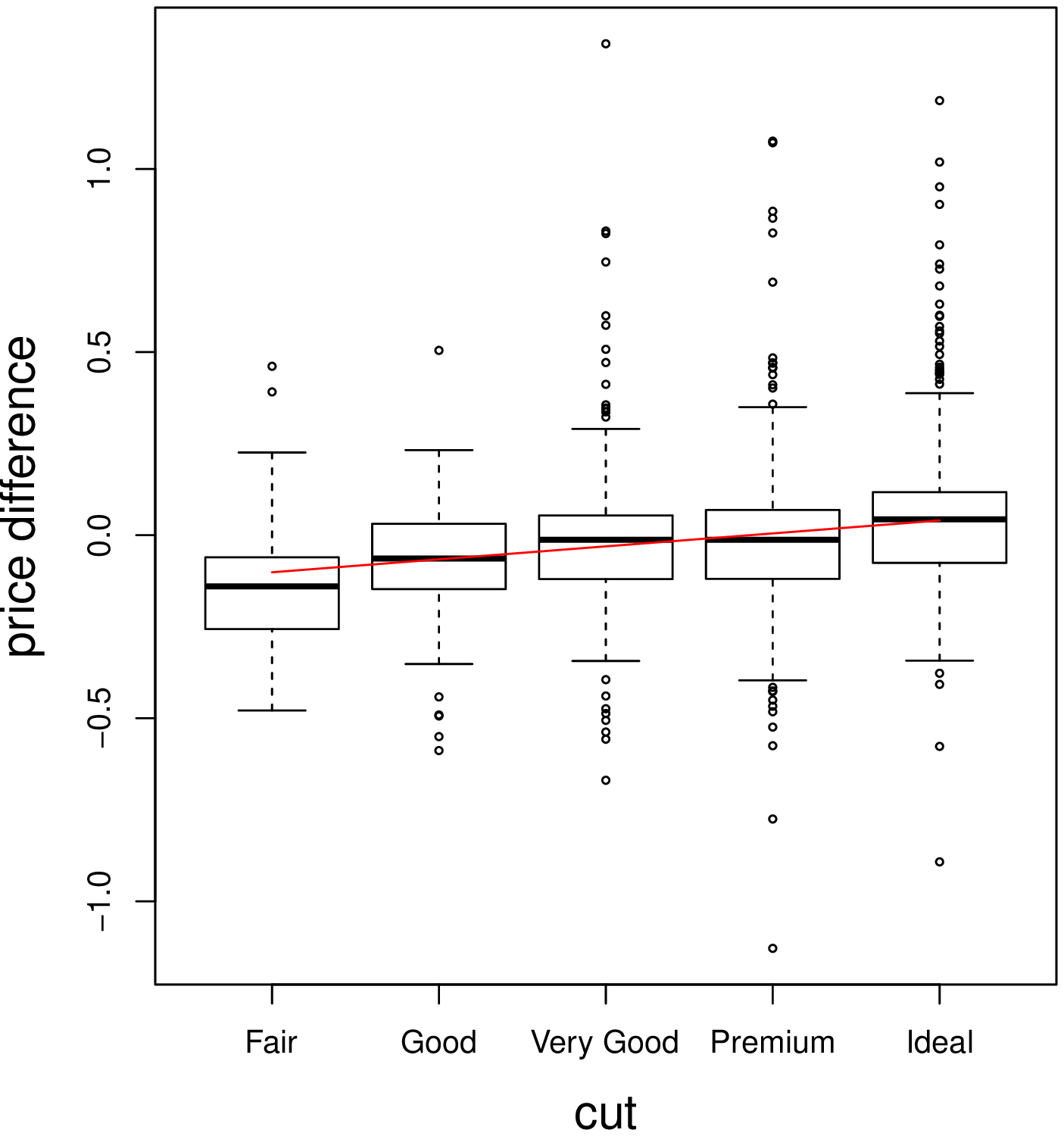} &
		\includegraphics[height=5cm, width=7.5cm]{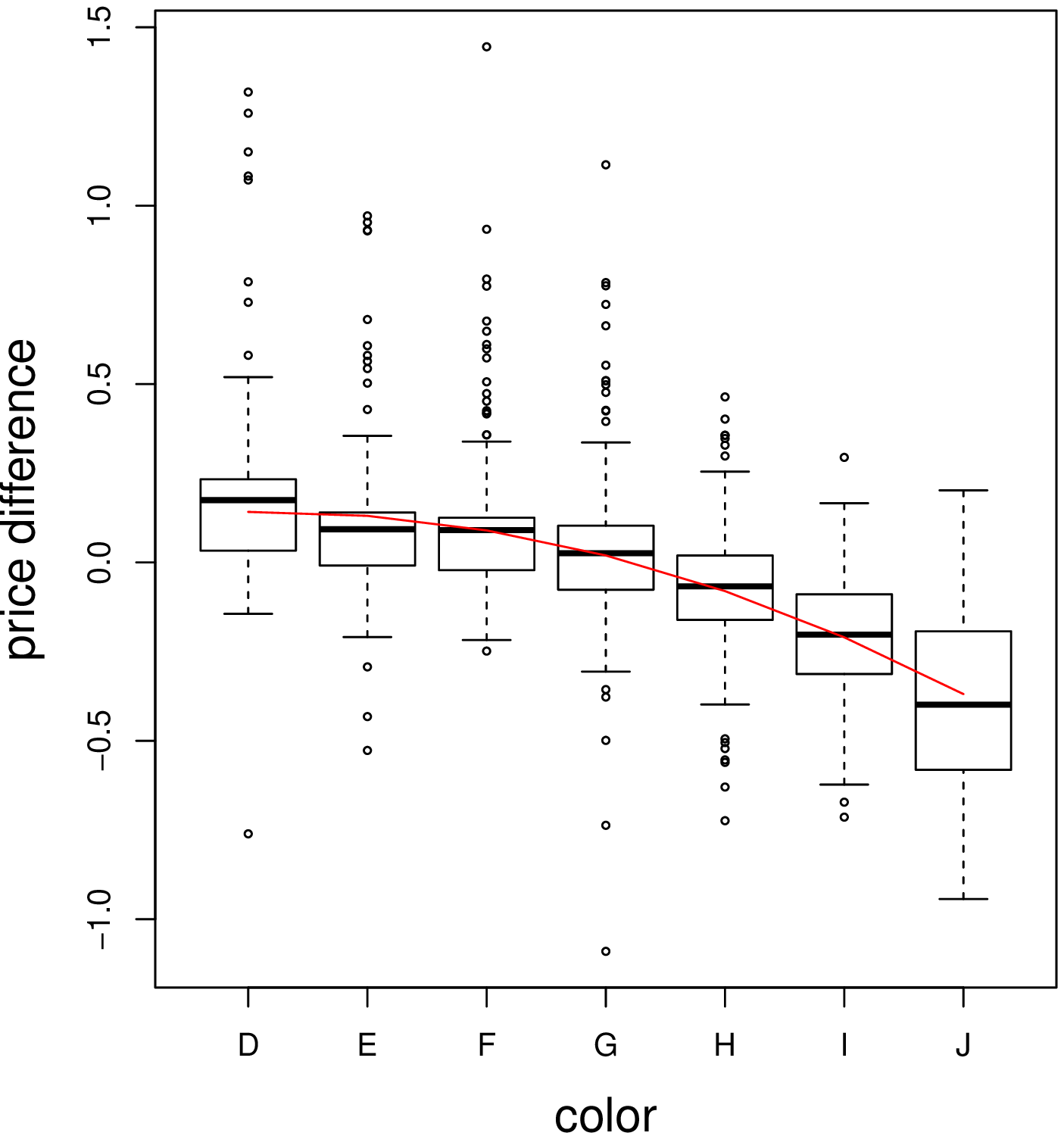} \\
		\vspace{-.5cm} (c) & (d) \\
		\includegraphics[height=5cm, width=7.5cm]{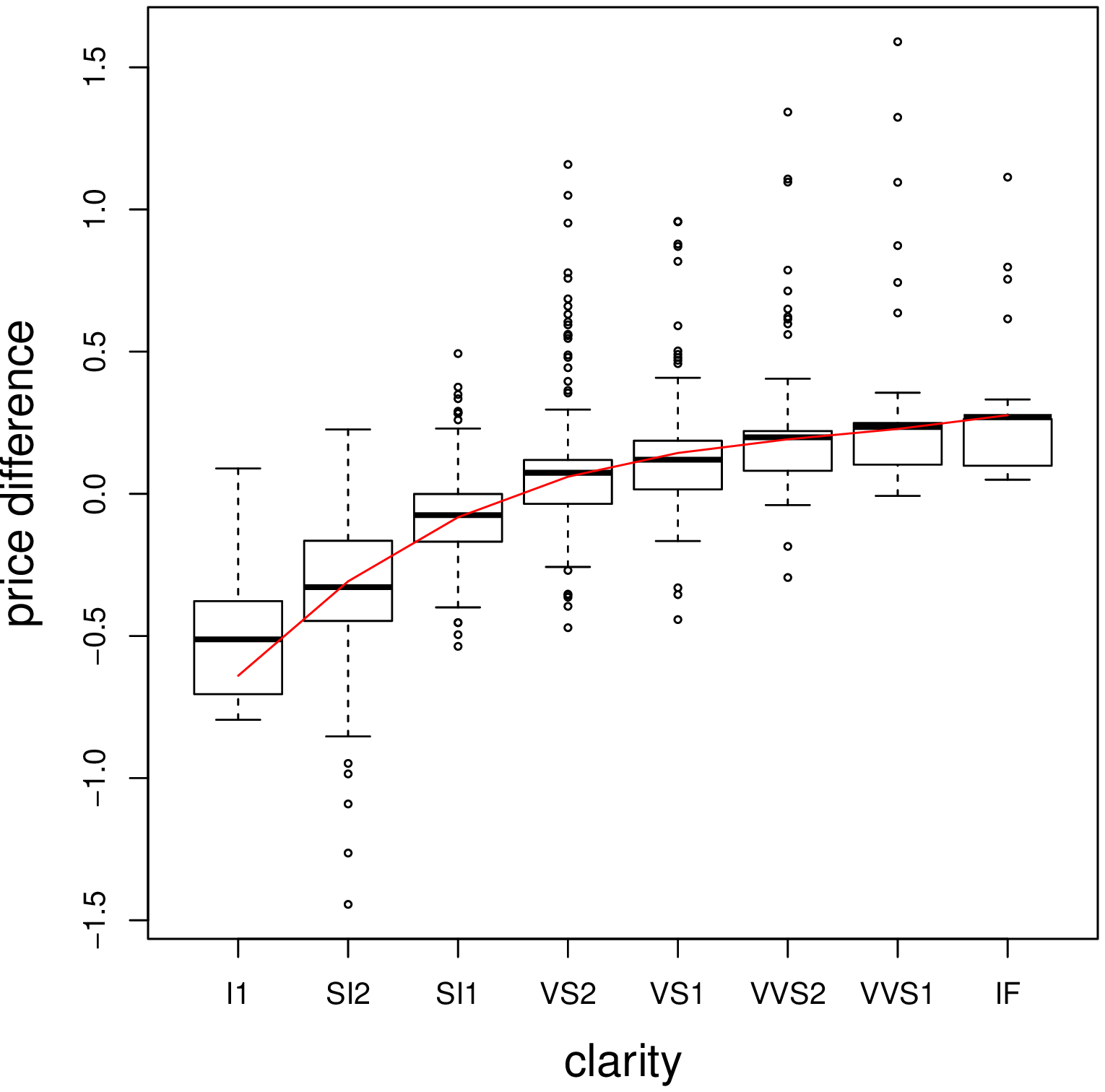} &
		\includegraphics[height=5cm, width=7.5cm]{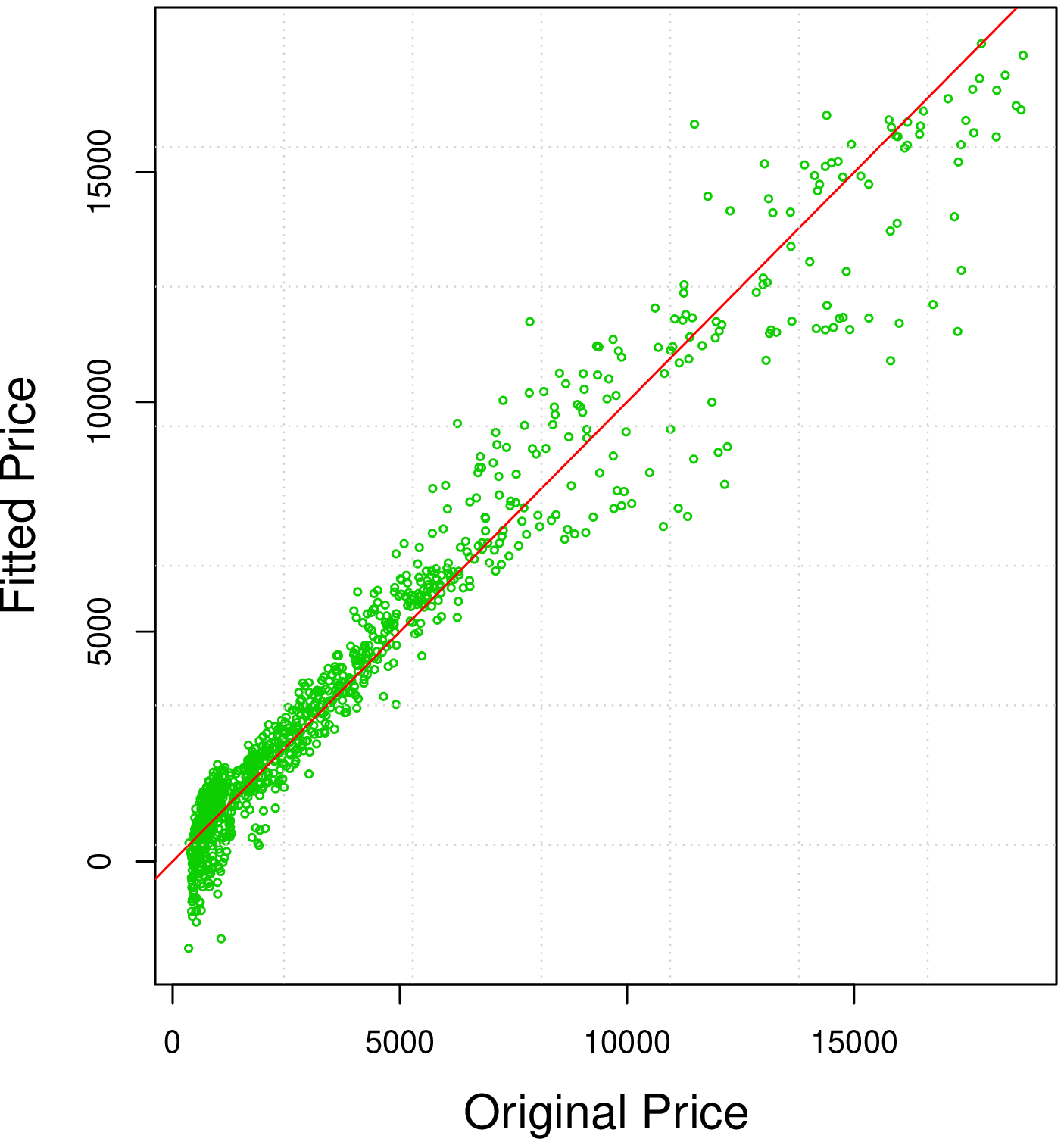} 
	\end{tabular}%
	\caption{Box plots showing the partial effect of (a) cut, (b) color and (c) clarity on the diamond price. The red line gives the fitted least square estimates, where $x$-axis is converted to the corresponding rank. (d) Plot of fitted price from the semiparametric model against the original price. The red line indicates the perfect fit.}
	\label{fig:dGLR}
\end{figure}

As cut, color and clarity are categorical variables, we  apply the GLR test for testing their effect in diamond price, whereas carat, $x$, $y$ and $z$  serve as covariates. The sample size is huge, so a slight deviation from a null hypothesis may cause rejection of the null hypothesis. For this reason, we have taken a random sub-sample of 1,000 observations from the data-set 
At first, we test the null hypothesis (\ref{null}) that there is no  effect of cut, color and clarity. So, $H_0^{**}: m_1 = m_2 =m_3 = 0$. The $p$-value of the GLR test comes out to be zero. Then, we have conducted three different tests to check the main effect of cut, color and clarity, separately, by taking others as covariates (for example,  $H_0^{**}: m_1 = 0$). Even in these cases $p$-values of the GLR tests are $6.45 \times 10^{-5}$ for cut and zero for color and clarity.  This is not surprising as there is a strong effect of cut, color and clarity in the price of diamonds.

All categorical variables, cut, color and clarity, maintain an order in their quality. So, in the next step, we have conducted some goodness-of-fit tests to know the effect of different levels of those variables.  We are particularly interested to know whether there is any specific pattern in the order of different levels of a category. The levels are ranked according to the increasing order of  quality. In our first goodness-of-fit test, we have taken the order of cut as a single predictor ($P=1$) and six other variables including the ranks of color and clarity as covariates ($Q=6$). We tested the null hypothesis that the diamond price is linear with the rank of the levels of cut. The $p$-value of the GLR test is 0.5476. This indicates that the diamond price may be modeled as a linear function of the rank of its cutting quality. The box plots in Figure \ref{fig:dGLR}(a) show the partial effect of cut after eliminating the effects of other variables. The middlemost line of median in the box plot is replaced by the corresponding mean, so it gives the mean effect of cut for that category. The red line in the plot is the least square regression line of the price difference on the rank of the cut effect. The plot also shows that the ranks in cut maintain a proper linear relationship in modeling the diamond price. 

In the second semiparametric test, we checked the linearity effect of the rank of color in diamond price.
The GLR test gave a $p$-value of $2.02 \times 10^{-8}$, so we considered a quadratic model for the rank of color. The $p$-value of the test came out to be 0.0857, so the quadratic effect of color is not rejected at 5\% level of significance (see Figure \ref{fig:dGLR}(b)). Similarly, the $p$-values of corresponding to the linear, quadratic and cubic models of the rank of clarity are $10^{-15}$, 0.0412 and 0.4233, respectively. So, the effect of clarity in diamond price may be modeled as a cubic function of its rank (see Figure \ref{fig:dGLR}(c)). 

Finally, we have constructed a semiparametric model where the ranks of cut, color and clarity were taken as linear, quadratic and cubic functions, respectively. Carat, $x$, $y$ and $z$ serve as covariates for this test, and they are modeled nonparametrically. So, according to our notations $P=3$ and $Q=4$. The GLR  test (\ref{null2}) for the goodness-of-fit of the model gave a $p$-value of 0.1104 for this semiparametric model. Figure \ref{fig:dGLR}(d) displays the scatter plot of the original price of  diamonds and their fitted price from the semiparametric model; the correlation between the two prices is 0.9632. The points in the plot cluster around the red line which represents the perfect fit. These results indicate that if we build a semiparametric model for this data the appropriate choices of the additive components for the rank of cut, color and clarity are  a linear, quadratic and cubic functions, respectively.

The estimate of the overall mean price for  the data-set is $\hat\alpha=3931.3760$, and the standard deviation is $\hat\sigma = 4149.5376$. Let us consider an additive model as
$y = \hat\alpha + \hat\sigma \sum_{d=1}^7 m_d, $ where $m_1, m_2, \cdots, m_7$ being the additive effects of the rank of cut, color, clarity, carat, $x$, $y$ and $z$, respectively. Then, the parametric parts of the fitted semiparametric model are given by
\ba
m_1(x_1) &=&  -0.1330 +   0.0344 x_1, \ \ \ \ 
m_2(x_2) =   0.1271 +   0.0312 x_2   - 0.0146 x_2^2, \nn
m_3(x_3) &=& -1.1149 +   0.5466 x_3  - 0.0801 x_3^2  +    0.0042 x_3^3  .
\label{mp_est}
\ea

 \section{Discussion} \label{sec:concluding}
In an  additive model, a novel method is derived for testing the main effect of the predictor variables which take discrete or categorical values. The main effect is adjusted by  covariates possibly containing continuous valued random variables. The predictors and covariates are modeled nonparametrically using an additive model, so the test avoids loss of power due to model misspecification often arises in classical parametric tests. This method is further extended to the semiparametric model, and a goodness-of-fit test is derived. The simulation results show that the GLR test may outperform the parametric test when the model for just one of the components fails; and at the same time, it produces a comparable power as the conventional tests if the assumed parametric model holds good. The power of the GLR test is asymptotically optimal in terms of the rate of convergence, and it can detect a specific class of contiguous alternatives at a rate of $n^{-1/2}$. In case of categorical predictors the GLR test generalizes the classical ANCOVA by modeling covariates nonparametrically and without assuming normality of the error term. So, it is a development of the basic statistical theory, and the methodology can be widely useful in practice. 
 

 \section*{Appendix A: Regularity Conditions}
 
%
 To derive the asymptotic distribution of the GLR test statistic, we need the following assumptions: 
\begin{enumerate}
\item[(C1)] Suppose $c_{pj} = P(X_p=x_{pj})$, then $c_{pj} \in (0,1)$   for all $j=1,2, \cdots, k_p$ and $p=1,2, \cdots, P$, where   $\sum_{j=1}^{k_p} c_{pj} =1$.

 \item[(C2)] The kernel function $K(z)$ is bounded and Lipschitz-continuous
with a bounded support.

\item[(C3)] If $Z_q$ is continuous, then the density $f_q$ of $Z_q$ is Lipschitz-continuous and bounded away from 0 and has bounded supports $\Omega_q$ for $q \in \{1,2, \cdots , Q\}$.

\item[(C4)] If both $Z_q$ and $Z_{q'}$ are continuous, then the joint density $f_{qq'}$ of $Z_q$ and $Z_{q'}$ is Lipschitz-continuous on its support $\Omega_q \times \Omega_{q'}$ for $q \neq q' \in \{1,2, \cdots , Q\}$.

\item[(C5)] $nh_q / \log(n) \rightarrow \infty$ as $n \rightarrow \infty$ and $h_q \rightarrow 0$   for $q = 1,2, \cdots , Q$.

\item[(C6)] If $Z_q$ is continuous and $d_q$ be the degree of the polynomial used for smoothing of $Z_q$, then the $( d_q + 1)$-th derivative of $m_{P+q}$, for $q \in \{1,2, \cdots , Q\}$, exist and is bounded and continuous.

\item[(C7)] $\sigma^2 = \mbox{Var}[\epsilon] = E[\epsilon^2 ] < \infty$.

\item[(C8)]  $E[ m_q (Z_q)| X_p = x_{pj} ] = 0$ for all $j=1,2,\cdots, k_p$, $ p = 1,2, \cdots , P$  and $q = 1,2, \cdots , Q$.

\end{enumerate}

\vskip 14pt
\noindent {\large\bf Acknowledgements:}
	The author very much appreciates Kuchibhotla Arun Kumar for carefully reading the paper including all proofs and providing helpful comments and suggestions. 
	

\bibliographystyle{spbasic}
 \bibliography{reference}

\begin{thebibliography}{26}
\providecommand{\natexlab}[1]{#1}
\providecommand{\url}[1]{{#1}}
\providecommand{\urlprefix}{URL }
\expandafter\ifx\csname urlstyle\endcsname\relax
  \providecommand{\doi}[1]{DOI~\discretionary{}{}{}#1}\else
  \providecommand{\doi}{DOI~\discretionary{}{}{}\begingroup
  \urlstyle{rm}\Url}\fi
\providecommand{\eprint}[2][]{\url{#2}}

\bibitem[{Bapat(2012)}]{bapat2012linear}
Bapat RB (2012) Linear algebra and linear models. Springer Science \& Business
  Media

\bibitem[{Buja et~al(1989)Buja, Hastie, and Tibshirani}]{MR994249}
Buja A, Hastie T, Tibshirani R (1989) Linear smoothers and additive models. Ann
  Statist 17(2):453--555

\bibitem[{Davies(1980)}]{davies1980algorithm}
Davies RB (1980) The distribution of a linear combination of $\chi^2$ random
  variables. {A}lgorithm {AS}155. Appl Statist 29:323--333

\bibitem[{Fan and Jiang(2005)}]{MR2201017}
Fan J, Jiang J (2005) Nonparametric inferences for additive models. J Amer
  Statist Assoc 100(471):890--907

\bibitem[{Fan et~al(2001)Fan, Zhang, and Zhang}]{MR1833962}
Fan J, Zhang C, Zhang J (2001) Generalized likelihood ratio statistics and
  {W}ilks phenomenon. Ann Statist 29(1):153--193

\bibitem[{Friedman and Stuetzle(1981)}]{MR650892}
Friedman JH, Stuetzle W (1981) Projection pursuit regression. J Amer Statist
  Assoc 76(376):817--823

\bibitem[{Hall and Marron(1988)}]{MR970470}
Hall P, Marron JS (1988) Variable window width kernel estimates of probability
  densities. Probab Theory Related Fields 80(1):37--49

\bibitem[{Hastie and Tibshirani(2000)}]{MR1820768}
Hastie T, Tibshirani R (2000) Bayesian backfitting (with discussion). Statist
  Sci 15(3):196--223, with comments and a rejoinder by the authors

\bibitem[{Hastie and Tibshirani(1990)}]{MR1082147}
Hastie TJ, Tibshirani RJ (1990) Generalized additive models, Monographs on
  Statistics and Applied Probability, vol~43. Chapman and Hall, Ltd., London

\bibitem[{Ingster(1993)}]{MR1257978}
Ingster YI (1993) Asymptotically minimax hypothesis testing for nonparametric
  alternatives. {I--III}. Math Methods Statist 2(2):2:85--114; 3:171--189;
  4:249--268

\bibitem[{Jiang et~al(2007)Jiang, Zhou, Jiang, and Peng}]{MR2396026}
Jiang J, Zhou H, Jiang X, Peng J (2007) Generalized likelihood ratio tests for
  the structure of semiparametric additive models. Canad J Statist
  35(3):381--398

\bibitem[{Mammen et~al(1999)Mammen, Linton, and Nielsen}]{MR1742496}
Mammen E, Linton O, Nielsen J (1999) The existence and asymptotic properties of
  a backfitting projection algorithm under weak conditions. Ann Statist
  27(5):1443--1490

\bibitem[{Opsomer(2000)}]{MR1763322}
Opsomer JD (2000) Asymptotic properties of backfitting estimators. J
  Multivariate Anal 73(2):166--179

\bibitem[{Opsomer and Ruppert(1997)}]{MR1429922}
Opsomer JD, Ruppert D (1997) Fitting a bivariate additive model by local
  polynomial regression. Ann Statist 25(1):186--211

\bibitem[{Opsomer and Ruppert(1998)}]{MR1631333}
Opsomer JD, Ruppert D (1998) A fully automated bandwidth selection method for
  fitting additive models. J Amer Statist Assoc 93(442):605--619

\bibitem[{Opsomer and Ruppert(1999)}]{opsomer1999root}
Opsomer JD, Ruppert D (1999) A root-n consistent backfitting estimator for
  semiparametric additive modeling. J Comput Graph Statist 8(4):715--732

\bibitem[{Speckman(1988)}]{MR970977}
Speckman P (1988) Kernel smoothing in partial linear models. J Roy Statist Soc
  Ser B 50(3):413--436

\bibitem[{Sperlich et~al(2002)Sperlich, Tj{\o}stheim, and Yang}]{MR1891823}
Sperlich S, Tj{\o}stheim D, Yang L (2002) Nonparametric estimation and testing
  of interaction in additive models. Econometric Theory 18(2):197--251

\bibitem[{Spokoiny(1996)}]{MR1425962}
Spokoiny VG (1996) Adaptive hypothesis testing using wavelets. Ann Statist
  24(6):2477--2498

\bibitem[{Stone(1985)}]{MR790566}
Stone CJ (1985) Additive regression and other nonparametric models. Ann Statist
  13(2):689--705

\bibitem[{Stone(1986)}]{MR840516}
Stone CJ (1986) The dimensionality reduction principle for generalized additive
  models. Ann Statist 14(2):590--606

\bibitem[{Tj{\o}stheim and Auestad(1994)}]{MR1310230}
Tj{\o}stheim D, Auestad BH (1994) Nonparametric identification of nonlinear
  time series: projections. J Amer Statist Assoc 89(428):1398--1409

\bibitem[{Wand(1999)}]{MR1701398}
Wand MP (1999) A central limit theorem for local polynomial backfitting
  estimators. J Multivariate Anal 70(1):57--65

\bibitem[{Watson(1964)}]{MR0185765}
Watson GS (1964) Smooth regression analysis. Sankhy\=a Ser A 26:359--372

\bibitem[{Wickham(2009)}]{ggplot2}
Wickham H (2009) ggplot2: elegant graphics for data analysis. Springer New York

\bibitem[{Yang et~al(2003)Yang, Sperlich, and H{\"a}rdle}]{MR1985882}
Yang L, Sperlich S, H{\"a}rdle W (2003) Derivative estimation and testing in
  generalized additive models. J Statist Plann Inference 115(2):521--542

\end{thebibliography}
 
\clearpage

\appendix
\setcounter{section}{1}
\setcounter{equation}{0} 


\makeatletter\@addtoreset{equation}{section}
\def\theequation{\thesection.\arabic{equation}}

\begin{center}
	{\Huge{\bf Supplementary Materials}}
\end{center}

\section*{Appendix B: Backfitting Estimators}
Let us define $\bo{m}_p = (m_p(X_{p1}), m_p(X_{p2}), \cdots, m_p(X_{pn}))^T$ for $p \in \{1,2, \cdots, P\}$, and $\bo{m}_{P+q} = (m_{P+q}(Z_{q1}), $ $ m_{P+q}(Z_{q2}) \cdots, m_{P+q}(Z_{qn}))^T$ for $q \in \{1,2, \cdots, Q\}$. The additive components, $\bo{m}_1, \bo{m}_2, \cdots, \bo{m}_{P+Q}$,  are estimated using the backfitting estimators. The first step is to select a suitable smoothing matrix $\bo{S}_d$ for $d \in \{1,2, \cdots, P+Q\}$, where $\widehat{\bo{m}}_d =\bo{S}_d\Y_{res}$ is the estimator of $\bo{m}_d$, and $\Y_{res}$ is the residual of $\bo{Y} = (Y_1, \cdots, Y_n)^T$ given other additive components. This step is then repeated until convergence of all additive components  \citep{MR1082147}. As $\X$ contains categorical or discrete valued random variables, the bin smoother at a point mass is appropriate. Suppose, for $j=1,2, \cdots, k_p$, there are $n_{pj}$  observations at $X_p= x_{pj}$, where $\sum_{j=1}^{k_p} n_{pj} =n$, $p=1, 2, \cdots, P$. If the observations are sorted according to the values of $X_p$, then the smoothing matrix for $m_p$ is given by
\be
\bo{S}_p = 
\begin{bmatrix}
    n_{p1}^{-1}\bo{J}_{n_{p1}} & \bo{O}_{n_{p1}, n_{p2}} & \dots  & \bo{O}_{n_{p1}, n_{pk_p}} \\
    \bo{O}_{n_{p2}, n_{p1}} & n_{p2}^{-1} \bo{J}_{n_{p2}} & \dots  & \bo{O}_{n_{p2}, n_{pk_p}} \\
    \vdots & \vdots  & \ddots & \vdots \\
    \bo{O}_{n_{pk_p},n_{p1}} & \bo{O}_{n_{pk_p},n_{p2}} & \dots  & n_{pk_p}^{-1} \bo{J}_{n_{pk_p}}
\end{bmatrix}, \mbox{ for } p = 1, 2, \cdots, P,
\label{sp}
\ee
where $\bo{J}_n$ is a $n\times n$ matrix with elements 1, and $\bo{O}_{m,n}$ is a $m\times n$ matrix with elements 0. It essentially means that  $\bo{S}_p$ is constructed such a way that $\widehat{\bo{m}}_p(x_{pj})$ is the partial mean of $\Y_{res}$, where  $X_p= x_{pj}$ for $p=1, 2, \cdots, P$ and $j=1,2, \cdots, k_p$.

The covariate $\Z$ may contain any type of random variable -- categorical,  discrete or continuous. If some components of $\Z$ are categorical or discrete, then we use bin smoother again. Otherwise, for continuous valued covariates, one may choose a smoother that uses local polynomials.  For the simplicity of notation, we assume that all covariates are continuous. In fact, the situation is even simpler for categorical or discrete covariates. Let $d_q$ be the degree of the polynomial used for smoothing of $Z_q$ for $q=1, 2, \cdots, Q$.  Note that Nadaraya-Watson estimate \citep{MR0185765} is a trivial case of the polynomial smoothing where the degree of the polynomial is zero. Suppose $K(\cdot)$ is the kernel function, and denote $K_{h_q}(z) = h_q^{-1}K(\frac{z}{h_q})$, where $h_q$ is the bandwidth parameter. Then, the smoothing matrix of $Z_q$ is given by
\be
\bo{S}_{P+q} = \left( \Z_{z_q}^{T} \bo{K}_{z_q} \Z_{z_q} \right)^{-1} \Z_{z_q}^{T} \bo{K}_{z_q}, \mbox{ for } q = 1, 2, \cdots, Q,
\label{sq}
\ee
where $\bo{K}_{z_q} = {\rm{diag}}\{ K_{h_q}(Z_{q1} - z_q), \cdots , K_{h_q}(Z_{qn} - z_q)\}$ is a diagonal matrix containing the kernel weight, and 
\be
\Z_{z_q} = 
\begin{bmatrix}
    1 & (Z_{q1} - z_q) & \dots  & (Z_{q1} - z_q)^{d_q} \\
    1 & (Z_{q2} - z_q) & \dots  & (Z_{q2} - z_q)^{d_q} \\
    \vdots & \vdots  & \ddots & \vdots \\
    1 & (Z_{qn} - z_q) & \dots  & (Z_{qn} - z_q)^{d_q}
\end{bmatrix}.
\label{z}
\ee
Then, the normal equations for the backfitting estimators  \citep{MR994249,MR1631333} are given by
\be
\begin{bmatrix}
    \bo{I}_n & \bo{S}_1^* & \dots  & \bo{S}_1^* \\
    \bo{S}_2^* & \bo{I}_n & \dots  & \bo{S}_2^* \\
    \vdots & \vdots  & \ddots & \vdots \\
    \bo{S}_{P+Q}^* & \bo{S}_{P+Q}^* & \dots  &  \bo{I}_n
\end{bmatrix}
\begin{bmatrix}
     \bo{m}_1 \\
    \bo{m}_2 \\
    \vdots \\
     \bo{m}_{P+Q}
\end{bmatrix}
=
\begin{bmatrix}
     \bo{S}_1^* \\
    \bo{S}_2^* \\
    \vdots \\
     \bo{S}_{P+Q}^*
\end{bmatrix} \bo{Y}^*,
\label{normal_eqn}
\ee
where $\bo{S}_d^* = (\bo{I}_n - \bo{1}_n \bo{1}_n^T/n)\bo{S}_d$ is the centered smoothing matrix for $d=1,2, \cdots, P+Q$, $\bo{Y}^* = \bo{Y} - \bar{Y}\bo{1}_n$  and $\bo{1}_n$ is the $n$-dimensional vector of elements 1. The solution to the normal equations has the form
\be
\begin{bmatrix}
     \widehat{\bo{m}}_1 \\
    \widehat{\bo{m}}_2 \\
    \vdots \\
     \widehat{\bo{m}}_{P+Q}
\end{bmatrix}
=
\begin{bmatrix}
    \bo{I}_n & \bo{S}_1^* & \dots  & \bo{S}_1^* \\
    \bo{S}_2^* & \bo{I}_n & \dots  & \bo{S}_2^* \\
    \vdots & \vdots  & \ddots & \vdots \\
    \bo{S}_{P+Q}^* & \bo{S}_{P+Q}^* & \dots  &  \bo{I}_n
\end{bmatrix}^{-1}
\begin{bmatrix}
     \bo{S}_1^* \\
    \bo{S}_2^* \\
    \vdots \\
     \bo{S}_{P+Q}^*
\end{bmatrix} \bo{Y}^* \equiv \bo{M}^{-1} \bo{C} \bo{Y}^*,
\label{normal_eqn_soln}
\ee
provided  the inverse exists. Here, $\bo{M}$ and $\bo{C}$ are the associated matrices. So, the backfitting estimator of $\bo{m}_d$ is given by
\be
\widehat{\bo{m}}_d = \bo{E}_d \bo{M}^{-1} \bo{C} \bo{Y}^* \equiv \bo{W}_d \bo{Y}^* , \ d = 1,2, \cdots, P+Q,
\label{w_d}
\ee
where $\bo{W}_d = \bo{E}_d \bo{M}^{-1} \bo{C}$, and  $\bo{E}_d$ is a block matrix of dimension $n \times n(P+Q)$ with $n \times n$ identity matrix in the $d$-th block and zero elsewhere.

Let us denote $\bo{W} = \sum_{d=1}^{P+Q} \bo{W}_d$. Suppose $\bo{W}^{[-d]}$ is the smoother matrix for the additive model after dropping out the term containing $\bo{m}_d, \ d = 1,2, \cdots, P+Q$. Then, the following lemma from \cite{MR1763322} ensures the existence and uniqueness of the backfitting estimators of the additive model.

\begin{lemma}
 If $|| \bo{S}_d^* \bo{W}^{[-d]} || <1$ for some $d \in (1,2, \cdots , P+Q)$, where $|| \cdot ||$ denotes any matrix norm, then the backfitting estimators uniquely exist and
 \be
 \bo{W}_d = \bo{I}_n - \left(\bo{I}_n - \bo{S}_d^* \bo{W}^{[-d]} \right)^{-1}  \left(\bo{I}_n -  \bo{S}_d^* \right)
 =  \left(\bo{I}_n - \bo{S}_d^* \bo{W}^{[-d]} \right)^{-1} \bo{S}_d^* \left(\bo{I}_n -  \bo{W}^{[-d]} \right).
 \label{wd}
 \ee
\end{lemma}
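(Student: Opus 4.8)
The plan is to extract the $d$-th component from the backfitting normal equations (\ref{normal_eqn}) and then collapse the remaining $P+Q-1$ equations into the reduced-model smoother $\bo{W}^{[-d]}$. Reading off the $d$-th block row of (\ref{normal_eqn}), in which the identity matrix occupies the diagonal and $\bo{S}_d^*$ all off-diagonal positions, every solution of the system must satisfy
\bee
\widehat{\bo{m}}_d = \bo{S}_d^* \Big( \bo{Y}^* - \sum_{k\neq d} \widehat{\bo{m}}_k \Big).
\eee
The first task is therefore to rewrite the partial sum $\sum_{k\neq d}\widehat{\bo{m}}_k$ in terms of the other block rows.

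The key step is the identity $\sum_{k\neq d}\widehat{\bo{m}}_k = \bo{W}^{[-d]}(\bo{Y}^* - \widehat{\bo{m}}_d)$. I would establish it by inspecting the block rows of (\ref{normal_eqn}) indexed by $k\neq d$: each such row reads $\widehat{\bo{m}}_k + \bo{S}_k^*\sum_{j\neq k}\widehat{\bo{m}}_j = \bo{S}_k^*\bo{Y}^*$, which, after separating out the $j=d$ term, becomes $\widehat{\bo{m}}_k + \bo{S}_k^*\sum_{j\neq k,\,j\neq d}\widehat{\bo{m}}_j = \bo{S}_k^*(\bo{Y}^* - \widehat{\bo{m}}_d)$. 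These are precisely the normal equations of the additive model with the $d$-th term removed, driven by the modified response $\bo{Y}^* - \widehat{\bo{m}}_d$. Since $\bo{W}^{[-d]}$ is by definition the total smoother of that reduced model, applying it to $\bo{Y}^* - \widehat{\bo{m}}_d$ returns exactly $\sum_{k\neq d}\widehat{\bo{m}}_k$, which is the claimed identity.

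Substituting this identity into the $d$-th block equation gives $\widehat{\bo{m}}_d = \bo{S}_d^*(\bo{I}_n - \bo{W}^{[-d]})\bo{Y}^* + \bo{S}_d^*\bo{W}^{[-d]}\widehat{\bo{m}}_d$, so that $(\bo{I}_n - \bo{S}_d^*\bo{W}^{[-d]})\widehat{\bo{m}}_d = \bo{S}_d^*(\bo{I}_n - \bo{W}^{[-d]})\bo{Y}^*$. Under the hypothesis $\| \bo{S}_d^*\bo{W}^{[-d]} \| <1$ the spectral radius of $\bo{S}_d^*\bo{W}^{[-d]}$ is strictly below one, so $\bo{I}_n - \bo{S}_d^*\bo{W}^{[-d]}$ is invertible, its inverse being the convergent Neumann series. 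This simultaneously yields uniqueness of $\widehat{\bo{m}}_d$, hence of the whole solution through the reduced model, together with the second expression $\bo{W}_d = (\bo{I}_n - \bo{S}_d^*\bo{W}^{[-d]})^{-1}\bo{S}_d^*(\bo{I}_n - \bo{W}^{[-d]})$. The first expression then follows by the purely algebraic rewriting $\bo{S}_d^*(\bo{I}_n - \bo{W}^{[-d]}) = (\bo{I}_n - \bo{S}_d^*\bo{W}^{[-d]}) - (\bo{I}_n - \bo{S}_d^*)$, which gives $\bo{W}_d = \bo{I}_n - (\bo{I}_n - \bo{S}_d^*\bo{W}^{[-d]})^{-1}(\bo{I}_n - \bo{S}_d^*)$.

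I expect the main obstacle to be the bookkeeping behind the key identity: one must verify that deleting the $j=d$ term from each $k$-row genuinely reproduces the reduced-model normal equations, and that $\bo{W}^{[-d]}$ — which presupposes the reduced model is itself uniquely solvable — is well defined, so that the argument is really a recursion on the number of additive components. Once the identity is in place, the remaining manipulations are routine linear algebra.
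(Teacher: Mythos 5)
Your derivation is correct, and since the paper offers no proof of this lemma---it is quoted from \cite{MR1763322}---the right comparison is with that source, whose argument is exactly yours: read off the $d$-th block row of the normal equations (\ref{normal_eqn}), collapse the remaining rows into the identity $\sum_{k\neq d}\widehat{\bo{m}}_k = \bo{W}^{[-d]}\left(\bo{Y}^*-\widehat{\bo{m}}_d\right)$, solve $\left(\bo{I}_n-\bo{S}_d^*\bo{W}^{[-d]}\right)\widehat{\bo{m}}_d=\bo{S}_d^*\left(\bo{I}_n-\bo{W}^{[-d]}\right)\bo{Y}^*$ by Neumann inversion, and pass between the two expressions for $\bo{W}_d$ via $\bo{S}_d^*\left(\bo{I}_n-\bo{W}^{[-d]}\right)=\left(\bo{I}_n-\bo{S}_d^*\bo{W}^{[-d]}\right)-\left(\bo{I}_n-\bo{S}_d^*\right)$. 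The bookkeeping worry you flag at the end is resolved by the lemma's own hypothesis, which already presupposes that $\bo{W}^{[-d]}$ (hence the reduced system's unique solvability) is well defined, so the recursion on the number of additive components is built into the statement rather than something your proof must supply.
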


\section*{Appendix C: Proofs}


\begin{lemma}
 Let us assume that conditions (C2)--(C5) hold, then 
 \be
 \bo{S}_d^*  = \bo{S}_d - \frac{\bo{1}_n \bo{1}_n^T}{n} + o\left(\frac{\bo{1}_n \bo{1}_n^T}{n}\right) \mbox{ a.s.},
 \ee
 for all $d=1,2, \cdots , P+Q$.  The term $o\left(\frac{\bo{1}_n \bo{1}_n^T}{n}\right)$ means that each element is of order $o\left(\frac{1}{n}\right)$.
 \label{lemma:s}
\end{lemma}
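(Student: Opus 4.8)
The plan is to unwind the definition of the centred smoother and reduce the statement to a claim about the column sums of $\bo{S}_d$. Writing $\bo{S}_d^* = (\bo{I}_n - \bo{1}_n\bo{1}_n^T/n)\bo{S}_d$, the matrix I must control is $\bo{S}_d^* - \bo{S}_d + \frac{1}{n}\bo{1}_n\bo{1}_n^T = \frac{1}{n}\bo{1}_n\bo{1}_n^T(\bo{I}_n - \bo{S}_d)$; its $(i,j)$ entry equals $\frac{1}{n}\big(1 - \sum_{k=1}^n (\bo{S}_d)_{kj}\big)$ and is independent of $i$. Hence the lemma is equivalent to showing that every column sum of $\bo{S}_d$ satisfies $\sum_{k=1}^n (\bo{S}_d)_{kj} = 1 + o(1)$ almost surely, uniformly in $j$. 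I would verify this separately for the discrete/categorical blocks and for the continuous covariate blocks.

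For a discrete or categorical component ($d \le P$, and likewise for a discrete covariate) the bound is exact. From the block form (\ref{sp}) a column of $\bo{S}_p$ falling in the $j$-th block consists of $n_{pj}$ entries equal to $n_{pj}^{-1}$ and zeros elsewhere, so its sum is exactly $1$ and the error term vanishes identically. All of the content therefore lies in the continuous covariate blocks.

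For a continuous covariate $Z_q$ (so $d = P+q$) I would pass through the equivalent-kernel representation of (\ref{sq})--(\ref{z}). A uniform strong law for the kernel moment matrices — which is exactly what (C2)--(C5) provide — gives $\frac{1}{n}\Z_{z}^T\bo{K}_{z}\Z_{z} = f_q(z)\,\bo{D}_h\bo{M}\bo{D}_h\,(1+o(1))$ almost surely and uniformly in $z$, where $\bo{D}_h = \mathrm{diag}(1, h_q, \dots, h_q^{d_q})$ and $\bo{M} = (\nu_{a+b})_{a,b=0}^{d_q}$ collects the kernel moments $\nu_a = \int u^a K(u)\,du$. Substituting into the first row of $(\Z_z^T\bo{K}_z\Z_z)^{-1}\Z_z^T\bo{K}_z$, the entry $(\bo{S}_{P+q})_{kj}$ (the weight of observation $j$ at the fit point $Z_{qk}$) equals $\frac{1}{n f_q(Z_{qk})}\big(\sum_{r=0}^{d_q} c_r\,((Z_{qj}-Z_{qk})/h_q)^r\big)\,K_{h_q}(Z_{qj}-Z_{qk})\,(1+o(1))$, with $\bo{c} = \bo{M}^{-1}\bo{e}_1$. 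For an interior observation $Z_{qj}$, summing over the fit index $k$ and replacing the average by its integral (again by the uniform strong law) cancels the design density, and the substitution $u = (Z_{qj}-t)/h_q$ then gives
\be
\sum_{k=1}^n (\bo{S}_{P+q})_{kj}\ \longrightarrow\ \int \Big(\sum_{r=0}^{d_q} c_r\, u^r\Big) K(u)\,du \;=\; \bo{\nu}^T \bo{M}^{-1}\bo{e}_1,
\ee
with $\bo{\nu} = (\nu_0, \dots, \nu_{d_q})^T$. Because $\bo{\nu}$ is the first column of the symmetric matrix $\bo{M}$, i.e.\ $\bo{\nu} = \bo{M}\bo{e}_1$, the right-hand side is $\bo{e}_1^T\bo{e}_1 = 1$, yielding the desired $1 + o(1)$ at interior fit points.

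The main obstacle is confined to the continuous blocks and is twofold. First, every ``$1+o(1)$'' must be upgraded to an almost-sure bound that is uniform in the fit point; the technical engine is the strong uniform consistency of the kernel moment matrices $n^{-1}\Z_z^T\bo{K}_z\Z_z$ and of the weighted averages defining the column sum, which holds under (C2)--(C5) at rate $O(\sqrt{\log n/(nh_q)}) + O(h_q) = o(1)$. Second, and more delicate, is the boundary region: for fit points within $O(h_q)$ of $\partial\Omega_q$ the kernel moments are truncated, the identity $\bo{\nu} = \bo{M}\bo{e}_1$ no longer forces the limiting integral to equal one, and the individual boundary column sums need not tend to $1$. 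I expect this to be the crux. One must argue that such boundary columns number only $O(nh_q) = o(n)$ — using the compact support in (C3) and $h_q \to 0$ in (C5) — and that the associated error block is negligible in the sense in which the approximation is subsequently invoked, so that the stated order governs the bulk of the matrix while the boundary layer is asymptotically immaterial.
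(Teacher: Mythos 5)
Your proposal is correct, and for the discrete blocks it coincides with the paper exactly: your reduction via $\bo{S}_d^* - \bo{S}_d + \bo{1}_n\bo{1}_n^T/n = \frac{1}{n}\bo{1}_n\bo{1}_n^T(\bo{I}_n - \bo{S}_d)$, together with the observation that every column of the bin smoother in (\ref{sp}) sums exactly to one, is precisely the paper's one-line computation in (\ref{s1}), including the remark that no assumptions are needed there. Where you genuinely diverge is the continuous-covariate half: the paper disposes of it entirely by citing \cite{MR1429922}, whereas you prove it from scratch through the equivalent-kernel expansion $n^{-1}\Z_{z}^T\bo{K}_{z}\Z_{z} = f_q(z)\,\bo{D}_h\bo{M}\bo{D}_h(1+o(1))$ and the identity $\bo{\nu} = \bo{M}\bo{e}_1$, which forces the interior column-sum limit $\bo{\nu}^T\bo{M}^{-1}\bo{e}_1 = 1$. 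That computation is right, and it buys something the citation hides: your boundary caveat is genuine. For a column $j$ with $Z_{qj}$ within $O(h_q)$ of $\partial\Omega_q$, the truncated moment matrix enters with a truncation that varies with the fit point, so the limiting column sum is $\int \bo{e}_1^T\bo{M}_{\alpha-u}^{-1}(1,u,\dots,u^{d_q})^T K(u)\,du$, which is not one in general (for Nadaraya--Watson at the endpoint it equals $\log 2$); row sums of a local polynomial smoother are exactly one, but column sums are not, and the integral across varying truncations does not telescope. Hence the literal elementwise $o(1/n)$ statement fails on an $O(h_q)$-fraction of columns, and your resolution --- that these $O(nh_q)=o(n)$ boundary columns are asymptotically immaterial in every subsequent quadratic-form use of the lemma --- is the correct reading of how the lemma is actually invoked, a qualification that the paper's appeal to \cite{MR1429922} leaves implicit. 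In short: same decomposition and same exact discrete argument, but a self-contained, and more candid, treatment of the continuous blocks in place of the paper's citation.
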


\begin{proof}
 This property is proved in \cite{MR1429922} using a local polynomial fitting where the smoothing matrix is as defined in (\ref{sq}). This is also true for the point mass bin smoother as 
 \be
 \bo{S}_d^* = \bo{S}_d - \frac{\bo{1}_n \bo{1}_n^T}{n} \bo{S}_d 
 = \bo{S}_d - \frac{\bo{1}_n \bo{1}_n^T}{n} \mbox{ for } d=1,2, \cdots , P.
 \label{s1}
 \ee
 Note that for $d=1,2, \cdots , P$ the relationship is exact, and we do not need any assumption for this. 
\end{proof}

\begin{lemma}
If the predictors and covariates are pairwise independent then, under conditions (C1)--(C6),  we have
 \be
 \bo{S}_d^* \bo{S}_{d'}^*  = o\left(\frac{\bo{1}_n \bo{1}_n^T}{n}\right) \mbox{ a.s.},
 \ee
 for all $d \neq d' \in \{1,2, \cdots , P+Q\}$.
 \label{lemma:ss0}
\end{lemma}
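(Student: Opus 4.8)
The plan is to strip off the centering so that the claim reduces to a statement about the raw smoother product $\bo{S}_d\bo{S}_{d'}$, and then to verify entrywise that this product concentrates at $1/n$ almost surely under pairwise independence. First I would use Lemma~\ref{lemma:s} to write $\bo{S}_d^* = \bo{S}_d - \frac{\bo{1}_n\bo{1}_n^T}{n} + \bo{R}_d$, where every entry of $\bo{R}_d$ is $o(1/n)$ almost surely (and $\bo{R}_d=\bo{0}$ exactly when $d\le P$). Expanding $\bo{S}_d^*\bo{S}_{d'}^*$ and using three facts — that every smoother reproduces constants, so $\bo{S}_d\bo{1}_n=\bo{1}_n$ and hence $\bo{S}_d\frac{\bo{1}_n\bo{1}_n^T}{n}=\frac{\bo{1}_n\bo{1}_n^T}{n}$; that $\frac{\bo{1}_n\bo{1}_n^T}{n}\bo{S}_{d'}=\frac{\bo{1}_n\bo{1}_n^T}{n}+o\!\left(\frac{\bo{1}_n\bo{1}_n^T}{n}\right)$, which follows from the definition $\bo{S}_{d'}^*=\bo{S}_{d'}-\frac{\bo{1}_n\bo{1}_n^T}{n}\bo{S}_{d'}$ together with Lemma~\ref{lemma:s}; and that $\left(\frac{\bo{1}_n\bo{1}_n^T}{n}\right)^2=\frac{\bo{1}_n\bo{1}_n^T}{n}$ — the cross terms telescope. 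The remainder products $\bo{R}_d\bo{S}_{d'}$ and $\bo{S}_d\bo{R}_{d'}$ are $o(1/n)$ entrywise because the row and column $\ell_1$ norms of the bin and local-polynomial smoothers are bounded. This leaves
\[
\bo{S}_d^*\bo{S}_{d'}^* = \bo{S}_d\bo{S}_{d'} - \frac{\bo{1}_n\bo{1}_n^T}{n} + o\!\left(\frac{\bo{1}_n\bo{1}_n^T}{n}\right)\quad\text{a.s.},
\]
so the lemma follows once I show $\bo{S}_d\bo{S}_{d'}=\frac{\bo{1}_n\bo{1}_n^T}{n}+o\!\left(\frac{\bo{1}_n\bo{1}_n^T}{n}\right)$ almost surely.

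For this reduced claim I would fix an entry $\left(\bo{S}_d\bo{S}_{d'}\right)_{ik}=\sum_l (\bo{S}_d)_{il}(\bo{S}_{d'})_{lk}$ and prove $n\left(\bo{S}_d\bo{S}_{d'}\right)_{ik}\rightarrow 1$ almost surely, in three cases. If both $d,d'\le P$ are discrete, the entry is exactly $n_{dd',jj'}/(n_{dj}\,n_{d'j'})$, where $j,j'$ are the levels of $X_d,X_{d'}$ carried by observations $i,k$, the $n_{dj},n_{d'j'}$ are the marginal cell counts and $n_{dd',jj'}$ the joint cell count; the strong law gives $n_{dj}/n\rightarrow c_{dj}$ and $n_{d'j'}/n\rightarrow c_{d'j'}$, while pairwise independence gives $n_{dd',jj'}/n\rightarrow c_{dj}c_{d'j'}$, so the ratio tends to $1$. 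If one index is discrete and the other continuous, say $d\le P$ and $d'=P+q$, I would average $(\bo{S}_{P+q})_{lk}\approx K_{h_q}(Z_{qk}-Z_{ql})/(n f_q(Z_{ql}))$ over the $n_{dj}$ observations sharing the level $x_{dj}$; by independence these $Z_q$-values form a random sample from the marginal $f_q$, and the strong law together with the kernel identity $E\!\left[K_{h_q}(Z_{qk}-Z_q)/f_q(Z_q)\right]=\int K=1$ (conditions (C2), (C3), (C5)) shows the cell sum equals $\frac{n_{dj}}{n}(1+o(1))$, so the entry is again $\frac1n(1+o(1))$.

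The continuous-continuous case $\bo{S}_{P+q}\bo{S}_{P+q'}$ is where I expect the real difficulty. Here the entry is a double kernel average whose leading behavior is governed by the joint density $f_{qq'}$; under independence $f_{qq'}=f_qf_{q'}$ and the same integration as above collapses the mass to $1/n$, but establishing this uniformly and almost surely — with the local-polynomial denominators controlled, the bandwidths shrinking at the rate of (C5), and using the Lipschitz and boundedness conditions (C3), (C4), (C6) — is the delicate part. Rather than reprove it, I would invoke the expansions for products of local-polynomial smoother matrices already developed for backfitting in \cite{MR1429922}, which supply exactly this estimate; the case of discrete covariates is subsumed by the discrete-discrete analysis. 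Finally, since there are only finitely many index pairs $(d,d')$ and, for discrete indices, finitely many level pairs $(j,j')$, the entrywise almost-sure limits combine into the single statement that every entry of $\bo{S}_d\bo{S}_{d'}$ equals $\frac1n(1+o(1))$ almost surely, which by the reduction above yields $\bo{S}_d^*\bo{S}_{d'}^*=o\!\left(\frac{\bo{1}_n\bo{1}_n^T}{n}\right)$ and completes the proof.
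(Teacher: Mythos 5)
Your proposal is correct and takes essentially the same route as the paper: you reduce the centered product to the raw product $\bo{S}_d\bo{S}_{d'}$ via Lemma \ref{lemma:s}, identify the discrete--discrete entries as joint-to-marginal cell-count ratios $n_{dd',jj'}/(n_{dj}n_{d'j'})$ converging by the SLLN to one under pairwise independence, and delegate the continuous--continuous case to the local-polynomial smoother expansions of \cite{MR1429922}, exactly as the paper does. Your explicit equivalent-kernel argument for the mixed discrete--continuous case simply fleshes out what the paper compresses into ``applying the same technique,'' so it is an elaboration of, not a departure from, the paper's proof.
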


\begin{proof}
Using equation (\ref{s1}) for $p$ and $p' \in \{ 1,2, \cdots , P\}$, we get 
 \ba
 \bo{S}_p^* \bo{S}_{p'}^* &=& \left(\bo{S}_p - \frac{\bo{1}_n \bo{1}_n^T}{n}\right) \left(\bo{S}_{p'} - \frac{\bo{1}_n \bo{1}_n^T}{n}\right) \nn
 &=& \bo{S}_p \bo{S}_{p'} - \frac{\bo{1}_n \bo{1}_n^T}{n} \bo{S}_p  - \frac{\bo{1}_n \bo{1}_n^T}{n} \bo{S}_{p'} + \frac{\bo{1}_n \bo{1}_n^T}{n}  \nn
 &=& \bo{S}_p \bo{S}_{p'} -  \frac{\bo{1}_n \bo{1}_n^T}{n}.
 \label{s*}
  \ea  
   Note that $\bo{S}_p \bo{S}_p = \bo{S}_p$ for $p=1,2, \cdots , P$. 
 For $p \neq p' \in \{1,2, \cdots , P\}$, we define $\bo{U} =  \bo{S}_p \bo{S}_{p'}$. Here $\bo{U}$ is a block matrix containing each element in the $rs$-th block equal to
 \be
 u_{rs} = \sum_{i:X_{pi}=x_{pr}, X_{p'i}=x_{p's}} \frac{1}{n_{pr} n_{p's}}, 
 \ee
 where $r  = 1, 2, \cdots , k_p$ and $s  = 1, 2, \cdots , k_{p'}$.
 Using strong law of large numbers (SLLN) and  assumption (C1), we get 
 \ba
 n u_{rs} &\overset{\mbox{a.s.}}{\longrightarrow}& \frac{1}{c_{pr} c_{p's}} P( X_p = x_{pr}, X_{p'} = x_{p's} )  \nn
  &=& \frac{P( X_p = x_{pr}, X_{p'} = x_{p's} )  }{P( X_p = x_{pr}) P( X_{p'} = x_{p's} )} .
  \label{nu}
 \ea
 Combining equations (\ref{s*}) and (\ref{nu}) the $ij$-th element of 
 $\bo{S}_p^* \bo{S}_{p'}^*$ becomes
 \be
 (\bo{S}_p^* \bo{S}_{p'}^*)_{ij} = \frac{1}{n}\left( \frac{P( X_p = x_{pi}, X_{p'} = x_{p'j} )  }{P( X_p = x_{pi}) P( X_{p'} = x_{p'j} )}  - 1 \right) \mbox{ a.s.}
 \ee
  So,  the lemma is proved for $p \neq p' \in \{1,2, \cdots , P\}$.  For $d \neq  d' \in \{P+1,P+2, \cdots , P+Q\}$ \cite{MR1429922} have shown that under conditions (C2)--(C6)
   \be
 (\bo{S}_d^* \bo{S}_{d'}^*)_{ij} = \frac{1}{n}\left( \frac{f_{dd'}( z_{di}, z_{d'j} )  }{f_d(z_{di}) f_{d'}(z_{d'j} )}  - 1 \right) + o\left(\frac{\bo{1}_n \bo{1}_n^T}{n}\right) \mbox{ a.s.},
 \ee
 where $f_{dd'}(\cdot)$ is the joint distribution of $Z_d$ and $Z_{d'}$, whereas $f_d(\cdot)$ and $f_{d'}(\cdot)$ are their marginal distributions. So, the lemma is true for  $d \neq  d' \in \{P+1,P+2, \cdots , P+Q\}$. Now, using condition (C8) and applying the same technique, we can prove this result when $d =1,2, \cdots , P$, and  $d' =P+1,P+2, \cdots , P+Q$, or vise versa.
  \end{proof}
  


\begin{lemma}
Let us denote $\bo{W} = \sum_{d=1}^{P+Q} \bo{W}_d$, where $\bo{W}_d$ is given in equation (\ref{w_d}). Then, under conditions (C1)--(C6) 
\be
 \bo{W}  \approx  \bo{S}^* + o\left(\frac{\bo{1}_n \bo{1}_n^T}{n}\right) \mbox{ a.s.},
 \ee
 where $\bo{S}^* = \sum_{d=1}^{P+Q} \bo{S}_d^*$.
 \label{lemma:W}
\end{lemma}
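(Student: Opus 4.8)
The plan is to work from the backfitting normal equations in (\ref{normal_eqn}) rather than from the closed form of Lemma 1, because the normal equations expose a clean linear identity for each component smoother. Reading off the $d$-th block row of (\ref{normal_eqn}) and writing the solution $\widehat{\bo{m}}_d=\bo{W}_d\bo{Y}^*$ in smoother form gives the fixed-point identity $\bo{W}_d=\bo{S}_d^*-\bo{S}_d^*\sum_{d'\neq d}\bo{W}_{d'}$ for every $d=1,2,\cdots,P+Q$. This is the backbone of the argument: each component smoother equals its own marginal smoother minus a ``correction'' coming from the other components, and the whole lemma amounts to showing that, after summing, those corrections are asymptotically negligible.

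First I would sum this identity over $d$. Since $\sum_d\bo{S}_d^*=\bo{S}^*$, it yields $\bo{W}=\bo{S}^*-\sum_{d\neq d'}\bo{S}_d^*\bo{W}_{d'}$, so it suffices to prove that the cross-term sum $\bo{B}:=\sum_{d\neq d'}\bo{S}_d^*\bo{W}_{d'}$ is $o(\bo{1}_n\bo{1}_n^T/n)$ entrywise, almost surely. To estimate a single block $\bo{S}_d^*\bo{W}_{d'}$ I would substitute the fixed-point identity once more for $\bo{W}_{d'}$, obtaining $\bo{S}_d^*\bo{W}_{d'}=\bo{S}_d^*\bo{S}_{d'}^*-\bo{S}_d^*\bo{S}_{d'}^*\sum_{d''\neq d'}\bo{W}_{d''}$. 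The leading block $\bo{S}_d^*\bo{S}_{d'}^*$ is exactly the object controlled by Lemma \ref{lemma:ss0}, which gives $\bo{S}_d^*\bo{S}_{d'}^*=o(\bo{1}_n\bo{1}_n^T/n)$ a.s. for $d\neq d'$. The remaining term is this same entrywise-$o(1/n)$ matrix multiplied by $\sum_{d''\neq d'}\bo{W}_{d''}=\bo{W}-\bo{W}_{d'}$; since $\bo{W}$ is a bounded additive smoother (its columns have uniformly bounded $\ell_1$ norm), multiplying an entrywise-$o(1/n)$ matrix by it stays entrywise $o(1/n)$. Hence each $\bo{S}_d^*\bo{W}_{d'}=o(\bo{1}_n\bo{1}_n^T/n)$, and summing the finitely many blocks preserves this order, giving $\bo{B}=o(\bo{1}_n\bo{1}_n^T/n)$ and the claim.

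For bookkeeping on the discrete predictors I would lean on the exact relations $\bo{S}_p^*=\bo{S}_p-\bo{1}_n\bo{1}_n^T/n$ and $\bo{S}_p^*\bo{1}_n=0$ (Lemma \ref{lemma:s} and the centering in Appendix B), which make the substitutions above free of hidden lower-order debris for $p\le P$; the continuous-covariate blocks contribute only an additional $o(1/n)$ through Lemma \ref{lemma:s}. To make the chain of substitutions rigorous I would invoke the contraction bound $\|\bo{S}_d^*\bo{W}^{[-d]}\|<1$ underlying Lemma 1 (equivalently, expand $(\bo{I}_n-\bo{S}_d^*\bo{W}^{[-d]})^{-1}$ as a convergent Neumann series), which guarantees that all $\bo{W}_d$ are uniformly bounded in operator norm and that the finite recursion terminates with an $o(1/n)$ remainder.

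The main obstacle is precisely the replacement of the full backfitting operator $\bo{W}_{d'}$ by the marginal smoother $\bo{S}_{d'}^*$ while keeping every remainder strictly of order $o(1/n)$ rather than merely $O(1/n)$. This is where the asymptotic orthogonality of distinct centered smoothers, Lemma \ref{lemma:ss0}, is indispensable: the cross products $\bo{S}_d^*\bo{S}_{d'}^*$ have the same entrywise magnitude $O(1/n)$ as the matrices $\bo{S}_d^*$ themselves, so it is only the cancellation captured by Lemma \ref{lemma:ss0} that upgrades them to $o(1/n)$ and lets the correction $\bo{B}$ vanish at a faster rate than the leading term $\bo{S}^*$. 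Controlling the product of this near-orthogonality with the non-diagonal backfitting operator, uniformly over all finitely many blocks, is the delicate step; everything else is routine matrix manipulation.
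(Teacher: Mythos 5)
Your proposal is correct and reaches the lemma by a genuinely different organization than the paper, though the analytic core is shared. The paper works componentwise from the closed form in Lemma 1: for $P+Q=2$ it writes $\bo{W}_1=\left(\bo{I}_n-\bo{S}_1^*\bo{S}_2^*\right)^{-1}\bo{S}_1^*\left(\bo{I}_n-\bo{S}_2^*\right)$, uses Lemma \ref{lemma:ss0} to replace the inverse by $\bo{I}_n+o\left(\bo{1}_n\bo{1}_n^T/n\right)$, concludes $\bo{W}_d\approx\bo{S}_d^*$ for each $d$, and extends to general $P+Q$ ``by recursion'' before summing. You instead read the exact stationarity identity $\bo{W}_d+\bo{S}_d^*\sum_{d'\neq d}\bo{W}_{d'}=\bo{S}_d^*$ directly off the normal equations (\ref{normal_eqn}), sum first, and dispose of the aggregated correction $\sum_{d\neq d'}\bo{S}_d^*\bo{W}_{d'}$ by a single further substitution; Lemma 1 enters only through the contraction bound guaranteeing uniform boundedness of the $\bo{W}_d$. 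Both arguments are first-order Neumann truncations of the backfitting operator hinging on the same key fact, the asymptotic orthogonality $\bo{S}_d^*\bo{S}_{d'}^*=o\left(\bo{1}_n\bo{1}_n^T/n\right)$ of Lemma \ref{lemma:ss0}, but your exact algebraic identity replaces the paper's unwritten induction, which is arguably cleaner. Two caveats, neither fatal. First, your step ``entrywise $o(1/n)$ times a bounded smoother stays entrywise $o(1/n)$'' genuinely requires uniformly bounded column $\ell_1$ norms --- an operator-norm bound alone can lose a factor up to $\sqrt{n}$ --- and you assert rather than prove this; it is exact for the bin smoothers (unit column sums, at most doubled by centering) and standard for local polynomial weights under (C2)--(C5), but for the $\bo{W}_{d'}$ it needs the contraction of Lemma 1 in the induced $\ell_1$ norm specifically, not merely ``some'' matrix norm. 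This gap is of the same size as the one the paper leaves when it multiplies its approximate inverse against $\bo{S}_1^*\left(\bo{I}_n-\bo{S}_2^*\right)$ and when it appeals to recursion, so you are not worse off. Second, Lemma \ref{lemma:ss0} is stated under pairwise independence, so both your proof and the paper's deliver the $o\left(\bo{1}_n\bo{1}_n^T/n\right)$ remainder only in that case; the ``$\approx$'' in the statement absorbs the dependent case, and your closing paragraph flags exactly this point, matching the paper's own remark that the approximation is exact under independence.
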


\begin{proof}
 For $P+Q=2$, we get
 \be
 \bo{W}^{[-1]} = \bo{S}_2^* \mbox{ and } \bo{W}^{[-2]} = \bo{S}_1^*.
 \ee
From equation (\ref{wd}), we have
 \be
 \bo{W}_1 = \left(\bo{I}_n - \bo{S}_1^* \bo{S}_2^* \right)^{-1} \bo{S}_1^* \left(\bo{I}_n -  \bo{S}_2^* \right) .
 \label{w1}
 \ee
 Using Lemma \ref{lemma:ss0}, we have the following approximation
  \be
 ( \bo{I}_n - \bo{S}_1^* \bo{S}_2^*)^{-1}  \approx \bo{I}_n + o\left(\frac{\bo{1}_n \bo{1}_n^T}{n}\right) \mbox{ a.s.}
 \label{is}
 \ee
 This approximation is exact when the corresponding predictors or covariates are pairwise independent. Combining equations (\ref{w1}) and (\ref{is}), we get
 \be
 \bo{W}_1 \approx \bo{S}_1^* + o\left(\frac{\bo{1}_n \bo{1}_n^T}{n}\right) \mbox{ a.s}.
 \label{w11}
 \ee
 Similarly $\bo{W}_2  \approx \bo{S}_2^* + o\left(\frac{\bo{1}_n \bo{1}_n^T}{n}\right) \mbox{ a.s}.$ 
 Now, for all values of $P$ and $Q$, we prove by recursion that
 \be
 \bo{W}_d \approx \bo{S}_d^* + o\left(\frac{\bo{1}_n \bo{1}_n^T}{n}\right) \mbox{ a.s}.
 \ee
Therefore, by taking summation over $d =1,2, \cdots , P+Q$ the lemma is proved.
\end{proof}

%

\begin{lemma}
Suppose conditions (C1)--(C4) and (C8) are satisfied. Then, under $H_0^{**}$,
\be
  \bo{m}^T \bo{S}_p^* \bo{m} =   o_p(1),
 \ee
 for all $ p =1,2, \cdots , P$, where $\bo{m} = \sum_{p=1}^{P+Q} \bo{m}_p$.
 \label{lemma:Sm}
\end{lemma}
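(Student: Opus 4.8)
The plan is to combine the vanishing of the predictor components under $H_0^{**}$ with the exact algebra of the bin smoother. First I would observe that under $H_0^{**}$ we have $m_p(\cdot)=0$ for $p=1,2,\dots,P$, so $\bo{m}=\sum_{d=1}^{P+Q}\bo{m}_d$ collapses to the covariate sum $\sum_{q=1}^{Q}\bo{m}_{P+q}$; its $i$-th entry is $g(\Z_i):=\sum_{q=1}^{Q}m_{P+q}(Z_{qi})$, so the entries form an i.i.d.\ sequence with $E[g(\Z)]=0$, and (C8) supplies the sharper within-level orthogonality $E[g(\Z)\mid X_p=x_{pj}]=0$ for every $j=1,2,\dots,k_p$. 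Next, using Lemma~\ref{lemma:s} in its exact remainder-free form for a predictor, $\bo{S}_p^*=\bo{S}_p-\bo{1}_n\bo{1}_n^T/n$, together with the fact that the bin smoother $\bo{S}_p$ is the idempotent projection onto functions that are constant on the level sets of $X_p$, the quadratic form reduces to the between-level sum of squares
\be
\bo{m}^T\bo{S}_p^*\bo{m}=\sum_{j=1}^{k_p} n_{pj}\,\bar{m}_{(j)}^2-n\,\bar{m}^2,
\ee
where $\bar{m}_{(j)}$ is the average of the entries of $\bo{m}$ over $\{i:X_{pi}=x_{pj}\}$ and $\bar{m}$ is the grand average.

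The second step is to show that these level averages carry no signal in the limit. Condition (C1) gives $c_{pj}\in(0,1)$, whence $n_{pj}/n\to c_{pj}$ a.s.\ by the SLLN, and the bounded-support hypotheses (C3)--(C4) make the covariate functions, and hence $g$, square-integrable. The identifiability constraint $E[g(\Z)]=0$ then yields $\bar{m}\to0$ a.s., while (C8) yields $\bar{m}_{(j)}\to0$ a.s.\ for each $j$. This is precisely where (C8) does the work: it makes the covariate signal orthogonal to the level structure of every predictor, so that the deterministic (population) limit of $\bo{m}^T\bo{S}_p^*\bo{m}$ is zero.

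The hard part will be upgrading ``$\bar{m}_{(j)}\to0$'' to the quadratic bound $o_p(1)$, because the weights $n_{pj}$ diverge at exactly the rate that offsets the decay of $\bar{m}_{(j)}^2$. Indeed the within-level central limit theorem gives $\sqrt{n_{pj}}\,\bar{m}_{(j)}=O_p(1)$, so each summand $n_{pj}\bar{m}_{(j)}^2$ is only $O_p(1)$ and the naive termwise estimate stalls at $O_p(1)$ rather than $o_p(1)$. To recover the extra order I would not bound $\bo{m}^T\bo{S}_p^*\bo{m}$ in isolation but track how it enters the GLR numerator $n(RSS_0^{**}-RSS_1^{*})$: because the covariates are fitted identically under the null and the alternative, the covariate function meets $\bo{S}_p^*$ only through the covariate-smoothed residual $(\bo{I}_n-\bo{W}_{[Z]})\bo{m}$ rather than through $\bo{m}$ itself, and Lemma~\ref{lemma:W} together with the consistency of the backfitting covariate fit renders this residual asymptotically negligible in the directions picked out by $\bo{S}_p^*$. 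Verifying that this cancellation operates to the required order --- so that, once (C8) removes the leading deterministic part, only a genuinely $o_p(1)$ remainder survives --- is, I expect, the technical heart of the argument.
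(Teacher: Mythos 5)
Your first two steps essentially reproduce the paper's own proof: the paper likewise reduces $\bo{m}$ under $H_0^{**}$ to the covariate part, uses the exact identity $\bo{S}_p^* = \bo{S}_p - \bo{1}_n\bo{1}_n^T/n$ (Lemma \ref{lemma:s}), and invokes the SLLN together with (C8) to conclude that the level averages $u_j = n_{pj}^{-1}\sum_{i:X_{pi}=x_{pj}} m_{P+q}(Z_{qi})$ tend to $0$ almost surely. But where you stop, the paper leaps: from $u_j \to 0$ a.s.\ it asserts $\bo{S}_p\bo{m}_{P+q} = \bo{O}_{n,1}$ a.s.\ (its equation (\ref{sm0})) and immediately declares the quadratic form $o_p(1)$. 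Your objection to precisely this inference is correct. The quadratic form aggregates the level averages with diverging weights,
\be
\bo{m}^T\bo{S}_p\bo{m} = \sum_{j=1}^{k_p} n_{pj}\,\bar m_{(j)}^2,
\ee
and by the within-level CLT each $n_{pj}\bar m_{(j)}^2$ converges in distribution to $\tau_{pj}^2\chi^2_1$ with $\tau_{pj}^2 = \mathrm{Var}\bigl(\sum_{q} m_{P+q}(Z_q)\mid X_p = x_{pj}\bigr)$: condition (C8) forces the conditional \emph{mean}, not the conditional \emph{variance}, to vanish. Subtracting the grand-mean term $n\bar m^2$ does not rescue it, since by the law of total variance the limit of $\bo{m}^T\bo{S}_p^*\bo{m}$ has mean $\sum_{j}(1-c_{pj})\tau_{pj}^2 > 0$ whenever the covariate signal is nondegenerate within levels; for instance $P=Q=1$, $X_1$ independent of $Z_1$, $m_{2}(z)=z$, $Z_1\sim N(0,1)$ yields the limit $\chi^2_{k_1-1}$. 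So the displayed claim of Lemma \ref{lemma:Sm} is $O_p(1)$ with a nondegenerate limit, not $o_p(1)$, and the paper's proof fails at exactly the step you flagged.

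Your proposed repair --- that the quantity the GLR expansion actually requires is not $\bo{m}^T\bo{S}_p^*\bo{m}$ but the corresponding form in the covariate-fit residual $(\bo{I}_n - \bo{W}_{[Z]})\bo{m}$, whose within-level variance tends to zero so that $n_{pj}$ times the squared level mean is genuinely negligible --- is the conceptually right route, and it explains why the downstream Corollary \ref{theorem:chi} can still be expected to hold: the backfitting fit removes the covariate signal before the bin smoother acts on it, so the contaminating between-level sum of squares is an artifact of substituting $\bo{S}^*$ for $\bo{W}$ in front of the signal. However, you leave this cancellation unexecuted, and it requires real work: one must control the level averages of the covariate estimation error $m_{[Z]} - \widehat m_{[Z]}$, whose smoothing bias is of order $h^2$, common across levels, and is annihilated only by the centering, while its level-specific fluctuation must be shown $o_p(n^{-1/2})$ uniformly in $j$. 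So neither your proposal nor the paper contains a complete proof of the lemma as stated; but your diagnosis is sound precisely where the paper's argument is not, and a correct treatment would have to restate the lemma in terms of the residual $(\bo{I}_n - \bo{W}_{[Z]})\bo{m}$ along the lines you sketch.
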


\begin{proof}
 Note that 
 \be
 \bo{S}_p \bo{m}_p = \bo{m}_p \mbox{ for all } p =1,2, \cdots , P.
 \label{smp}
 \ee
 For $ p =1,2, \cdots , P$ and $q = 1,2, \cdots , Q$ we define $\bo{u} =  \bo{S}_p \bo{m}_{P+q} = (u_1 \bo{1}_{n_{p1}}^T, u_2 \bo{1}_{n_{p2}}^T,$ $ \cdots, u_{n_{pk_p}} \bo{1}_{n_{pk_p}}^T)^T$. Then 
 \be
 u_j = \sum_{i:X_{pi}=x_{pj}} \frac{m_{P+q} (Z_{qi})}{n_{pj}}, \mbox{ for } j  = 1, 2, \cdots , k_p.
 \ee
 Using strong law of large numbers (SLLN) and assumption (C8) we get 
 \be
 u_j \overset{\mbox{a.s.}}{\longrightarrow}  E[ m_{P+q} (Z_q)| X_p = x_{pj} ] = 0 .
 \ee
 So 
 \be
 \bo{S}_p \bo{m}_{P+q} = \bo{O}_{n,1} \mbox{ a.s. for all }  p =1,2, \cdots , P \mbox{ and } q  = 1,2, \cdots , Q.
 \label{sm0}
 \ee
Hence, under $H_0^{**}$,  for all $p =1,2, \cdots , P$
 \be
  \bo{m}^T \bo{S}_p \bo{m} = \left(\sum_{d=1}^{P} \bo{m}_p^T \right) \bo{S}_p \left(\sum_{d=1}^{P} \bo{m}_p \right) + o_p(1)
  = o_p(1) .
  \label{msm}
 \ee
 Again using SLLN we get 
 \be
 \frac{\bo{1}_n^T \bo{m}_{P+q}}{n} \overset{\mbox{a.s.}}{\longrightarrow}  E[ m_{P+q} (Z_q)] = 0
 \ee
 for all $q  = 1,2, \cdots , Q$. Similarly, $\bo{1}_n^T \bo{m}_p/n = 0$ a.s. for all $p  = 1,2, \cdots , P$. Hence using Lemma \ref{lemma:s} the lemma is proved from equation (\ref{msm}).
\end{proof}

\begin{lemma}
Denote $\bo{A}_{2n} = ( \bo{W} - \bo{I}_n )^T ( \bo{W} - \bo{I}_n )$, then under conditions (C1)--(C6) 
 \be
  \bo{A}_{2n} \approx \bo{S}^{*T} \bo{S}^* -\bo{S}^* - \bo{S}^{*T} + \bo{I}_n + o\left(\frac{\bo{1}_n \bo{1}_n^T}{n}\right) \mbox{ a.s.},
 \ee
 where $\bo{S}^* = \sum_{d=1}^{P+Q} \bo{S}^*_d$.
 \label{lemma:A}
\end{lemma}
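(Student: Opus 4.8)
The plan is to reduce everything to the single approximation already established in Lemma \ref{lemma:W}, namely $\bo{W} \approx \bo{S}^* + o\left(\frac{\bo{1}_n \bo{1}_n^T}{n}\right)$ a.s., and then simply expand the quadratic form $\bo{A}_{2n} = (\bo{W}-\bo{I}_n)^T(\bo{W}-\bo{I}_n)$. First I would write $\bo{W} = \bo{S}^* + \bo{E}$, where $\bo{E}$ is a remainder matrix each of whose entries is $o(1/n)$ almost surely. Substituting and expanding gives
\be
\bo{A}_{2n} = (\bo{S}^* - \bo{I}_n)^T(\bo{S}^* - \bo{I}_n) + (\bo{S}^* - \bo{I}_n)^T \bo{E} + \bo{E}^T (\bo{S}^* - \bo{I}_n) + \bo{E}^T \bo{E}.
\ee
The first term expands directly to $\bo{S}^{*T}\bo{S}^* - \bo{S}^* - \bo{S}^{*T} + \bo{I}_n$, which is exactly the claimed leading term, so the whole content of the lemma is to show that the three remaining terms are each $o\left(\frac{\bo{1}_n \bo{1}_n^T}{n}\right)$ a.s.

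Next I would control the two cross terms. Writing $(\bo{S}^{*T}\bo{E})_{ij} = \sum_k (\bo{S}^*)_{ki} E_{kj}$ and using the uniform bound $E_{kj} = o(1/n)$, it suffices to show that the columns of $\bo{S}^*$ have bounded $\ell_1$ norm a.s. This follows from the explicit structure of the smoothers: for the bin smoother ($d \le P$), Lemma \ref{lemma:s} gives $\bo{S}_p^* = \bo{S}_p - \frac{\bo{1}_n\bo{1}_n^T}{n}$ exactly, and each column of $\bo{S}_p$ has $\ell_1$ norm one (the nonzero block entries $n_{pj}^{-1}$ sum to one) while the centering term also contributes a column $\ell_1$ norm of one, so the column $\ell_1$ norm of $\bo{S}_p^*$ is at most two; for the local-polynomial smoothers ($d > P$) the analogous boundedness holds a.s. under (C2)--(C6) by the estimates of \cite{MR1429922}. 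Summing over $d$ keeps the column $\ell_1$ norm of $\bo{S}^*$ bounded, whence $(\bo{S}^{*T}\bo{E})_{ij} = O(1)\cdot o(1/n) = o(1/n)$; the $\bo{I}_n^T \bo{E} = \bo{E}$ part is already $o(1/n)$, so $(\bo{S}^* - \bo{I}_n)^T \bo{E} = o\left(\frac{\bo{1}_n\bo{1}_n^T}{n}\right)$ a.s., and likewise for its transpose.

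For the quadratic remainder I would write $E_{ki} = \epsilon_{ki}/n$ with $\epsilon_{ki} \to 0$ uniformly, so that $(\bo{E}^T \bo{E})_{ij} = \frac{1}{n^2}\sum_{k=1}^n \epsilon_{ki}\epsilon_{kj} = \frac{1}{n^2}\, o(n) = o(1/n)$, which is again $o\left(\frac{\bo{1}_n\bo{1}_n^T}{n}\right)$. Collecting the leading term with these three negligible pieces yields the stated approximation. The main obstacle I anticipate is the bookkeeping in the previous paragraph: the $o\left(\frac{\bo{1}_n\bo{1}_n^T}{n}\right)$ notation only records an elementwise rate, so one must verify that matrix multiplication by $\bo{S}^*$ does not accumulate these small entries across the $n$ summed terms. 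This is precisely why the boundedness of the column $\ell_1$ norms of $\bo{S}^*$ (rather than merely the boundedness of its entries) is the crucial ingredient, and establishing it a.s. for the continuous covariates rests on the earlier structural lemmas together with the smoother estimates imported from \cite{MR1429922}.
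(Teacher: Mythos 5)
Your proof is correct and takes essentially the same route as the paper: substitute $\bo{W} \approx \bo{S}^* + o\left(\frac{\bo{1}_n \bo{1}_n^T}{n}\right)$ from Lemma \ref{lemma:W} into $\bo{A}_{2n}$ and expand the quadratic form, which is exactly the paper's two-line proof. Your extra step---verifying via the bounded column $\ell_1$-norms of $\bo{S}^*$ that the cross terms and $\bo{E}^T\bo{E}$ stay elementwise $o(1/n)$ under matrix multiplication---supplies bookkeeping the paper leaves implicit, but it is a refinement of the same argument rather than a different approach.
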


\begin{proof}
Using Lemma \ref{lemma:W}, we get
  \ba
  \bo{A}_{2n} &\approx& ( \bo{S}^* - \bo{I}_n )^T ( \bo{S}^* - \bo{I}_n ) + o\left(\frac{\bo{1}_n \bo{1}_n^T}{n}\right) \mbox{ a.s.}\nn
  &=& \bo{S}^{*T} \bo{S}^* -\bo{S}^* - \bo{S}^{*T} + \bo{I}_n + o\left(\frac{\bo{1}_n \bo{1}_n^T}{n}\right) \mbox{ a.s}.
 \ea
\end{proof}

\begin{proof}[Corollary \ref{theorem:chi}]
 Denote $\bo{A}_{1n} = ( \bo{W}_{[Z]} - \bo{I}_n )^T ( \bo{W}_{[Z]} - \bo{I}_n )$, where $\bo{W}_{[Z]}$ is the smoother matrix for the additive model after dropping all $P$ predictors. Using an argument similar to that in the proof of Lemma \ref{lemma:A}, we  find
 \be
  \bo{A}_{1n} \approx \bo{S}_{[Z]}^{*T} \bo{S}_{[Z]}^* -\bo{S}_{[Z]}^* - \bo{S}_{[Z]}^{*T} + \bo{I}_n + o\left(\frac{\bo{1}_n \bo{1}_n^T}{n}\right) \mbox{ a.s.},
 \ee
 where
 $\bo{S}_{[Z]}^* =\sum_{d=1}^{Q} \bo{S}_{P+d}^*$. So
 \be
 \bo{A}_{1n} - \bo{A}_{2n} \approx \bo{S}_{[X]}^* + \bo{S}_{[X]}^{*T} - \bo{S}_{[X]}^{*T}\bo{S}_{[X]}^* - \bo{S}_{[X]}^{*T} \bo{S}_{[Z]}^* - \bo{S}_{[X]}^* \bo{S}_{[Z]}^{*T} + o\left(\frac{\bo{1}_n \bo{1}_n^T}{n}\right) \mbox{ a.s.},
 \label{a1a2}
 \ee
$\bo{S}_{[X]}^* =\sum_{d=1}^{P} \bo{S}_{d}^*$. Now $\bo{S}_d^* \bo{S}_d^* = \bo{S}_d^*$ and $\bo{S}_d^{*T}=\bo{S}_d^*$ for $d=1,2, \cdots , P$. From the technique used in Lemma 3, it can be shown that $\bo{S}_{[X]}^* \bo{S}_{[Z]}^{*T}$ is a symmetric matrix. So, using Lemma \ref{lemma:ss0} equation (\ref{a1a2})  reduces to
 \be
 \bo{A}_{1n} - \bo{A}_{2n} \approx \bo{S}_{[X]}^* + o\left(\frac{\bo{1}_n \bo{1}_n^T}{n}\right) 
 = \sum_{d=1}^{P} \bo{S}_{d}^* + o\left(\frac{\bo{1}_n \bo{1}_n^T}{n}\right) \mbox{ a.s}.
 \ee
Now
\ba
RSS_0^{**} - RSS_1^* &=& \bo{Y}^{*T} (\bo{A}_{1n} - \bo{A}_{2n})  \bo{Y}^* \nn
&\approx& \sum_{d=1}^{P} \bo{Y}^{*T} \bo{S}_{d}^*  \bo{Y}^* +  o_p(1) \nn
&=& \sum_{d=1}^{P} \bo{Y}^T \left(\bo{I}_n - \frac{ \bo{1}_n \bo{1}_n^T }{n}\right)^T \bo{S}_{d}^* \left(\bo{I}_n - \frac{ \bo{1}_n \bo{1}_n^T }{n}\right) \bo{Y} +  o_p(1) \nn
&=& \sum_{d=1}^{P} \bo{Y}^T \bo{S}_{d}^*    \bo{Y} +  o_p(1)
\label{rss012}
\ea
Let us define $\bo{\epsilon} =(\epsilon_1, \cdots, \epsilon_n)^T$. Then
\be
RSS_0^{**} - RSS_1^* =  \sum_{d=1}^{P} \bo{m}^T \bo{S}_d^*   \bo{m} 
+ 2\sum_{d=1}^{P} \bo{\epsilon}^T \bo{S}_d^*   \bo{m} 
+ \sum_{d=1}^{P} \bo{\epsilon}^T \bo{S}_{d}^*    \bo{\epsilon} +  o_p(1).
\label{rss018}
\ee
Using Lemma \ref{lemma:Sm},  under $H_0^{**}$, we have
\be
\bo{m}^T \bo{S}_d^*   \bo{m} = o_p(1) \mbox{ for all } d = 1, 2, \cdots, P.
\ee
Moreover, using (\ref{sm0}) it is easy to show that, under $H_0^{**}$, $\bo{\epsilon}^T \bo{S}_d^*   \bo{m} = o_p(1)$ for all $d = 1, 2, \cdots, P$. Hence, from (\ref{rss018}), we get
\be
RSS_0^{**} - RSS_1^* =   \sum_{d=1}^{P} \bo{\epsilon}^T \bo{S}_{d}^*    \bo{\epsilon} +  o_p(1).
\label{rss01}
\ee
Now, for $d=1,2, \cdots , P$, using the definition in (\ref{sp}), we have
\be
\bo{\epsilon}^T \bo{S}_{d}  \bo{\epsilon} = \sum_{j=1}^{k_d} \frac{1}{n_{dj}}(\bo{e}_{dj}^T\bo{\epsilon})^T   \bo{e}_{dj}^T\bo{\epsilon},
\label{ese}
\ee
where $\bo{e}_{dj}$ is a vector with $n_{dj}$ elements one and rest are zero. If $X_{di}=x_{dj}$, then the $i$-th element of $\bo{e}_{dj}$ is one. Note that in this definition, we did not sort $X_d$ according to their observed values. As $E[\epsilon_i^2 ] < \infty$ under condition (C7), using central limit theorem (CLT), we get 
\be
 \frac{1}{\sigma \sqrt{n_{dj}}} \bo{e}_{dj}^T\bo{\epsilon}
 \overset{a}{\equiv} U_{dj} \sim N(0,1).
\ee
As $\bo{e}_{dj}^T\bo{\epsilon}$ and $\bo{e}_{dj'}^T\bo{\epsilon}$ are independent for all $j\neq j' \in \{1, 2, \cdots, k_d\}$, the components of $\bo{U}_d = (U_{d1}, U_{d2}, \cdots, U_{dk_d})^T$ are  i.i.d. standard normal variables. Therefore, from (\ref{ese}), we have $\frac{1}{\sigma^2}\bo{\epsilon}^T \bo{S}_d  \bo{\epsilon}
\overset{a}{\sim} \chi^2(k_d) .$ 
Let us define 
\be
\bar{U} = \sum_{j=1}^{k_d} \sqrt{\frac{n_{dj}}{n}} U_{dj}.
\ee
Then $\bar{U} = c_d^T \bo{U}_d$ a.s., where $c_d = (\sqrt{c_{d1}}, \sqrt{c_{d2}}, \cdots, \sqrt{c_{dk_d}})^T$. Note that $\bar{U}^2 = \frac{1}{n} \bo{\epsilon}^T \bo{\epsilon}$. 
Hence 
\be
\frac{1}{\sigma^2}\bo{\epsilon}^T \bo{S}_{d}^*  \bo{\epsilon}
\overset{a}{\equiv} \bo{U}_d^T \left(\bo{I}_{k_d} - c_d c_d^T \right) \bo{U}_d 
\sim \chi^2(k_d-1) ,
\label{chi}
\ee
because $(\bo{I}_{k_d} - c_d c_d^T)$ is an idempotent matrix of rank $(k_d-1)$.
%
If all predictors are pairwise independent, then from Lemma \ref{lemma:ss0}, we get 
\be
\bo{S}_{d}^* \bo{S}_{d'}^*
= o\left( \frac{\bo{1}_n \bo{1}_n^T}{n} \right) \mbox{ a.s.}
\ee
for $d \neq d' \in \{ 1, 2, \cdots, P\}$. So, $\bo{\epsilon}^T \bo{S}_{d}^*  \bo{\epsilon}$ and $\bo{\epsilon}^T \bo{S}_{d'}^*  \bo{\epsilon}$ are asymptotically independent for all $d \neq d'$  (see p. 84 of \citealp{bapat2012linear}). 
 Therefore, if the predictors are pairwise independent, then under $H_0^{**}$,  equation (\ref{rss01}) gives
\be
\frac{1}{\sigma^2}(RSS_0^{**} - RSS_1^*) 
\overset{a}{\sim} \chi^2\left(\sum_{d=1}^P (k_d-1) \right).
\label{rss1}
\ee
 However, in general, the above distribution  comes out to be a sum of $P$ dependent chi-square variables as
 \be
\frac{1}{\sigma^2}(RSS_0^{**} - RSS_1^*) 
\overset{a}{\equiv}  \sum_{d=1}^P \bo{U}_d^T \left(\bo{I}_{k_d} - c_d c_d^T \right) \bo{U}_d.
\label{rss1g}
\ee
Suppose $d\neq d' \in \{1, 2, \cdots, P\}$, $j=1,2,\cdots, k_d$ and $j'=1,2,\cdots, k_{d'}$, then the correlation between $U_{dj}$ and $U_{d'j'}$ is given by
\ba
 {\rm Corr}(U_{dj}, U_{d'j'}) &=& \lim_{n \rightarrow \infty} {\rm Corr} \left( \frac{1}{\sigma \sqrt{n_{dj}}} \bo{e}_{dj}^T\bo{\epsilon}, \frac{1}{\sigma \sqrt{n_{d'j'}}} \bo{e}_{d'j'}^T\bo{\epsilon} \right) \nn
 &=& \lim_{n \rightarrow \infty} \sum_{i:X_{di}=x_{dj}, X_{d'i}=x_{d'j'}} \frac{1}{\sqrt{n_{dj} n_{d'j'}}} \nn
 &=& \frac{1}{\sqrt{c_{dj} c_{d'j'}}} P( X_d = x_{dj}, X_{d'} = x_{d'j'} ).
 \label{cc}
\ea
The last expression is derived using the same techniques as used in equation (\ref{nu}). Note that combining equations (\ref{rss1g}) and (\ref{cc}), we obtain result (\ref{rss1}) if the predictor variables are independent. 

\noindent
Now, it is easy to show that
 \ba
 \frac{1}{n} RSS_1^* &=& \frac{1}{n}  \bo{Y}^{*T} \bo{A}_{2n} \bo{Y}^* \nn
 &\approx& \frac{1}{n} \bo{Y}^{*T} \left( \bo{S}^{*T} \bo{S}^* -\bo{S}^* - \bo{S}^{*T}  + \bo{I}_n  + o\left(\frac{\bo{1}_n \bo{1}_n^T}{n}\right)\right)\bo{Y}^*  \nn
 &=& \sigma^2 + o_p(1) .
 \label{rss0}
 \ea
 Therefore, using Slutsky's theorem, $\sigma^2$ in (\ref{rss1g}) may be replaced by $\frac{1}{n} RSS_1^*$, and therefore 
 \be
\lambda_n (H_0^{**}) \overset{a}{\equiv} 
\sum_{p=1}^P \bo{U}_p^T \left(\bo{I}_{k_p} - c_p c_p^T \right) \bo{U}_p.
\label{lm}
\ee
Define $\bo{U} = (\bo{U}_1^T, \bo{U}_2^T, \cdots, \bo{U}_P^T)^T$, and $\bo{U}^* = \bo{\Sigma}_1 \bo{U}$, where $\bo{\Sigma}_1$ is defined in Section \ref{sec:GLR}. As $\bo{\Sigma}_1$ is an idempotent matrix, $\lambda_n (H_0^{**})$ in equation (\ref{lm}) is written as
 \be
\lambda_n (H_0^{**}) \overset{a}{\equiv} 
 \bo{U}^T \bo{\Sigma}_1 \bo{U}
=  \bo{U}^{*T} \bo{U}^*,
\label{lm1}
\ee
Now, the covariance matrix of $\bo{U}$ is $\bo{\Sigma}_2$ (defined in Section \ref{sec:GLR}), which is a block matrix with $p$-th diagonal block is an identity matrix of order $k_p$, and the $ij$-th element of the $pp'$-th off-diagonal block is given in (\ref{cc}). So, the covariance matrix of $\bo{U}^*$ is $\bo{\Sigma}_1 \bo{\Sigma}_2 \bo{\Sigma}_1$. Let $\lambda_1, \lambda_2, \cdots, \lambda_s$ are non-zero eigenvalues of $\bo{\Sigma}_1\bo{\Sigma}_2 \bo{\Sigma}_1$, where $s$ is the rank of  $\bo{\Sigma}_1\bo{\Sigma}_2 \bo{\Sigma}_1$. Suppose $V=(V_1, V_2, \cdots, V_s)^T$ is a vector of i.i.d. standard normal variables, and $\Lambda = {\rm diag}(\lambda_1, \lambda_2, \cdots, \lambda_s)$. Then, from (\ref{lm1}), the theorem is proved as
 \be
\lambda_n (H_0^{**}) \overset{a}{\equiv} 
 \bo{V}^T \bo{\Lambda} \bo{V}
=  \sum_{i=1}^s \lambda_i V_i^2.
\label{lm2}
\ee

\end{proof}

\begin{theorem}
	Let us consider the notations and assumptions of Corollary \ref{theorem:chi}. Then, under $H_1$, the asymptotic distribution of the GLR test statistic coincides with $\delta^2 + \sum_{i=1}^s \lambda_i V_i^2$, where  $\delta^2 =  \sum_{r,s=1}^{P} E(m_r^*  m_s^*)$.
	%
	\label{theorem:power}
\end{theorem}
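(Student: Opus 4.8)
The plan is to recycle wholesale the matrix-level reductions already established for the null law in the proof of Corollary \ref{theorem:chi}, because those steps (Lemmas \ref{lemma:s}--\ref{lemma:A} and the resulting approximation $\bo{A}_{1n}-\bo{A}_{2n}\approx\sum_{d=1}^{P}\bo{S}_d^*$) concern only the smoother matrices and are blind to the mean structure of $\bo{Y}$. Hence, exactly as in equation (\ref{rss018}), I would begin from
$RSS_0^{**}-RSS_1^*=\sum_{d=1}^{P}\bo{m}^T\bo{S}_d^*\bo{m}+2\sum_{d=1}^{P}\bo{\epsilon}^T\bo{S}_d^*\bo{m}+\sum_{d=1}^{P}\bo{\epsilon}^T\bo{S}_d^*\bo{\epsilon}+o_p(1)$,
the only change being that under $H_1$ the mean vector is $\bo{m}=\sum_{p=1}^{P+Q}\bo{m}_p=n^{-1/2}\bo{m}^*+\bo{m}_{[Z]}$, where $\bo{m}^*=\sum_{p=1}^{P}\bo{m}_p^*$ collects the local predictor signals and $\bo{m}_{[Z]}=\sum_{q=1}^{Q}\bo{m}_{P+q}$ is the covariate part.

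Next I would strip away the nuisance pieces. The intercept drops because $\bo{S}_d^*\bo{1}_n=\bo{O}_{n,1}$, and every term coupling $\bo{m}_{[Z]}$ to $\bo{S}_d^*$ (for $d\le P$) is asymptotically negligible by condition (C8): this is precisely the content of equation (\ref{sm0}) and Lemma \ref{lemma:Sm}, applied verbatim. What survives is the null-type quadratic form but with the errors displaced by the local signal, and completing the square collapses the three surviving pieces into a single expression,
\be
RSS_0^{**}-RSS_1^*\ \overset{a}{\equiv}\ \sum_{d=1}^{P}\left(\bo{\epsilon}+n^{-1/2}\bo{m}^*\right)^{T}\bo{S}_d^*\left(\bo{\epsilon}+n^{-1/2}\bo{m}^*\right)+o_p(1),
\ee
so that the alternative statistic is structurally identical to the null statistic of Corollary \ref{theorem:chi} with $\bo{\epsilon}$ replaced by the mean-shifted errors $\bo{\epsilon}+n^{-1/2}\bo{m}^*$. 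As in the null proof, Slutsky's theorem then lets me substitute $RSS_1^*/n$ for $\sigma^2$.

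I would read off the limiting law by rerunning the $\bo{U}_d$ construction of the null proof on these shifted errors. Setting $\tilde{\bo{U}}_{dj}=(\sigma\sqrt{n_{dj}})^{-1}\bo{e}_{dj}^T(\bo{\epsilon}+n^{-1/2}\bo{m}^*)$, the central limit theorem and the strong law give $\tilde{\bo{U}}_d=\bo{U}_d+\bo{\nu}_d$, where the $\bo{U}_d$ are the very same jointly Gaussian vectors as in equations (\ref{chi})--(\ref{cc}) (joint covariance $\bo{\Sigma}_2$) and $\bo{\nu}_d$ is a deterministic drift whose $j$-th coordinate converges to $\sigma^{-1}\sqrt{c_{dj}}\,E[\sum_{p}m_p^*(X_p)\mid X_d=x_{dj}]$. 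Exactly as in equations (\ref{lm1})--(\ref{lm2}), the statistic becomes $\tilde{\bo{U}}^T\bo{\Sigma}_1\tilde{\bo{U}}=\|\bo{\Sigma}_1\tilde{\bo{U}}\|^2$, the squared length of a Gaussian vector with mean $\bo{\Sigma}_1\bo{\nu}$ and covariance $\bo{\Sigma}_1\bo{\Sigma}_2\bo{\Sigma}_1$; its random part reproduces $\sum_{i=1}^{s}\lambda_iV_i^2$, while the deterministic drift supplies the non-centrality. A short strong-law computation, noting that $c_d^T\bo{\nu}_d\to\sigma^{-1}E[\sum_p m_p^*]=0$ so that the centering built into $\bo{\Sigma}_1$ is inert on $\bo{\nu}_d$, identifies this drift with $\delta^2=\sum_{r,s=1}^{P}E(m_r^*m_s^*)$. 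Specializing to pairwise-independent predictors turns $\bo{\Sigma}_2$ into the identity and collapses the shifted form into the genuine non-central $\chi^2$ of Corollary \ref{corollary:power2}.

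The delicate point, and the reason this is not a mere location shift of the null law, is the signal--noise cross term $2n^{-1/2}\sum_d\bo{\epsilon}^T\bo{S}_d^*\bo{m}^*$: an order computation shows it is $O_p(1)$ rather than $o_p(1)$, so it cannot be discarded. The whole device is to recognize that it is precisely the cross term produced by completing the square, whereupon it is absorbed into the mean of $\tilde{\bo{U}}_d$ instead of appearing separately. The hard part will therefore be the careful bookkeeping around this absorption, together with verifying via the strong law and condition (C8) that the purely covariate interactions $(\bo{m}^*)^T\bo{S}_d^*\bo{m}_{[Z]}$ and $\bo{\epsilon}^T\bo{S}_d^*\bo{m}_{[Z]}$ and the grand-mean corrections all vanish in probability, and that the surviving deterministic drift aggregates exactly to $\delta^2$.
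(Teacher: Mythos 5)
Your proposal parts ways with the paper at exactly one step, and it is the decisive one. The paper's proof passes from (\ref{rss018}) to (\ref{rss1g20}) by silently dropping the cross term $2\sum_{d=1}^P\bo{\epsilon}^T\bo{S}_d^*\bo{m}$, then evaluates the deterministic piece via the strong-law computations (\ref{msm1})--(\ref{msm210}) to get the additive constant $\delta^2$, and attaches the unchanged null law from Corollary \ref{theorem:chi}. You instead keep the cross term, complete the square, and land on the noncentral Gaussian quadratic form $\|\bo{\Sigma}_1(\bo{U}+\bo{\nu})\|^2$. Your order computation is correct: under $H_1$ the entries of $\bo{S}_d^*\bo{m}^*$ converge a.s.\ to $g_d(j)=E[\sum_p m_p^*(X_p)\mid X_d=x_{dj}]$, which contains $m_d^*(x_{dj})$ and is generically nonzero, so $n^{-1/2}\bo{\epsilon}^T\bo{S}_d^*\bo{m}^*$ is a nondegenerate $O_p(1)$ mean-zero Gaussian, not $o_p(1)$. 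But this means your route does \emph{not} prove the statement as written: $\|\bo{\Sigma}_1(\bo{U}+\bo{\nu})\|^2$ is not distributed as the constant shift $\delta^2+\sum_{i=1}^s\lambda_iV_i^2$, because the linear term $2\bo{\nu}^T\bo{\Sigma}_1\bo{U}$ survives with strictly positive variance (compare: a noncentral $\chi^2_k(\delta^2)$ has variance $2k+4\delta^2$, while $\delta^2+\chi^2_k$ has variance $2k$). Your assertion that the "random part reproduces $\sum_i\lambda_iV_i^2$ while the drift supplies the non-centrality" papers over this. What your derivation actually yields is the law consistent with Corollary \ref{corollary:power2} (a genuine noncentral chi-square under independence), whereas the paper's dropped-cross-term argument yields the constant-shift law stated in Theorem \ref{theorem:power}; the two are different distributions, so you have uncovered a real tension in the paper rather than reproduced its proof.

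There is a second concrete gap, in the deferred "short strong-law computation" identifying the drift with $\delta^2$. Your drift satisfies $\|\bo{\Sigma}_1\bo{\nu}\|^2=\sigma^{-2}\sum_{d=1}^P E\bigl[\bigl(E[\textstyle\sum_p m_p^*(X_p)\mid X_d]\bigr)^2\bigr]$ (the $\bo{\Sigma}_1$ centering is indeed inert, as you note, since $c_d^T\bo{\nu}_d\propto E[\sum_p m_p^*]=0$). This equals $\sigma^{-2}\sum_p E(m_p^{*2})$ under pairwise independence, but it does not equal $\sum_{r,s=1}^P E(m_r^*m_s^*)$ in general: take $P=2$ with $X_1=X_2$ and $m_1^*=m_2^*=f$, $E(f)=0$; then $\sum_{r,s}E(m_r^*m_s^*)=4E(f^2)$ while your drift gives $8E(f^2)/\sigma^2$. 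So the identification you promise only goes through in the independent case — which is Corollary \ref{corollary:power2}, not the theorem. (For what it is worth, the paper's own bookkeeping at (\ref{msm2}) retains only the $\bo{m}_r^T\bo{S}_p\bo{m}_p$ terms and discards $\bo{m}_r^T\bo{S}_p\bo{m}_s$ contributions that are $O_p(n)$ under strong dependence, so its $\delta^2$ formula is open to the same example; and your $1/\sigma$ in $\bo{\nu}$, coming from normalizing by $RSS_1^*/n$, is the correct scaling that the theorem's $\delta^2$ omits.) In short: your approach is genuinely different from the paper's at the cross term, and more careful there, but as a proof of the literal statement it fails at two points — the surviving linear Gaussian term, and the drift-to-$\delta^2$ identification, both of which are resolved only under pairwise independence.
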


\begin{proof}
If $H_0^{**}$ is not true, then from (\ref{rss018}), we get 
\be
RSS_0^{**} - RSS_1^* =  \sum_{d=1}^{P} \bo{m}^T \bo{S}_d^*   \bo{m} 
+ \sum_{d=1}^{P} \bo{\epsilon}^T \bo{S}_{d}^*    \bo{\epsilon} +  o_p(1).
\label{rss1g20}
\ee
Now, the result in equation (\ref{rss1g}) reduces to
 \be
\frac{1}{\sigma^2}\sum_{d=1}^{P} \bo{\epsilon}^T \bo{S}_{d}^*    \bo{\epsilon}
=  \sum_{d=1}^P \bo{U}_d^T \left(\bo{I}_{k_d} - c_d c_d^T \right) \bo{U}_d.
\label{rss1g45}
\ee
%
From the proof of Lemma \ref{lemma:Sm}, we have
 \be
  \bo{m}^T \bo{S}_p \bo{m} = \left(\sum_{d=1}^{P} \bo{m}_p^T \right) \bo{S}_p \left(\sum_{d=1}^{P} \bo{m}_p \right) + o_p(1).
  \label{msm1}
 \ee
Using $\bo{S}_p \bo{m}_p = \bo{m}_p$, we get $\bo{m}_r^T \bo{S}_p \bo{m}_p = \bo{m}_r^T \bo{m}_p$  for all $p, r =1,2, \cdots , P$. Hence
\be
\frac{1}{n}\bo{m}_r^T \bo{S}_p \bo{m}_p \overset{\mbox{a.s.}}{\longrightarrow}  E(\bo{m}_r^T \bo{m}_p).
\ee
Suppose $u_{ij}$ is the $(i,j)$-th element of $\bo{m}_r^T \bo{S}_p \bo{m}_s$ for some $p, r , s=1,2, \cdots , P$. Then
 \be
 u_{ij} = \sum_{l=1}^{k_p} \sum_{(i,j):X_{pi}=X_{pj}=x_{pl}} \frac{m_r (X_{ri}) m_s (X_{sj})}{n_{pl}}.
 \label{uij}
 \ee
 Note that
 \be
  \sum_{(i,j):X_{pi}=X_{pj}=x_{pl}} \frac{m_r (X_{ri}) m_s (X_{sj})}{n_{pl}} \overset{\mbox{a.s.}}{\longrightarrow}  E(m_r m_s | X_p = x_{pl}).
 \ee
 So, using condition (C1), we get from equation (\ref{uij})
 \be
  u_{ij} \overset{\mbox{a.s.}}{\longrightarrow}  E(m_r m_s) \mbox{ and } \frac{1}{n} u_{ij} \overset{\mbox{a.s.}}{\longrightarrow}  0.
 \ee
 Hence, from equation (\ref{msm1}), we get
  \be
  \frac{1}{n} \bo{m}^T \bo{S}_p \bo{m} = \sum_{r}^{P} E(m_r m_p) + o_p(1).
 \ee
 Therefore
  \be
  \frac{1}{n} \sum_{p=1}^{P}  \bo{m}^T \bo{S}_p \bo{m} = \sum_{r,s=1}^{P} E(m_r  m_s) + o_p(1).
  \label{msm2}
 \ee
 As $E(m_p)=0$ for all $p=1,2,\cdots,P$, we get
  \be
  \frac{1}{n} \sum_{p=1}^{P}  \bo{m}^T \bo{S}_p^* \bo{m} = \sum_{r,s=1}^{P} E(m_r  m_s) + o_p(1).
  \label{msm210}
 \ee
 Combining (\ref{rss0}), (\ref{rss1g20}), (\ref{rss1g45}) and (\ref{msm210}) the theorem is proved.
\end{proof}

\begin{proof}[Corollary \ref{theorem:chi2}]
The residual sum of squares under $H_0^*$ can be written as
\ba
RSS_0^* &=& \left( \Y^* - \X^* \wtheta - \bo{W}_{[Z]} \left(\Y^* - \X^* \wtheta  \right) \right)^T \left( \Y^* - \X^* \wtheta - \bo{W}_{[Z]} \left(\Y^* - \X^* \wtheta  \right) \right) \nn
&=& \left( \Y^* - \left(\bo{I}_n - \bo{W}_{[Z]}\right) \X^* \wtheta - \bo{W}_{[Z]} \Y^*  \right)^T  \left( \Y^* - \left(\bo{I}_n - \bo{W}_{[Z]}\right) \X^* \wtheta - \bo{W}_{[Z]} \Y^*  \right)\nn
&=& \Y^{*T} \left(\bo{I}_n - \bo{A}_n - \bo{W}_{[Z]})^T (\bo{I}_n - \bo{A}_n - \bo{W}_{[Z]} \right) \Y^*,
\label{rss*}
\ea
where
\be
\bo{A}_n = \left(\bo{I}_n - \bo{W}_{[Z]}\right) \X^* \left(\X^{*T} \left(\bo{I}_n - \bo{W}_{[Z]}\right) \X^*\right)^{-1} \X^{*T} \left(\bo{I}_n - \bo{W}_{[Z]}\right).
\label{A}
\ee
 Using Lemma \ref{lemma:W} it can be shown that
 \be
 \bo{W}_{[Z]} = \sum_{q=1}^Q \bo{W}_{P+q} \approx  \sum_{q=1}^Q \bo{S}_{P+q}^* + o\left(\frac{\bo{1}_n \bo{1}_n^T}{n}\right) \mbox{ a.s.}
 \label{wz}
 \ee
 Hence
  \be
 \bo{I}_n - \bo{W}_{[Z]} \approx \bo{I}_n  + o\left(\frac{\bo{1}_n \bo{1}_n^T}{n}\right) \mbox{ a.s.}.
 \ee
 Therefore, equation (\ref{A}) reduces to
 \be
\bo{A}_n =  \X^* \left(\X^{*T}  \X^*\right)^{-1} \X^{*T} + o\left(\frac{\bo{1}_n \bo{1}_n^T}{n}\right) \mbox{ a.s.}
\label{A1}
\ee
As $\X^* \left(\X^{*T}  \X^*\right)^{-1} \X^{*T}$ is an idempotent matrix,  using (\ref{wz}) and (\ref{A1}) we get from equation (\ref{rss*}) 
\begin{multline}
RSS_0^* = \bo{Y}^{*T}  \Big(\bo{I}_n + \bo{S}_{[Z]}^{*T} \bo{S}_{[Z]}^* - \bo{S}_{[Z]}^* - \bo{S}_{[Z]}^{*T} - \X^* \left(\X^{*T}  \X^*\right)^{-1} \X^{*T} \\
+ \bo{S}_{[Z]}^{*T} \X^* \left(\X^{*T}  \X^*\right)^{-1} \X^{*T} 
+  \X^* \left(\X^{*T}  \X^*\right)^{-1} \X^{*T} \bo{S}_{[Z]}^*\Big) \bo{Y}^* + o_p(1).
\label{rss122*}
\end{multline}
Suppose $\btheta_0$ is the true value of $\btheta$ under $H_0^*$. So, under $H_0^*$, the model can be written as
\be
\bo{Y} = \alpha \bo{1}_n + \X^* \btheta_0 + \bo{m}_{[Z]} + \bo{\epsilon},
\ee
where $\bo{m}_{[Z]}=\sum_{q=1}^Q \bo{m}_{P+q}(\cdot)$.  \cite{opsomer1999root} have shown that $\wtheta$ is a consistent estimator of $\btheta_0$. Hence, under $H_0^*$, from equation (\ref{wz}) we get
\be
 \X^* \left(\X^{*T} \X^*\right)^{-1} \X^{*T} \bo{Y}^* 
= \X^* \wtheta \ \overset{\mbox{a.s.}}{=} \ \X^* \btheta_0 = \sum_{p=1}^P \bo{m}_p(\cdot) 
= \bo{m}_{[X]}.
\label{xt}
\ee
Using a similar technique of equation (\ref{sm0}) it can be shown that
\be
 \bo{S}_q^* \bo{m}_p = \bo{O}_{n,1} \mbox{ a.s. for all }  p =1,2, \cdots , P \mbox{ and } q  = 1,2, \cdots , Q.
 \label{sqp}
 \ee
 So, combining (\ref{xt}) and (\ref{sqp}) we get
 \ba
 \bo{Y}^{*T} \bo{S}_{[Z]}^{*T} \X^* \left(\X^{*T}  \X^*\right)^{-1} \X^{*T} \bo{Y}^* &=& \bo{Y}^{*T} \bo{S}_{[Z]}^* \bo{m}_{[X]} + o_p(1) \nn
 &=& \bo{m}_{[Z]}^T \bo{m}_{[X]} + o_p(1) \nn
 &=& o_p(1).
 \label{szx}
 \ea
 Hence, equation (\ref{rss122*}) simplifies to
 \be
RSS_0^* = \bo{Y}^{*T}  \left(\bo{I}_n + \bo{S}_{[Z]}^{*T} \bo{S}_{[Z]}^* - \bo{S}_{[Z]}^* - \bo{S}_{[Z]}^{*T} - \X^* \left(\X^{*T}  \X^*\right)^{-1} \X^{*T} \right) \bo{Y}^* + o_p(1).
\label{rss12*}
\ee
Now, proceeding the same way as the proof of Corollary \ref{theorem:chi},  we  get
\begin{multline}
RSS_0^* - RSS_1^* = \bo{Y}^{*T} \Big(2\bo{S}_{[X]}^* - \bo{S}_{[X]}^{*T} \bo{S}_{[X]}^* - \bo{S}_{[Z]}^{*T} \bo{S}_{[X]}^* \\
- \bo{S}_{[X]}^{*T} \bo{S}_{[Z]}^* - \X^* \left(\X^{*T}  \X^*\right)^{-1} \X^{*T} \Big) \bo{Y}^* + o_p(1).
\end{multline}
Using equations (\ref{smp}) and (\ref{szx}) we get
\be
RSS_0^* - RSS_1^* = \bo{Y}^{*T} \left(\bo{S}_{[X]}^* -  \X^* \left(\X^{*T}  \X^*\right)^{-1} \X^{*T}  \right) \bo{Y}^* + o_p(1).
\label{rsss}
\ee
Combining  equations (\ref{rss012}) and (\ref{rss018}) we get
\be
\bo{Y}^{*T}  \bo{S}_{[X]}^* \bo{Y}^* 
= \bo{m}_{[X]}^T  \bo{S}_{[X]}^*  \bo{m}_{[X]} + 2\bo{m}_{[X]}^T  \bo{S}_{[X]}^* \bo{\epsilon} + \bo{\epsilon}^T \bo{S}_{[X]}^* \bo{\epsilon} +o_p(1).
\label{mm}
\ee
Using CLT it is easy to establish that $\bo{m}_{[X]}^T  \bo{S}_{[X]}^* \bo{\epsilon}=o_p(1)$. From equation (\ref{msm2}), we have $\bo{m}_{[X]}^T  \bo{S}_{[X]}^*  \bo{m}_{[X]} = \bo{m}_{[X]}^T    \bo{m}_{[X]} + o_p(1)$. Then, equation (\ref{mm}) turns out to be
\be
\bo{Y}^{*T}  \bo{S}_{[X]}^* \bo{Y}^* 
= \bo{m}_{[X]}^T  \bo{m}_{[X]} + \bo{\epsilon}^T \bo{S}_{[X]}^* \bo{\epsilon} +o_p(1).
\label{mm1}
\ee
Note that
\begin{multline}
\bo{Y}^T \X^* \left(\X^{*T}  \X^*\right)^{-1} \X^{*T}  \bo{Y}
%
%
 %
 = \bo{m}_{[X]}^T \bo{m}_{[X]} + 2 \bo{m}_{[X]}^T \bo{m}_{[Z]} + 2 \bo{m}_{[X]}^T \bo{\epsilon}\\
 + \bo{m}_{[Z]}^T  \X^* \left(\X^{*T}  \X^*\right)^{-1} \X^{*T} \bo{m}_{[Z]} + 2 \bo{m}_{[Z]}^T  \X^* \left(\X^{*T}  \X^*\right)^{-1} \X^{*T} \bo{\epsilon}
 \\
 + \bo{\epsilon}^T  \X^* \left(\X^{*T}  \X^*\right)^{-1} \X^{*T} \bo{\epsilon}. 
 \label{xxy}
  \end{multline}
  Using condition (C8) and equation (\ref{szx}) it can be shown that the second and the forth terms in equation (\ref{xxy}) tend to zero in probability; and by CLT the third and the fifth terms is asymptotically zero.
  Therefore
  \be
\bo{Y}^T \X^* \left(\X^{*T}  \X^*\right)^{-1} \X^{*T}  \bo{Y}
  = \bo{m}_{[X]}^T \bo{m}_{[X]}  
 + \bo{\epsilon}^T  \X^* \left(\X^{*T}  \X^*\right)^{-1} \X^{*T} \bo{\epsilon} + o_p(1).
 \ee
As $\frac{1}{n}\bo{m}_{[X]}^T \bo{1}_n \overset{\mbox{a.s.}}{=}  E(\bo{m}_{[X]})=  0$, we get from the above equation
\ba
\bo{Y}^{*T} \X^* \left(\X^{*T}  \X^*\right)^{-1} \X^{*T}  \bo{Y}^*
&=& \bo{Y}^T \X^* \left(\X^{*T}  \X^*\right)^{-1} \X^{*T}  \bo{Y} \nn
&& \ \ \ - \frac{\bo{1}_n^T }{n}\bo{Y}^T \X^* \left(\X^{*T}  \X^*\right)^{-1} \X^{*T}  \bo{Y} \frac{ \bo{1}_n}{n} \nn
&=&  \bo{Y}^T \X^* \left(\X^{*T}  \X^*\right)^{-1} \X^{*T}  \bo{Y}
+ o_p(1)\nn
&=& \bo{m}_{[X]}^T \bo{m}_{[X]}  
 + \bo{\epsilon}^T  \X^* \left(\X^{*T}  \X^*\right)^{-1} \X^{*T} \bo{\epsilon} + o_p(1)\nn
\label{yxy}
\ea
Combining (\ref{mm1}) and (\ref{yxy}), we get from (\ref{rsss})
\ba
RSS_0^* - RSS_1^* &=& \bo{\epsilon}^T \bo{S}_{[X]}^* \bo{\epsilon}
- \bo{\epsilon}^T  \X^* \left(\X^{*T}  \X^*\right)^{-1} \X^{*T} \bo{\epsilon} 
+ o_p(1) \nn
&\approx& \sum_{p=1}^P \bo{\epsilon}^T \left(\bo{S}_p 
-   \bo{R}_{p,1} \left(\bo{R}_{p,1}^T  \bo{R}_{p,1}\right)^{-1} \bo{R}_{p,1}^T
\right) \bo{\epsilon} 
+ o_p(1),
\label{rss**}
\ea
where $\bo{R}_{p,1} = {}_{0}^{r_p}\X_{(p)}$ and ${}_{a}^{b}\X_{(p)}$ is defined in (\ref{xp}). 
It can be shown that 
\be
\bo{S}_p =  \bo{R}_{p,2} \left(\bo{R}_{p,2}^T  \bo{R}_{p,2} \right)^{-1} \bo{R}_{p,2}^T  ,
\label{hat_mat}
\ee
where $\bo{R}_{p,2} = {}_{0}^{k_p-1}\X_{(p)}$. 
So $\bo{S}_p$ may be regarded as the hat matrix in context of the classical regression in fitting of a $k_p$ degree polynomial. Equation (\ref{hat_mat}) shows that columns of the matrix $\bo{S}_p$ form an orthogonal basis for the column space of $\bo{R}_{p,2}$. Similarly, columns of $\bo{R}_{p,1} \left(\bo{R}_{p,1}^T  \bo{R}_{p,1}\right)^{-1} \bo{R}_{p,1}^T$ form an orthogonal basis for the column space of 
$\bo{R}_{p,1}$. Using some matrix calculations it can be shown that
\be
\bo{S}_p -   \bo{R}_{p,1} \left(\bo{R}_{p,1}^T  \bo{R}_{p,1}\right)^{-1} \bo{R}_{p,1}^T
= \bo{R}_p \left(\bo{R}_p^T  \bo{R}_p\right)^{-1} \bo{R}_p^T, \mbox{ a.s.,}
\ee
where $\bo{R}_p = {}_{r_p+1}^{k_p-1}\X_{(p)}$. Now $\bo{R}_p \left(\bo{R}_p^T  \bo{R}_p\right)^{-1} \bo{R}_p^T$ is an idempotent matrix with rank $(k_p-r_p-1)$. Hence 
\be
\frac{1}{\sigma^2} \bo{\epsilon}^T\bo{R}_p \left(\bo{R}_p^T  \bo{R}_p\right)^{-1} \bo{R}_p^T \bo{\epsilon}  
\overset{a}{\equiv}  \bo{U}_p^T \bo{U}_p 
\sim \chi^2(k_d-r_p-1),
\ee
where
\be
\bo{U}_p \overset{a}{\equiv} \frac{1}{\sigma}  \left(\bo{R}_p^T  \bo{R}_p\right)^{-1/2} \bo{R}_p^T \bo{\epsilon} .
\ee
So $(k_d-r_p-1)$ components of $\bo{U}_p$ are i.i.d. standard normal variables. 
From equation (\ref{rss**}), we get
 \be
\frac{1}{\sigma^2}(RSS_0^* - RSS_1^*) 
\overset{a}{\equiv}  \sum_{p=1}^P \bo{U}_p^T \bo{U}_p,
\label{rss1g*}
\ee
where 
\ba
{\rm Cov}(\bo{U}_p, \bo{U}_{p'}) &=& \lim_{n \rightarrow \infty}
\frac{1}{\sigma^2}  {\rm Cov} \left(\left(\bo{R}_p^T  \bo{R}_p\right)^{-1/2} \bo{R}_p^T \bo{\epsilon}, \left(\bo{R}_{p'}^T  \bo{R}_{p'}\right)^{-1/2} \bo{R}_{p'}^T \bo{\epsilon}\right) \nn
&=&
\lim_{n \rightarrow \infty}  \left(\bo{R}_p^T  \bo{R}_p\right)^{-1/2} \bo{R}_p^T  \bo{R}_{p'} \left(\bo{R}_{p'}^T  \bo{R}_{p'}\right)^{-1/2}  .
\ea
Rest of the proof is done using the same technique as the proof of Corollary \ref{theorem:chi}.
\end{proof}

\begin{proof}[Theorem \ref{theorem:chi3}]
In this case, we can show that
 \begin{multline}
RSS_0 - RSS_1 
\approx \sum_{p=1}^{P_1} \bo{\epsilon}^T \left(\bo{S}_p 
-   \bo{R}_{p,1} \left(\bo{R}_{p,1}^T  \bo{R}_{p,1}\right)^{-1} \bo{R}_{p,1}^T \right) \bo{\epsilon} 
+ \sum_{p=P_1+1}^{P} \bo{\epsilon}^T \bo{S}_p^*  \bo{\epsilon}
+ o_p(1),
\end{multline}
where $\bo{R}_{p,1} = {}_{0}^{r_p}\X_{(p)}$.
Hence, the proof of the theorem follows from  Corollaries \ref{theorem:chi2} and \ref{theorem:chi}. 
\end{proof}

\begin{proof}[Theorem \ref{theorem:power2}]
Combining  steps of  Theorems \ref{theorem:chi3} and \ref{theorem:power}, we get the proof of the current theorem. 
\end{proof}

\end{document}